\newtheorem{thm}{Theorem}
\newtheorem{defi}{Definition}
\newtheorem{prop}{Proposition}
\newtheorem{rmk}{Remark}
\begin{document}
%
\title{Budget Recycling Differential Privacy\thanks{This paper has been accepted for 45th IEEE Symposium on
Security and Privacy}}



%
\author{\IEEEauthorblockN{Bo Jiang\IEEEauthorrefmark{1},
Jian Du\IEEEauthorrefmark{1},
Sagar Sharma\IEEEauthorrefmark{1} and
    Qiang Yan\IEEEauthorrefmark{1}}
\IEEEauthorblockA{\IEEEauthorrefmark{1}
TikTok Inc\\ Email: \{bojiang, jian.du, sagar.sharma, yanqiang.mr\}@tiktok.com}}


\maketitle

\begin{abstract}
Differential Privacy (DP) mechanisms usually {force} reduction in data utility by producing ``out-of-bound'' noisy results for a tight privacy budget. We introduce the Budget Recycling Differential Privacy (BR-DP) framework, designed to provide soft-bounded noisy outputs for a broad range of existing DP mechanisms. By ``soft-bounded," we refer to the mechanism's ability to release most outputs within a predefined error boundary, thereby improving utility and maintaining privacy simultaneously.
The core of BR-DP consists of two components: a DP kernel responsible for generating a noisy answer per iteration, and a recycler that probabilistically recycles/regenerates or releases the noisy answer. We delve into the privacy accounting of BR-DP, culminating in the development of a budgeting principle that optimally sub-allocates the available budget between the DP kernel and the recycler.
Furthermore, we introduce algorithms for tight BR-DP accounting in composition scenarios, and our findings indicate that BR-DP achieves reduced privacy leakage post-composition compared to DP.
Additionally, we explore the concept of privacy amplification via subsampling within the BR-DP framework and propose optimal sampling rates for BR-DP across various queries. We experiment with real data, and the results demonstrate BR-DP's effectiveness in lifting the utility-privacy tradeoff provided by DP mechanisms.
\end{abstract}


%
\IEEEpeerreviewmaketitle

\section{Introduction}
In today's data-driven landscape, service providers collecting sensitive user data are compelled to navigate the delicate balance between privacy protection and data utility, particularly in light of stringent privacy regulations such as the General Data Protection Regulation (GDPR)\cite{GDPR2016a}. Differential Privacy (DP)\cite{Dwork2006} has emerged as the \textit{de facto} standard in this context, adept at safeguarding individual privacy within datasets while maintaining the integrity of group-level statistics and insights. Also, DP has led to multiple real-world applications such as Apple's data collection and machine learning \cite{apple_learning_privacy}, Google's RAPPOR \cite{Rappor}, 2020 U.S. Census  \cite{inproceedingsCensus}, etc.
DP stands out for its ability to provide robust privacy guarantees against even the most knowledgeable adversaries, who may possess extensive information about individuals in the dataset except for the targeted individual. This protection is rooted in a well-defined privacy mechanism that effectively masks the individual's contribution to aggregated outcomes.
Central to DP's mechanism is the concept of a `privacy budget', which serves to measure the strictness of the privacy guarantee provided and can also be understood as the maximum possible privacy leakage caused by a DP mechanism. A small budget implies a stronger privacy guarantee with minimal leakage, and hence is typically preferable. 

Differential privacy is typically achieved by introducing randomness into aggregated data results through various mechanisms. Notably, the Laplace and Gaussian mechanisms add random noise from their respective distributions to aggregation results such as sum, average, and count\cite{DP_mec}. Other mechanisms like the exponential and randomized response \cite{wang2016using} select an item from a candidate list based on probabilistic criteria. Discrete Gaussian\cite{10.5555/3495724.3497039} and Binomial mechanisms \cite{10.5555/3327757.3327856} extend the DP applications for discrete/categorical data aggregation.
While this randomness provides deniability for each individual to guarantee privacy protection, it inevitably leads to a reduction in data utility, which is typically quantified by measuring the deviation between the true aggregate and the noise-infused output.  Moreover, in real-world settings, these mechanisms might release out-of-bound results that are unacceptable by analysts under small privacy budgets. These ``error bounds" are critically essential in numerous real-world applications. For instance, in scenarios requiring non-negative aggregate results, outputs less than zero are inherently invalid\cite{holohan2018bounded}. When measuring aggregates involving a small user base, the acceptable range of the noisy aggregates has to be quite narrow to provide meaningful user insights\cite{10.1145/1989323.1989348}. Similarly, in A/B testing, results that erroneously reverse the expected ordering are considered unacceptable\cite{10.1145/3487553.3524216}. 


In the realm of DP research, various attempts have been proposed to tackle the challenge of bounding DP error. Among these, truncated or bounded DP mechanisms stand out as significant innovations. The truncated mechanism operates by clipping the probability density function of the output to a valid range, reassigning out-of-bound values to their nearest acceptable limits\cite{gupte2010universally}. Truncated mechanisms trivially satisfy DP due to the post-processing manner, however, they lead to a high density of noisy responses on the error boundaries. Conversely, the bounded mechanism combines truncation with normalization\cite{Geng2018TruncatedLM, 2022-2-7-0193, holohan2018bounded}, and therefore, providing a monotonically decaying probability density function (PDF). However, bounded mechanisms introduced in literature still possess several unresolved issues:
Firstly, most of these mechanisms are constrained to a fixed range of output support, which falls short in addressing the broader ``bounded error'' challenge that requires the noisy magnitude to be upper bounded. Moreover, the introduction of bias in the released data compromises stability and reliability. Secondly, the necessity to bound noise magnitude inevitably leads to a high failure probability\cite{Geng2018TruncatedLM}. This is because the discrepancy in noise support between neighboring datasets, which scales linearly with query sensitivity, results in unbounded privacy leakage. As a result, for small $\delta$,  bounded noisy magnitude mechanisms are infeasible. Finally, accurately measuring privacy leakage after multiple queries (termed sequential composition) is particularly challenging with these mechanisms due to their ``irregular'' and typically asymmetric geometry of noise distributions. It is worth noting that 
composability is a fundamental attribute of DP mechanisms, pivotal for determining their validity and compliance with privacy regulations. A series of studies have focused on refined composition accountant, employing methodologies like moment accounting\cite{10.1145/2976749.2978318}, which has been expanded into Renyi DP\cite{8049725} and zero Concentrated DP\cite{bun2016concentrated}, or those based on the Privacy Loss Distribution (PLD) to numerically ascertain precise privacy leakage\cite{Balle2018ImprovingTG,Wang2019,zhu2019poission,mironov2019r}. However, applying these advanced techniques to bounded mechanisms proves to be infeasible, presenting a notable gap in current DP research.

It is noteworthy that while bounded mechanisms offer stringent utility guarantees, these may not always align with practical necessities. {E.g., developers are usually satisfied with aggregated results that falling in given error bounds with statistical guarantee. }We investigate a more pragmatic yet fundamental question in tackling the error-bound challenge: Can we redesign DP mechanisms to enhance the likelihood of noisy outputs falling within an acceptable error range? Considering the existing body of research on DP composition accounting, furthermore, a pivotal question arises: Can we accurately quantify the compositional privacy leakage of our proposed mechanism, utilizing established tools in this domain? To address these critical inquiries, this paper introduces the Budget Recycling DP (BR-DP) framework. This framework innovatively integrates conventional DP mechanisms, such as the Laplace or Gaussian mechanism, with a budget recycling phase. Conceptually, the framework optimally splits a portion of the total available privacy budget, to generate a noisy version \( y_n \) of the target query. Concurrently, the remaining budget is allocated to the recycler, which conditionally releases the result based on the acceptability of the noise magnitude. In scenarios where the noise magnitude exceeds the tolerable range, the recycler, governed by a probabilistic rule, either redirects the process to regenerate another noisy result or opts to release the current result despite its unacceptability. This iterative cycle continues until an acceptable noisy version of the query result is produced.

Our main contributions of this paper are as follows:
\begin{enumerate}
    \item We introduce BR-DP framework, which integrates a DP kernel mechanism and a recycler. This framework achieves DP while providing enhanced utility, as measured by the likelihood of generating acceptable outputs within error bounds. BR-DP is adaptable to various specific DP mechanisms, including Gaussian and Laplacian.
    
    \item Our analysis offers a comprehensive examination of privacy leakage. Based on this, we develop optimal budget allocation principles. These principles balance utility maximization and DP compliance for any given privacy budget, distributing it effectively between the DP kernel and the recycler.
    
    \item We propose a tight BR-DP composition theorem and an accounting algorithm characterized by linear complexity. Comparative results indicate that BR-DP achieves lower privacy leakage than the conventional DP mechanism post-composition under the same privacy budget per query.
    
    \item By integrating BR-DP with privacy amplification via subsampling, we formulate an optimal sampling rate determination algorithm for various query types. This enhances utility, particularly under constrained privacy budgets.
    
    \item Our evaluation, using real-world datasets, demonstrates that BR-DP significantly improves data utility across diverse query types with less privacy leakage accounted for after composed usages.

\end{enumerate}

\section{Preliminaries of Differential Privacy}
In this section, we introduce DP along with some related properties and some related state-of-the-art techniques. Then we introduce the utility-privacy tradeoff of DP where the utility is defined with an accuracy guarantee.

\subsection{DP and Accounting Techniques}
DP\cite{Dwork2006,Dwork2008,Dwork20061} provides a mathematical guarantee that the presence or absence of an individual's data in a dataset does not significantly affect the outcome of queries. It is formally defined as:

\begin{defi}
A randomized algorithm $\mathcal{M}$ that takes as input a dataset consisting of individuals is $(\epsilon,\delta)$-differentially private (DP) if for any pair of datasets $X'$, $X$ that differ in the record of a single individual, and any event $S$:
\begin{equation}\label{DP:def}
    \operatorname{Pr}(\mathcal{M}(X) \in S) \leq e^\epsilon \cdot \operatorname{Pr}(\mathcal{M}(X') \in S)+\delta.
\end{equation}
When $\delta$ = 0, the guarantee is simply called (pure) $\epsilon$-DP.
\end{defi}
In \eqref{DP:def}$, \epsilon$ represents the privacy budget; $\delta\in[0,1]$ denotes a failure probability that $\mathcal{M}$ failed to provide $\epsilon$-DP protection. Small $\epsilon$ and $\delta$ imply strict privacy guarantees.

In an interactive Differential Privacy (DP) framework, a trusted server is presumed to store sensitive datasets. A data analyst, keen on a specific aggregated function over dataset $X$ held by the server, dispatches a query $Q(\cdot)$ to it. To address privacy considerations, the server utilizes a differential privacy mechanism upon computing the raw response, ensuring the confidentiality of $X$. Consequently, depending on the query sensitivity $\Delta_f$, noise $N$ is introduced proportionally by the DP mechanism to the raw answer, resulting in the release of $Y_{n}=Y+N$\footnote{We employ capitalized letters to denote random variables, and lowercase mathematical symbols to represent their corresponding realizations.}. 

Depending on the distribution of $N$, DP mechanisms can be classified into many categories, and the most classic mechanisms achieving DP are Gaussian and Laplacian \cite{10.1561/0400000042, DP_mec}. They each possess advantages in nature, e.g. Laplacian mechanism can achieve pure $\epsilon$-DP without introducing $\delta$ and is geometrically more concentrated around the mean. The Gaussian mechanism is more natural in composition (introduced later) and achieves small leakage after multiple uses, also the standard deviation of the
noise is proportional to the query’s $l2$ sensitivity, which is no
larger and often much smaller than $l1$.
These noisy distributions can be represented by some parameters, such as scale $b$ for Laplacian distribution, and standard deviation $\sigma$ for Gaussian distribution (DP mechanisms are typically 0-mean). A fundamental question for DP is how to optimize these parameters representing a noisy distribution to achieve $(\epsilon,\delta)$-DP. This topic is usually referred to as DP accounting. In the original work of \cite{DP_mec}, closed-form values of $b$ and $\sigma$ are derived to achieve $(\epsilon,\delta)$-DP. However, these relationships stem from the loose bound of the DP leakage, and the actual leakage is much smaller. Therefore these initial results are no longer favorable nowadays. In \cite{Balle2018ImprovingTG}, Balle et al discovered several limitations on the Gaussian mechanism and proposed an analytic Gaussian mechanism that can be used for tight accounting. \cite{10.5555/3495724.3497039} proposed accounting algorithm for discrete Gaussian mechanism. 

Another important property of DP is its composability, which can be roughly defined as how to analyze the cumulative privacy guarantee in terms of $\epsilon$ and $\delta$ when multiple queries are performed on a dataset. Naive composition theorem suggests that $\epsilon$ and $\delta$ both increase linearly with the number of queries \cite{Dwork2006} (or sublinearly with advanced composition \cite{kairouz2015composition}). The large budget consumption makes the DP mechanism hard to implement in the real-world, especially those machine learning applications, which require multiple iterations of gradient updates. On the other hand, as DP algorithms have become more intricate, the task of precisely quantifying cumulative privacy loss has similarly grown in complexity. A significant advancement in this area was the introduction of the moments accountant method by Abadi et al \cite{10.1145/2976749.2978318}. This method greatly enhanced the accuracy of privacy loss estimates in compositions involving subsampled Gaussian mechanisms, which are prevalent in DP stochastic gradient descent (DP-SGD). Subsequent enhancements have been made, notably in the realm of Rényi Differential Privacy (RDP), and Concentrated differential privacy (CDP) as pioneered by Mironov\cite{8049725} and Bun\cite{bun2016concentrated}. Further refinements include more stringent RDP bounds for subsampled mechanisms, contributed by researchers such as \cite{Balle2018ImprovingTG,Wang2019,zhu2019poission,mironov2019r}. While RDP provides a rigorous framework for analyzing compositions involving Gaussian mechanisms, its applicability to other mechanisms remains challenging. Furthermore, converting RDP findings to the more widely recognized $(\epsilon,\delta)$ privacy guarantees often entails a loss of precision. Another composition accounting direction is based on privacy loss distribution (PLD), introduced by Sommer et al.\cite{cryptoeprint:2018/820}. Then Koskela et al \cite{Koskela2019ComputingTD, Koskela2020TightAD} proposes to use FFT-based algorithms that take the PLD as a time series signal, and numerically calculate the cumulative leakage in the frequency domain. The latest work along this direction is in \cite{Zhu2021OptimalAO}, where Yuqing {et} el. proposes an analytic Fourier accounting algorithm deploying the characteristic function, which overcomes a shortage of the FFT-based algorithm that the worst-case PLD may involve exhaustively searching for all neighboring datasets.


\subsection{Utility-privacy Tradeoff}

DP mechanisms typically concentrate on minimizing the distance between the noisy release and the true answer. Accordingly, the utility definitions employed in DP literature are generally based on the absolute distance measure.

\textbf{Absolute Distance:}  Given \( Y_n = Y + N \), the absolute error can be regarded as dependent solely on the magnitude of the \( l \)-norm noise:  
\begin{equation*}
    ||N||_l,
\end{equation*}
where \( l = 1 \) corresponds to the Manhattan distance, \( l = 2 \) to the Euclidean distance, and \( l = \infty \) typically refers to the Chebyshev distance. In the following, we directly use $||\cdot||$ for simplicity. 



Depending on the application of DP, to ensure the quality of service, typically, a noisy output $y_n$ that deviates too much from $y$ is not \textit{acceptable}. Therefore, DP service providers usually require a certain utility guarantee. i.e., a statistically bounded error. Mathematically:
\begin{equation}\label{eq:utility}
    \text{Pr}(||N||\le{\theta})\ge{\rho},
\end{equation}
where $\theta$ represents the upper error boundary,  and $\rho$ stands for the confidence. The probability defined in \eqref{eq:utility} is referred to as \textit{acceptance} rate.

\begin{figure*}[t]
    \centering
    \includegraphics[width=0.7\textwidth]{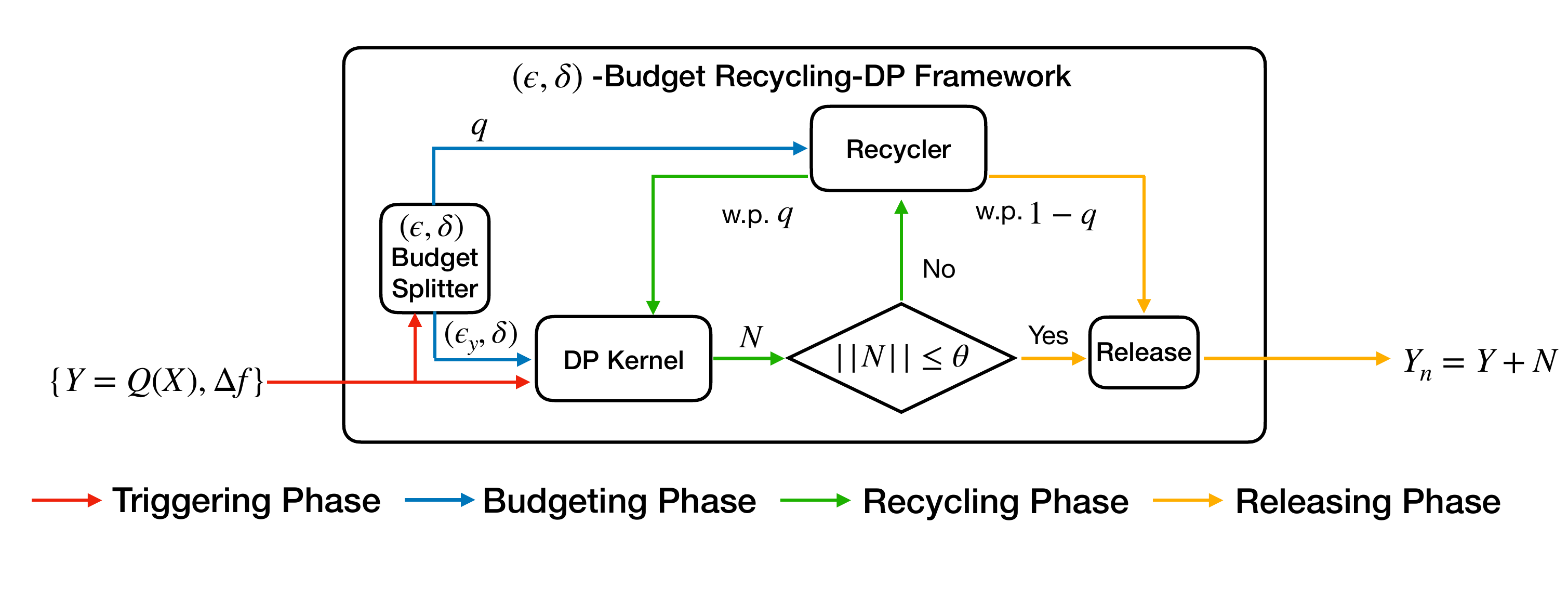}
    \caption{Illustration of the Budget Recycling Differential Privacy (BR-DP) framework}
    \label{fig:BR-DP_model}
\end{figure*}

There are many attempts trying to improve the \textit{acceptance} rate provided by DP, the first direction is adopting bounded noisy distribution \cite{Geng2018TruncatedLM, 2022-2-7-0193, holohan2018bounded}. This technique is usually achieved by re-sampling or truncating. However, most of the current solutions assume the support of the output is bounded and fixed, and hence do not directly measure the noise magnitude. In \cite{Geng2018TruncatedLM}, Geng et al. proposed a bounded Laplacian mechanism in which the noise magnitude is bounded. However, the proposed mechanism requires a large failure probability to handle the output support discrepancy caused by introducing sensitivity. This drawback is even amplified when considering composition. There's also a line of research that focuses on improving the utility measured by an absolute error by considering specific data distribution, query types, and mechanism consistency \cite{10.1145/1806689.1806786, 10.14778/1920841.1920970, 10.1145/1807085.1807104}. Works that relax the DP guarantee such as Geo-indistinguishably\cite{10.1145/2508859.2516735}, Membership Privacy\cite{10.1145/2508859.2516686}, Pufferfish Privacy\cite{Kifer2014PufferfishAF,Pufferfish_mec}, Information Privacy\cite{LIP1} are effective, but are not in the same pool here, since they are not providing rigorous DP guarantee. 

\section{Budget Recycling Differential Privacy}
In this section, we introduce the Budget Recycling Differential Privacy (BR-DP) framework. We begin with an overview of the components in this framework, followed by a detailed explanation of its noisy distribution.

\subsection{Mechanism Overview}
The structure of our proposed BR-DP framework is illustrated in Fig. \ref{fig:BR-DP_model}. At its core, the mechanism functions as a traditional differential privacy (DP) mechanism, termed the DP kernel, augmented by a Budget Recycling module (Recycler). This module either releases a noisy answer or recycles the budget for regeneration with a certain randomness. The framework comprises four distinct phases:

\textbf{Triggering Phase:} Initiated by a query, the BR-DP framework activates with the initialization of the following parameters: $(\epsilon, \delta)$, representing the total budget; $\Delta_f$, denoting the estimated query sensitivity; the type of noisy mechanism for the DP kernel, such as Gaussian or Laplacian; and $\theta$, defining the error boundaries. 

\textbf{Budgeting Phase:} Adhering to an optimal budget allocation principle (discussed in Section \ref{sec:budgeting}), the total budget is strategically divided. This division comprises $(\epsilon_y, \delta_y)$ allocated to the DP kernel and a recycling probability $q$ designated for the recycler, collectively satisfying the $(\epsilon, \delta)$-DP requirement.

\textbf{Releasing Phase:} The DP kernel, equipped with $(\epsilon_y, \delta_y)$ and $\Delta_f$, generates a noise distribution from which noise $n$ is sampled. This sample is in accordance with the stipulated utility requirement. If $||n|| \leq \theta$, the noise $n$ is appended to the true answer $y$, resulting in the release of $y_n = y + n$. Conversely, if $||n|| > \theta$, there is a $(1 - q)$ probability that $n$ is appended and released.

\textbf{Recycling Phase:} In instances where $||n|| > \theta$, there exists a $q$ probability that both $n$ and $\epsilon_y$ are recycled. Subsequently, the mechanism reverts to the releasing phase to regenerate $Y_n$. This iterative process continues until a noisy version of $y_n$ is released.

From the descriptions above, we observe that BR-DP connects between a conventional DP mechanism and the bounded DP mechanism through a recycling parameter $q$: $q = 1$ implies a bounded DP mechanism and $q=0$ means a conventional DP mechanism. Therefore, BR-DP can be regarded as a soft-bounded mechanism. This work seeks to optimize this $q$ to achieve the optimal privacy-utility tradeoff, specifically, Adopting the BR-DP framework brings the following advantages:

\begin{enumerate}
    \item \textbf{Differential Privacy Protection:} The BR-DP framework ensures differential privacy guarantee with any arbitrary but fixed privacy budget $(\epsilon, \delta)$. Note that it is infeasible for some other bounded DP mechanisms such as \cite{Geng2018TruncatedLM} to achieve {$(\epsilon, \delta)$}-DP under small $\delta$ or large query sensitivity;

    \item \textbf{Enhanced Utility:} The BR-DP framework significantly improves data utility. This is evidenced by the probability $||N|| \leq \theta$ for any predetermined threshold $\theta$.

    \item \textbf{Tightened Composition:} The BR-DP framework offers a tighter composition compared to a standard DP kernel mechanism assigned the same privacy budget. Specifically, a series of releases from sequential $(\epsilon, \delta)$-BR-DP frameworks results in less privacy leakage than an equivalent series of sequential $(\epsilon, \delta)$-DP mechanisms.

    \item \textbf{Versatility with Other DP Mechanisms:} The BR-DP framework is highly versatile and can enhance any DP mechanism. It is not restricted to a specific type of noisy distribution, thus broadening its applicability in various differential privacy scenarios. 

\end{enumerate}

\begin{figure*}[t]
\centering 
\subfigure[Illustration of the noisy distribution of the BR-DP framework with Gaussian kernel.]
{\includegraphics[width=0.55\textwidth]{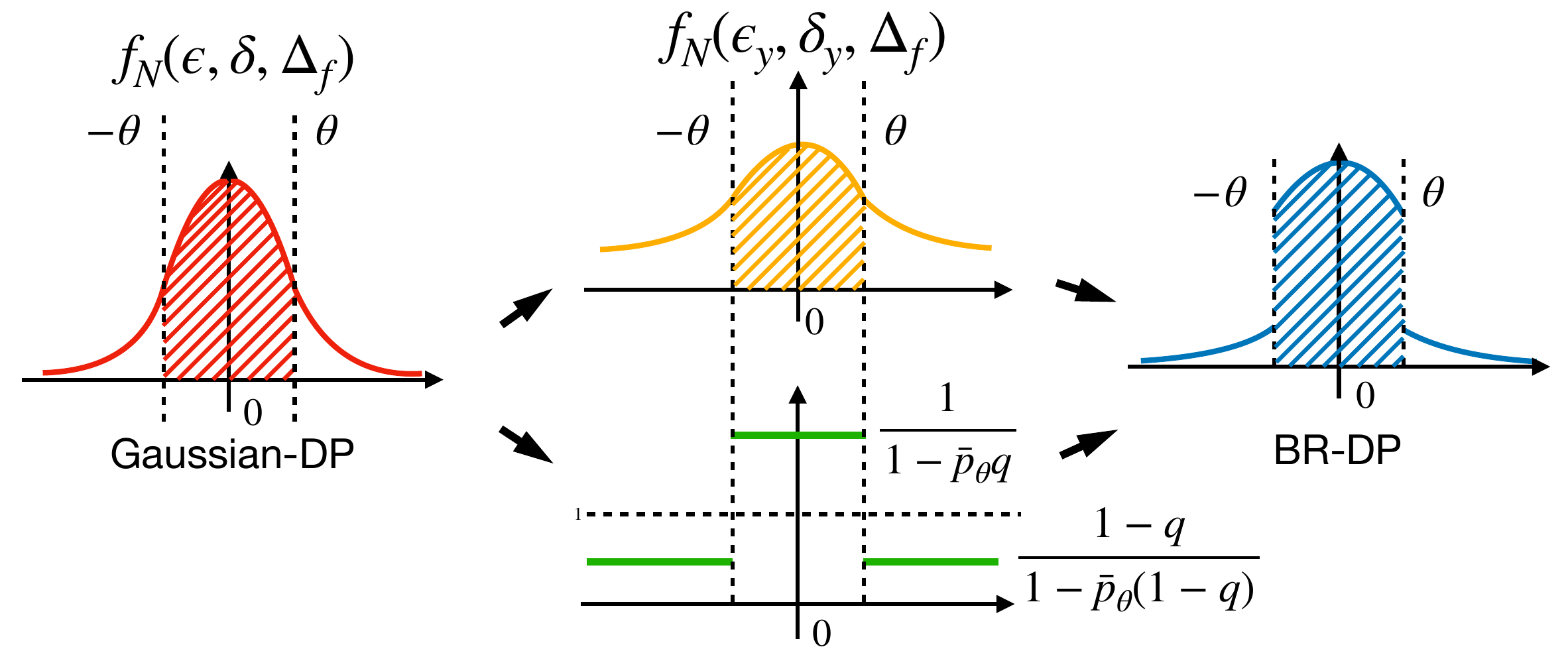}
\label{fig:2-1}}~
\subfigure[Gaussian kernel BR-DP v.s. Gaussian DP for data generation. In each mechanism, $\epsilon = 2$, $\delta = 10^{-5}$, $\theta=1$.]
{\includegraphics[width=0.37\textwidth]{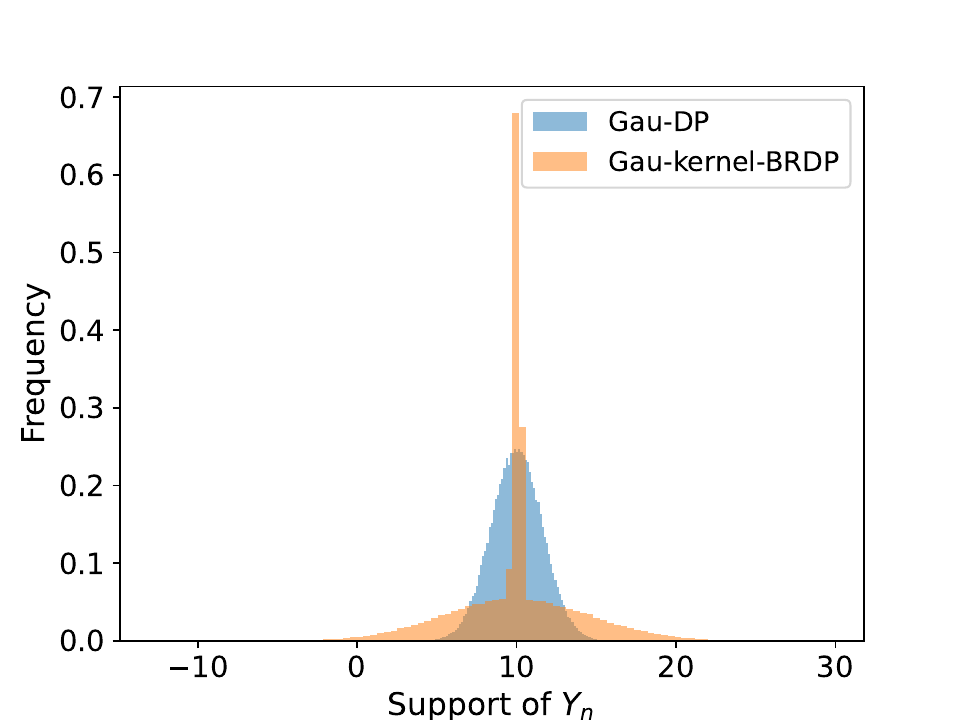}
\label{fig:2-2}}
\label{fig:2}
\caption{BR-DP noise distribution illustration}
\end{figure*}

\subsection{Noisy distribution of BR-DP}
The noisy distribution in the BR-DP framework is influenced by three factors: the noise distribution of the kernel DP mechanism, the utility requirement specified by $||N|| \leq \theta$, and the recycling rate $q$. We represent the noise probability density function (pdf) of the kernel DP mechanism as $f_N$, and a conditioned noisy pdf $f_N(\cdot|y)$ denotes $f_N$ centered at $Y=y$.  Denote $S_{\theta}$ as the subset of support of $Y_n$ corresponding to $||N||\le{\theta}$ given $y$. The notation ${p}_{\theta}$ denotes the probability $\text{Pr}(Y_n \in S_{\theta})$ obtained with the distribution of $f_N$, and $\bar{p}_{\theta}$ represents the complement. 
\begin{prop}
The noisy distribution of BR-DP centered at $y$ can be derived as
\begin{equation}\label{noise_dist}
    f_{\text{brdp}}(y_n|y) =
    \begin{dcases} 
        \frac{f_N(y_n|y)}{1-\bar{p}_{\theta}q}, & \text{if } y_n\in S_{\theta}; \\
        \frac{f_N(y_n|y)(1-q)}{1-\bar{p}_{\theta}q}, & \text{otherwise}. 
    \end{dcases}
\end{equation}
\end{prop}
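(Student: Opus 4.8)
The plan is to model the iterative releasing/recycling procedure as a renewal process and sum over the number of rounds until a value is released. First I would fix the true answer $y$ and observe that each round of the \textbf{Releasing Phase} is an independent, identically distributed trial: a fresh noise $n$ is drawn from $f_N(\cdot|y)$, and the round either terminates (releases) or recycles according to the stated rule. The key per-round object is the \emph{release density}, i.e.\ the joint density that a given round both produces output $y_n$ and then decides to release it. Splitting on whether $y_n \in S_{\theta}$, this density equals $f_N(y_n|y)$ when $y_n\in S_{\theta}$ (release is certain) and $f_N(y_n|y)(1-q)$ otherwise (release occurs only with probability $1-q$); denote it $g(y_n)$.

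Next I would compute the per-round recycling probability by integrating over the complementary event. Recycling happens only when $y_n\notin S_{\theta}$ and the coin comes up ``recycle,'' so its probability is $r:=q\,\bar{p}_{\theta}$, and the per-round stopping probability is therefore $1-r=1-q\bar{p}_{\theta}$. This value can be cross-checked directly by integrating the release density, $\int g = p_{\theta}+(1-q)\bar{p}_{\theta}$, which equals $1-q\bar{p}_{\theta}$ since $p_{\theta}+\bar{p}_{\theta}=1$. Because the rounds are i.i.d., the density that $y_n$ is the \emph{finally} released value is obtained by summing, over the round index $k\ge 1$, the probability that the first $k-1$ rounds all recycle times the release density of the $k$-th round:
\begin{equation*}
    f_{\text{brdp}}(y_n|y) = \sum_{k=1}^{\infty} r^{\,k-1}\, g(y_n) = \frac{g(y_n)}{1-r}.
\end{equation*}
Substituting $r=q\bar{p}_{\theta}$ and the two cases of $g$ then yields exactly \eqref{noise_dist}.

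I would close by recording two consistency checks: the geometric series converges precisely when $r=q\bar{p}_{\theta}<1$, guaranteeing almost-sure termination except in the degenerate case $q=1,\ p_{\theta}=0$ in which no acceptable output exists; and integrating the result over $y_n$ returns $\tfrac{1}{1-r}\int g = 1$, confirming it is a valid density. The only genuine subtlety—and the step I expect to require the most care—is justifying that the normalizing constant $1-q\bar{p}_{\theta}$ produced by the renewal summation coincides with the single-round stopping probability. Once the i.i.d.\ structure of the rounds is made explicit this identification is immediate, but it is worth stating the independence of successive draws cleanly so that factoring the joint event into $r^{k-1}g(y_n)$ is fully justified.
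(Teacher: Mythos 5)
Your proposal is correct and follows essentially the same route as the paper: the paper's Appendix A likewise sums the geometric series over recycling rounds with per-round recycling probability $\bar{p}_{\theta}q$, splitting on whether $y_n\in S_{\theta}$ to get the release density per round, and its Appendix B performs the same normalization check $\int f_{\text{brdp}}=1$ that you record. Your renewal-process framing and the explicit cross-check that the normalizer $1-q\bar{p}_{\theta}$ equals the single-round stopping probability are just cleaner packagings of the identical argument.
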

The steps for deriving the BR-DP distribution are summarized in the Appendix.
From \eqref{noise_dist}, we note the following observations:
\begin{enumerate}
    \item  $f_{brdp}$ depends on $f_N$, $\bar{p}_{\theta}$ and $q$, for large $q\to 1$, $f_{brdp}$ is a normalized $f_N$ bounded by $\mathcal{S}_{\theta}$; for small $q \to 0$, $f_{brdp}$ becomes identical to $f_N$. 
    \item The mechanism amplifies the probability of releasing a $y_n$ if $y_n \in S_{\theta}$; The amplification is of a factor of $1/({1-\bar{p}_{\theta}q})$, i.e., the amplified probability enlarges with $q$ and $\bar{p}_{\theta}$.
    \item The mechanism decreases the probability of releasing a $y_n$ that $y_n\notin  S_{\theta}$, the decrease is of a factor of  ${(1-q)}/{(1-\bar{p}_{\theta}q)}$.
\end{enumerate}

The BR-DP noisy distribution can be considered a soft-bounded distribution. While the bounded support of $f_N$ ensures utility, the flexible rules for recycling or releasing $Y_n$ beyond these boundaries introduce ambiguity and enhance privacy. Figure \ref{fig:2-1} depicts the noisy distribution of $Y_n$, and in Figure \ref{fig:2-2}, we present the normalized frequency of $100,000$ data points generated using the Gaussian kernel BR-DP and the Gaussian DP mechanism. It is noteworthy that BR-DP ensures that most of the generated $y_n$ fall within $S_{\theta}$, albeit at the expense of increased variance. This expansion in the support of the BR-DP noise, compared to standard DP, leads to less precise unacceptable noisy generations, thereby reinforcing DP protection overall. Moreover, as discussed in Section \ref{sec:Composition}, this increased variance results in reduced leakage when composing multiple mechanisms.

\section{Privacy Leakage Analysis}
In this section, we delve into the privacy leakage associated with the BR-DP framework by scrutinizing its privacy loss distribution. Then we derive optimal $q$ under any $(\epsilon_y,\delta_y)$ for the DP kernel and $(\epsilon,\delta)$ as the total budget.

\subsection{Privacy Leakage Accounting for BR-DP}
There are many interpretations for DP, such as KL divergence, total variance, Renyi divergence, hypotheses testing, etc. However, all these options for quantifying indistinguishably can be viewed from the perspective of Privacy Loss Distribution (PLD). We next formally define PLD.
\begin{defi}[Privacy Loss Distribution] Let $P$ and $Q$ be two probability distributions on $\mathcal{Y}$. Define $f_{P/Q}: \mathcal{Y}\to \mathsf{R}$ by $f_{P/Q}(y_n) = \log(P(y_n)/Q(y_n))$. The privacy loss random variable is given by $Z = f_{P/Q}(Y_n)$. The distribution of $Z$ is denoted $f_Z(z)$.
\end{defi}
In the following of this paper, we use $Z$ to denote the privacy loss random variable of the kernel DP mechanism of BR-DP. Specifically, for each $y_n\in\mathcal{Y}$, $Z$ takes values of:
\begin{equation*}
    \log\left(\frac{\mathcal{M}(X)=y_n}{\mathcal{M}(X')=y_n}\right)=\log\left(\frac{f_N(y_n|y)}{f_N(y_n|y+\Delta_f)}\right),
\end{equation*}
with a probability of $f_N(y_n|y)$.

Given $Z$, the privacy profile is readily obtained by: 
\begin{equation*}
\begin{aligned}
    \delta \ge &\mathbb{E}_{Z}[\max\{0,1-\exp(\epsilon-z)\}]\\
    =&\int_{\epsilon}^{\infty}(1-\exp(\epsilon-z))f_Z(z)dz.\\
\end{aligned}
\end{equation*}
For simplicity, we use $\delta_Z(\epsilon)$ to denote $\delta$ for a given $\epsilon$, and the PLD of $Z$. Let $\tau_u$ be the positive upper-utility bound, and $\tau_l$ be the negative upper-utility bound. Then,  the probability of $p_{\theta}$ can be represented by the Cumulative Density Function (CDF) of $N$, which is denoted as $\Phi(\cdot)$. Since  $\Phi(\cdot)$ depends on the variance (we assume zero-mean for kernel DP mechanism), which further depends on $\epsilon_y$ and $\delta_y$. Here, we denote $\Phi_{(\epsilon_y,\delta_y)}(\cdot)$ as the CDF of a noisy distribution given $(\epsilon_y,\delta_y)$. Then:
\begin{equation}\label{eq:pytheta}
    p_{\theta}=\Phi_{(\epsilon_y,\delta_y)}(\tau_{u})-\Phi_{(\epsilon_y,\delta_y)}(\tau_{l}),
\end{equation}
\begin{equation}\label{eq:barpytheta}
    \bar{p}_{\theta}=1-\Phi_{(\epsilon_y,\delta_y)}(\tau_{u})+\Phi_{(\epsilon_y,\delta_y)}(\tau_{l}).
\end{equation}

Denote $\Gamma$ as the privacy loss random variable of the BR-DP framework, we next introduce the PLD of BR-DP represented as a parameter of $\Phi$.

\begin{thm}\label{PLD_BR-DP}
The PLD of a BR-DP framework, given the PLD of the kernel DP mechanism $f_Z$ and a recycling rate $q$, can be represented as:
\begin{equation*}
    f_{\Gamma}(\gamma)=(1-W) f_Z(\gamma)+W f_Z(\gamma-\mathcal{L}),
\end{equation*}
where 
\begin{equation}\label{eq:l}
    \begin{aligned}
    &\mathcal{L} = -\log\left({1-q}\right),\\
\end{aligned}
\end{equation}
\begin{equation}\label{eq:W}
    \begin{aligned}
    &W = \max\{(\Phi_{(\epsilon_y,\delta_y)}(\min\{\tau_l + \Delta_f,\tau_u\})-\Phi_{(\epsilon_y,\delta_y)}(\tau_l)),\\&~~~(\Phi_{(\epsilon_y,\delta_y)}(\tau_u + \Delta_f)-\Phi_{(\epsilon_y,\delta_y)}(\max\{\tau_u, \tau_l + \Delta_f\})\}.
    \end{aligned}
\end{equation}
\end{thm}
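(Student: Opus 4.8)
The plan is to compute the BR-DP privacy loss pointwise from Proposition~1 and then re-express its distribution in terms of the kernel loss $Z$. Writing $t = y_n - y$ for the noise relative to the center $y$, the acceptance set $S_\theta$ is $t \in [\tau_l, \tau_u]$ for center $y$, whereas for the neighbor $y+\Delta_f$ it becomes $t \in [\tau_l + \Delta_f, \tau_u + \Delta_f]$. The first observation I would exploit is that the normalizer $1-\bar{p}_\theta q$ in \eqref{noise_dist} is the same for both centers by translation invariance of the kernel, so it cancels in the log-ratio $\log\!\big(f_{\text{brdp}}(y_n|y)/f_{\text{brdp}}(y_n|y+\Delta_f)\big)$. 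This partitions the output space into four regions: accepted by both, by neither, by $y$ only, and by the neighbor only. In the first two regions both densities carry the identical multiplicative factor, so the BR-DP loss equals the kernel loss $z$. In the ``$y$ only'' region the numerator keeps weight $1$ while the denominator is scaled by $1-q$, giving $z-\log(1-q)=z+\mathcal{L}$ with $\mathcal{L}=-\log(1-q)\ge 0$ as in \eqref{eq:l}; in the ``neighbor only'' region the sign flips, giving $z-\mathcal{L}$.

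Next I would isolate the privacy-relevant shift and compute its weight. Since DP accounting depends on the upper tail of $\Gamma$ through $\delta_\Gamma(\epsilon)$, only the worsening $+\mathcal{L}$ shift matters, and the $-\mathcal{L}$ branch can be absorbed conservatively into a dominating PLD. Using $\Delta_f\ge 0$, the ``$y$ only'' set reduces to $t \in [\tau_l,\ \min\{\tau_l+\Delta_f,\tau_u\})$, whose kernel mass is $\Phi_{(\epsilon_y,\delta_y)}(\min\{\tau_l+\Delta_f,\tau_u\})-\Phi_{(\epsilon_y,\delta_y)}(\tau_l)$, matching the first argument of the maximum in \eqref{eq:W}. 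Because $(\epsilon,\delta)$-DP must hold in both neighboring directions (adding versus removing a record), I would repeat the computation with the roles of $y$ and $y+\Delta_f$ interchanged, which produces the ``neighbor only'' mass $\Phi_{(\epsilon_y,\delta_y)}(\tau_u+\Delta_f)-\Phi_{(\epsilon_y,\delta_y)}(\max\{\tau_u,\tau_l+\Delta_f\})$, the second argument. Taking the worst of the two directions then yields $W=\max\{\cdot,\cdot\}$, and assembling a $(1-W)$ fraction of mass at kernel loss $z$ with a $W$ fraction displaced upward by $\mathcal{L}$ gives $f_{\Gamma}(\gamma)=(1-W)f_Z(\gamma)+W f_Z(\gamma-\mathcal{L})$.

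The step I expect to be the main obstacle is justifying the exact mixture form for the shifted component rather than the routine per-region loss computation. The mass shifted by $+\mathcal{L}$ originates in a specific sub-region where the kernel loss $z$ is itself constrained, so representing that displaced block as a clean scaled copy $W f_Z(\gamma-\mathcal{L})$ of the full kernel PLD requires care; moreover, the true BR-DP profile is weighted by $f_{\text{brdp}}$ whereas the stated $W$ is a kernel-weighted mass, so the $1/(1-\bar{p}_\theta q)$ normalization introduces a reweighting that must be reconciled. I would therefore cast the result as a \emph{dominating} PLD for the true neighboring pair: I would verify that the constructed mixture majorizes the true BR-DP loss in the hockey-stick sense for every $\epsilon$ (so that $\delta_\Gamma(\epsilon)$ computed from it is a valid upper bound), and that discarding the $-\mathcal{L}$ branch and taking the max over directions is conservative. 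Establishing this domination, together with the normalization bookkeeping, is the crux; once it holds, the mixture decomposition follows directly from the four-region analysis above.
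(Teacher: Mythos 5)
Your proposal follows essentially the same route as the paper's proof in Appendix C: partition the output space by acceptance status under the two neighboring centers, cancel the common geometric-series normalizer $1/(1-\bar{p}_{\theta}q)$ in the log-ratio, and read off a loss of $Z$ on the two symmetric regions and $Z+\mathcal{L}$ on the one-sided region, whose kernel mass (maximized over the two neighboring directions) is exactly the $W$ of \eqref{eq:W}. If anything you are more scrupulous than the paper, whose proof silently omits the fourth region (accepted by the neighbor only, loss $Z-\mathcal{L}$) and weights the loss values by $f_N$ rather than $f_{\text{brdp}}$ --- precisely the two issues you flag and propose to reconcile via a dominating-PLD argument.
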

Intuitively, the PLD of the BR-DP framework shifts the PLD of the kernel DP mechanism to the right (indicating a larger privacy leakage) by \( \mathcal{L} \) with a probability of \( W \). Consequently, minimizing either \( \mathcal{L} \) or \( W \) can contribute to reducing the overall privacy leakage. A smaller \( \mathcal{L} \) or a lower \( q \) serves to limit the increase in additional leakage, while a reduced \( W \) decreases the likelihood of such leakages occurring. We next introduce the privacy profile of BR-DP.

\begin{prop}
The privacy profile of the BR-DP framework can be derived as follows:
\begin{equation}\label{eq:deltaGamma}
    \delta_{\Gamma}(\epsilon) \ge (1-W)\delta_Z(\epsilon)+W\delta_Z(\epsilon-\mathcal{L}).
\end{equation}
\end{prop}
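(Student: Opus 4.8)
The plan is to derive \eqref{eq:deltaGamma} as a direct corollary of Theorem~\ref{PLD_BR-DP} by feeding the explicit PLD of $\Gamma$ into the privacy-profile functional. The two structural facts that make this work are that the profile functional
\begin{equation*}
    \delta_f(\epsilon) = \int_{\epsilon}^{\infty}\bigl(1-\exp(\epsilon-z)\bigr)f(z)\,dz
\end{equation*}
is affine in the loss density $f$, so a mixture of densities yields the corresponding mixture of profiles, and that a deterministic rightward shift of the density by $\mathcal{L}$ translates into a shift of the profile's threshold argument by the same amount. Since Theorem~\ref{PLD_BR-DP} already furnishes $f_{\Gamma}$ as a two-component mixture of $f_Z$ and its $\mathcal{L}$-shift, with the worst-case weight $W$ baked into the maximum of \eqref{eq:W}, I can take that PLD as given and simply evaluate the functional.

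First I would write the profile of the BR-DP loss variable, inheriting the ``$\ge$'' from the definitional privacy profile, as
\begin{equation*}
    \delta_{\Gamma}(\epsilon) \ge \int_{\epsilon}^{\infty}\bigl(1-\exp(\epsilon-\gamma)\bigr)f_{\Gamma}(\gamma)\,d\gamma,
\end{equation*}
and substitute $f_{\Gamma}(\gamma)=(1-W)f_Z(\gamma)+W f_Z(\gamma-\mathcal{L})$. By linearity the integral separates into two pieces weighted by $1-W$ and $W$. The first piece is, by inspection, the profile functional applied to $f_Z$ at level $\epsilon$, i.e. exactly $(1-W)\delta_Z(\epsilon)$, requiring no further manipulation.

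The only computation worth spelling out is the shifted term. There I would substitute $u=\gamma-\mathcal{L}$, which sends the lower limit $\gamma=\epsilon$ to $u=\epsilon-\mathcal{L}$ and rewrites the weight as $1-\exp\bigl((\epsilon-\mathcal{L})-u\bigr)$; the integral then reads off as $\delta_Z(\epsilon-\mathcal{L})$, so the second piece contributes $W\,\delta_Z(\epsilon-\mathcal{L})$. Collecting both pieces gives the claimed bound. This matches the intuition stated after Theorem~\ref{PLD_BR-DP}: the fraction $W$ of probability mass that is displaced right by $\mathcal{L}=-\log(1-q)$ in the loss distribution contributes to the profile precisely as the kernel's own profile evaluated at the reduced threshold $\epsilon-\mathcal{L}$.

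The part deserving care, rather than raw calculation, is bookkeeping on the inequality and the regime of $\mathcal{L}$. The ``$\ge$'' is inherited verbatim from the definition of the privacy profile (any $\delta$ certifying $(\epsilon,\delta)$-DP must be at least the hockey-stick value), while the explicit right-hand side is obtained as an exact evaluation of that value on the dominating pair that $W$ already singles out. I would also record that $\mathcal{L}\ge 0$ by \eqref{eq:l} for $q\in[0,1)$, so the shift is genuinely rightward and, by monotonicity of $\delta_Z$, the term $\delta_Z(\epsilon-\mathcal{L})\ge\delta_Z(\epsilon)$ is the one inflating the leakage; the edge case $q=0$ forces $\mathcal{L}=0$ and collapses the bound to $\delta_Z(\epsilon)$, consistently recovering the bare kernel mechanism.
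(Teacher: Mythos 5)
Your proposal is correct and follows essentially the same route as the paper's proof in Appendix~D: substitute the two-component mixture form of $f_{\Gamma}$ from Theorem~\ref{PLD_BR-DP} into the hockey-stick functional, split by linearity, and change variables $z=\gamma-\mathcal{L}$ in the shifted term to read off $W\,\delta_Z(\epsilon-\mathcal{L})$. The added remarks on the sign of $\mathcal{L}$ and the $q=0$ degeneracy are consistent observations but not needed beyond what the paper already does.
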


The privacy profile of BR-DP depends on the privacy profile of the kernel DP mechanism, which has been thoroughly studied in the literature as discussed in the preliminary DP section. Specifically, the privacy profile of BR-DP can be derived by a linear combination of two privacy profiles of the kernel DP mechanism with different $\epsilon$'s. 

\subsection{Determining Recycling Rate}

\textbf{Baseline $q$:} We first propose naive analysis on the leakage accounting by assuming $\delta_y=\delta$. Specifically, if the BR-DP kernel is $(\epsilon_y, \delta)$ differentially private, the leakage upper bound of the BR-DP framework follows the next Theorem.

\begin{thm}\label{thm:leak_bound}
The BR-DP framework with $(\epsilon_y,\delta)$- kernel DP mechanism and $q$ as the recycling rate for the recycling module satisfies $(\epsilon,\delta)$-DP, where 
\begin{equation*}
    \epsilon = \epsilon_y - \log(1-q).
\end{equation*}
\end{thm}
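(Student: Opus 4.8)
The plan is to obtain the result directly from the BR-DP privacy profile already established in \eqref{eq:deltaGamma}, rather than from a fresh pointwise likelihood-ratio analysis. Recall from \eqref{eq:l} that the shift induced by the recycler is $\mathcal{L} = -\log(1-q)$, so the target budget $\epsilon = \epsilon_y - \log(1-q)$ is precisely $\epsilon = \epsilon_y + \mathcal{L}$. The whole argument then reduces to evaluating the right-hand side of \eqref{eq:deltaGamma} at this particular $\epsilon$ and checking that it does not exceed the kernel's failure probability $\delta$.

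First I would set $\delta_y = \delta$ (the assumption of this baseline regime) and substitute $\epsilon = \epsilon_y + \mathcal{L}$ into the right-hand side of \eqref{eq:deltaGamma}. Because the second term carries the argument $\epsilon - \mathcal{L}$, this substitution makes it collapse exactly to $\delta_Z(\epsilon_y)$, so the quantity to bound becomes
\begin{equation*}
    (1-W)\,\delta_Z(\epsilon_y + \mathcal{L}) + W\,\delta_Z(\epsilon_y).
\end{equation*}
Next I would bound each piece. The kernel being $(\epsilon_y,\delta)$-DP gives $\delta_Z(\epsilon_y) \le \delta$ immediately. For the first piece I would invoke two elementary facts: that $\mathcal{L} \ge 0$ (since $q \in [0,1)$ forces $1-q \in (0,1]$ and hence $-\log(1-q)\ge 0$), and that the profile $\epsilon \mapsto \delta_Z(\epsilon) = \mathbb{E}_Z[\max\{0,1-\exp(\epsilon - z)\}]$ is non-increasing in $\epsilon$ (each integrand shrinks as $\epsilon$ grows). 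Together these yield $\delta_Z(\epsilon_y + \mathcal{L}) \le \delta_Z(\epsilon_y) \le \delta$. Finally, since $W$ in \eqref{eq:W} is a difference of CDF values and therefore lies in $[0,1]$, the convex combination obeys $(1-W)\delta + W\delta = \delta$, so the profile at $\epsilon_y+\mathcal{L}$ is at most $\delta$ and the framework is $(\epsilon_y+\mathcal{L},\delta)$-DP as claimed.

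The step I expect to be the genuine obstacle — and the reason I route through \eqref{eq:deltaGamma} rather than a direct density-ratio bound — is controlling the additive term. A naive argument writes $f_{\text{brdp}}(y_n|y) = c_y f_N(y_n|y)/(1-\bar{p}_{\theta}q)$ with $c_y \le 1$ for the center $y$ and $c_{y+\Delta_f} \ge 1-q$ for the neighbor; this correctly exposes the multiplicative factor $1/(1-q)=e^{\mathcal{L}}$ responsible for the $\epsilon$-shift, but when pushed through a hockey-stick bound it inflates the additive term to $\delta/(1-\bar{p}_{\theta}q) \ge \delta$ because the shared normalizer fails to cancel. The exact PLD mixture of Theorem \ref{PLD_BR-DP} is precisely what avoids this loss: by tracking the loss random variable itself it leaves the additive term equal to the kernel's $\delta$. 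Consequently, the only care needed beyond the assumed results is verifying $\mathcal{L}\ge 0$, $W\in[0,1]$, and monotonicity of the profile, all of which are routine.
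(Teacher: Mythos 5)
Your proof is correct, but it takes a genuinely different route from the paper's. The paper's own argument never invokes the privacy profile \eqref{eq:deltaGamma}: it returns to the three-case decomposition of the BR-DP density ratio (the same cases used to prove Theorem~\ref{PLD_BR-DP}), observes that in every case the privacy loss is pointwise at most $Z-\log(1-q)$, and then shifts the kernel's tail bound $\Pr\{Z\ge\epsilon_y\}\le\delta$ by the constant $\mathcal{L}$ to conclude $\Pr\{\Gamma\ge\epsilon_y+\mathcal{L}\}\le\delta$. You instead evaluate the exact PLD-mixture profile at $\epsilon=\epsilon_y+\mathcal{L}$ so that the shifted term collapses to $\delta_Z(\epsilon_y)\le\delta$, and finish with monotonicity of $\epsilon\mapsto\delta_Z(\epsilon)$ and $W\in[0,1]$. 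Each route buys something: the paper's is self-contained (it does not presuppose Theorem~\ref{PLD_BR-DP} or \eqref{eq:deltaGamma}) and makes the origin of the $-\log(1-q)$ term transparent, but it implicitly reads the kernel's guarantee as the tail condition $\Pr\{Z\ge\epsilon_y\}\le\delta$, which is sufficient for but not equivalent to $(\epsilon_y,\delta)$-DP; your version stays entirely within the hockey-stick formalism, so it uses exactly the hypothesis $\delta_Z(\epsilon_y)\le\delta$, and it also exhibits why the baseline is loose --- you discard both the factor $W<1$ and the decrease of $\delta_Z$ at the larger argument $\epsilon_y+\mathcal{L}$, which is precisely the slack Algorithm~1 later recovers. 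One small caveat: your closing remark about a naive density-ratio bound inflating the additive term to $\delta/(1-\bar{p}_{\theta}q)$ does not describe the paper's actual proof, which also works at the level of the loss random variable and avoids that loss; it is a fair warning only about a third route that neither you nor the paper takes.
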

Then, it is straightforward to derive the baseline $q$ achieving $(\epsilon,\delta)$-DP given a $(\epsilon_y,\delta)$-kernel DP.
\begin{rmk}\label{rmk1}
The baseline $q$ that guarantees $(\epsilon,\delta)$-DP is $$1-e^{-(\epsilon-\epsilon_y)}.$$
\end{rmk}
Here, baseline means that $q$ is derived from the naive leakage accounting in Theorem \ref{thm:leak_bound}, and the true leakage for a BR-DP framework by adopting $q$ for the recycler and $(\epsilon_y,\delta)$ for the DP kernel, is less than $(\epsilon,\delta)$. 

\textbf{Optimal $q$:} Observe from \eqref{eq:deltaGamma}$, \delta_{\Gamma}$ depends on $\delta_Z$, $\mathcal{L}$, and $W$, which further depends on kernel DP parameters, and the recycling rate $q$. To this end, $q$ can be obtained when the total budget and the kernel DP budget are given, by employing by a binary search algorithm summarized in Algorithm \ref{alog:findq}. 

\begin{algorithm}\label{alog:findq}
\caption{Find $q$ with binary search}
\hspace*{\algorithmicindent}
\textbf{Input:} $\epsilon_y, \delta_y$, $\epsilon, \delta$, $\Delta_f$ $\theta$, tol.\\
 \hspace*{\algorithmicindent} \textbf{Output:} Optimal $q$.
\begin{algorithmic}[1]
    \State Given $\epsilon_y$, $\delta_y$, $\Delta_f$, get $\Phi_{(\epsilon_y,\delta_y)}$ and $\delta_Z(\cdot)$;
    \State  Get $\tau_l$ and $\tau_u$ with $\theta$;
    \State Initialize $q_{up} \gets 1$, $q_{low} \gets 0$;
    \State $p_{\theta}\gets$ \eqref{eq:pytheta}, $\bar{p}_{\theta}\gets$ \eqref{eq:barpytheta};
    \State $W \gets$ \eqref{eq:W};
    \While{$q_{up} - q_{low} > \text{tol}$}
        \State $q_{mid} \gets (q_{up} + q_{low})/2$;
        \State $\mathcal{L}$ $\gets$ \eqref{eq:l};
        \State $\delta_{\Gamma}(\epsilon) \gets$ \eqref{eq:deltaGamma};
        \If{$\delta_{\Gamma}(\epsilon) \le \delta_{target}$}
            \State $q_{low} \gets q_{mid}$;
        \Else
            \State $q_{up} \gets q_{mid}$;
        \EndIf
    \EndWhile
    \State \Return $q_{low}$
\end{algorithmic}
\end{algorithm}

\begin{figure}
    \centering
\includegraphics[width=0.43\textwidth]{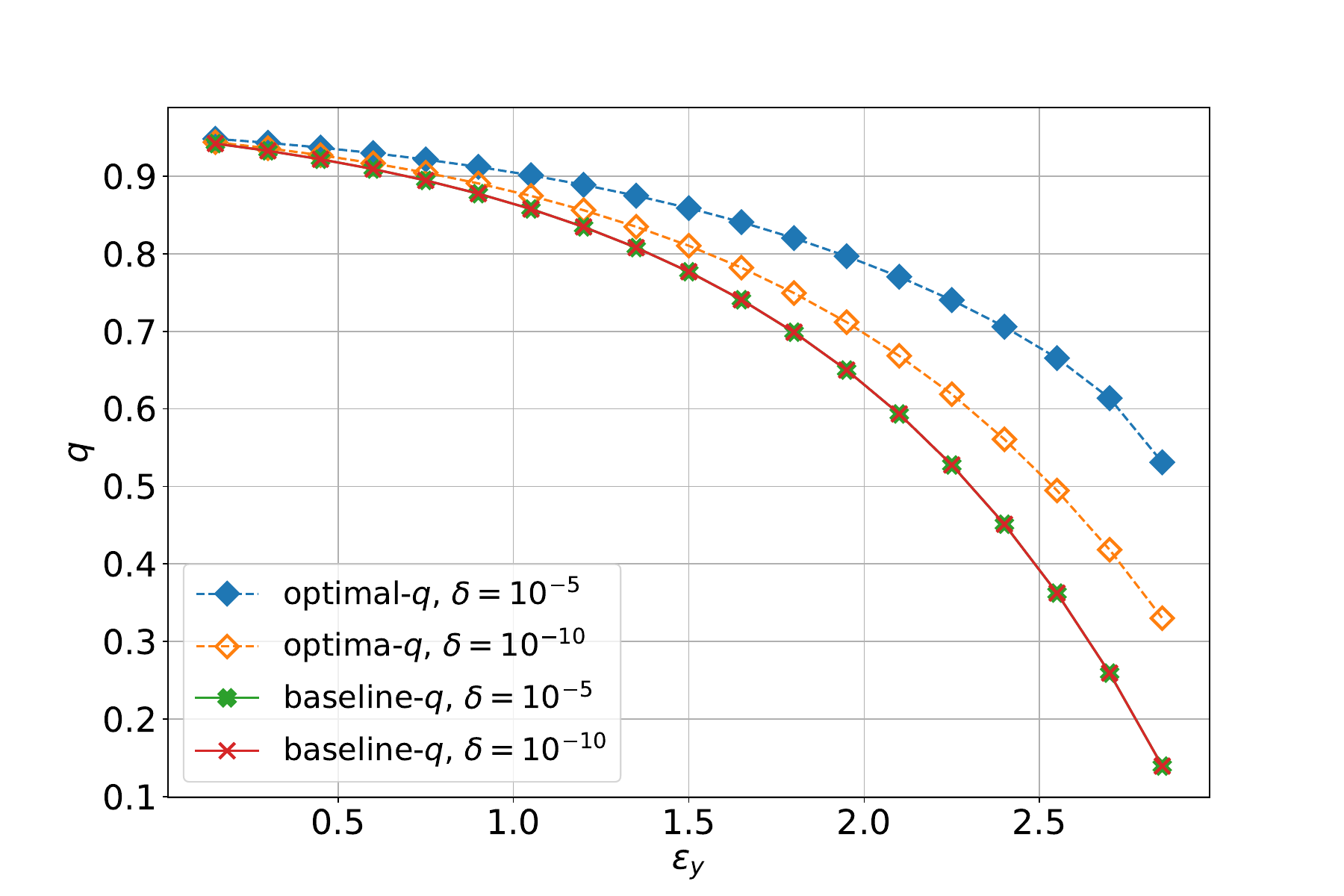}
    \caption{Comparison of $q$ for baseline and optimal approaches. $\epsilon = 3$, $\delta = \delta_y$ with values shown in the figure, $y=10$, $\Delta_f = 1$, $\theta = 1$.}
    \label{fig:q_comp}
    \vspace{-10pt}
\end{figure}


{It is worth emphasizing here that $q$ is only determined by the total privacy budget, budget allocated to the kernel DP, query sensitivity, error bounds and tolerance factor in numerical searching. Hence, $q$ is data independent.}
The following proposition states the privacy guarantee.
\begin{prop}
The BR-DP framework incorporates an \((\epsilon_y, \delta_y)\)-DP kernel and employs a recycling rate \( q \) as determined by Algorithm 1 for its recycling module, is shown to satisfy \((\epsilon, \delta)\)-DP.
\end{prop}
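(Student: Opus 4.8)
The plan is to reduce the claim to two ingredients: the privacy-profile characterization of differential privacy, and the invariant maintained by the binary search in Algorithm \ref{alog:findq}. First I would recall that a mechanism with privacy loss variable $\Gamma$ is $(\epsilon,\delta)$-DP exactly when its privacy profile satisfies $\delta_{\Gamma}(\epsilon)\le\delta$, where $\delta_{\Gamma}(\epsilon)=\mathbb{E}_{\Gamma}[\max\{0,1-\exp(\epsilon-\Gamma)\}]$ is evaluated for the worst-case pair of neighboring datasets. The outer maximum defining $W$ in \eqref{eq:W} already encodes this worst case, so that controlling the single scalar $\delta_{\Gamma}(\epsilon)$ is sufficient to certify the guarantee.

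Next I would invoke Theorem \ref{PLD_BR-DP} together with the profile bound \eqref{eq:deltaGamma}, which give $\delta_{\Gamma}(\epsilon)=(1-W)\,\delta_Z(\epsilon)+W\,\delta_Z(\epsilon-\mathcal{L})$ with $\mathcal{L}=-\log(1-q)$ as in \eqref{eq:l}. The crucial observation is that $W$ is independent of $q$ (it is computed once, before the loop, in Algorithm \ref{alog:findq}), so all $q$-dependence is funnelled through the shift $\mathcal{L}$ and hence through the single term $\delta_Z(\epsilon-\mathcal{L})$. Since $\delta_Z(\cdot)$ is a privacy profile it is non-increasing in its argument, and $\mathcal{L}=-\log(1-q)$ is continuous and strictly increasing on $q\in[0,1)$; consequently $\delta_{\Gamma}(\epsilon)$ is continuous and monotonically non-decreasing in $q$.

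The heart of the proof is then the loop invariant that $q_{low}$ always indexes a \emph{feasible} recycling rate, i.e. $\delta_{\Gamma}(\epsilon)\big|_{q_{low}}\le\delta$. At initialization $q_{low}=0$ gives $\mathcal{L}=0$, hence $\delta_{\Gamma}(\epsilon)=\delta_Z(\epsilon)\le\delta_Z(\epsilon_y)\le\delta_y\le\delta$, where I use $\epsilon\ge\epsilon_y$ and $\delta\ge\delta_y$ from the budget sub-allocation and the monotonicity of $\delta_Z$. The update rule advances $q_{low}$ to a tested value $q_{mid}$ only when the freshly computed $\delta_{\Gamma}(\epsilon)\le\delta_{target}=\delta$, and otherwise leaves $q_{low}$ untouched; therefore feasibility of $q_{low}$ is preserved at every iteration. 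Upon termination the returned $q_{low}$ still satisfies $\delta_{\Gamma}(\epsilon)\le\delta$, and combining this with the profile characterization of the first paragraph yields that the BR-DP framework with this $q$ is $(\epsilon,\delta)$-DP.

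I expect the main obstacle to be stating the monotonicity step cleanly, since it is what justifies that the search is well-behaved and converges to the largest admissible $q$ (the optimal rate connecting to the earlier discussion). A subtlety worth emphasizing, however, is that the DP \emph{guarantee} itself does not actually require monotonicity at all: feasibility of the returned rate follows purely from the invariant argument above, so even the degenerate case in which the loop makes no progress and returns $q_{low}=0$ is safe, because the kernel is already $(\epsilon_y,\delta_y)$-DP and hence $(\epsilon,\delta)$-DP whenever $\epsilon\ge\epsilon_y$ and $\delta\ge\delta_y$. Monotonicity is thus relevant to optimality of the recycling rate rather than to the correctness of the privacy claim being proved here.
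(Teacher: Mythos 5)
Your argument is correct and follows exactly the route the paper intends: the paper states this proposition without an explicit proof, relying on the fact that Algorithm~\ref{alog:findq} by construction only ever advances $q_{low}$ to values at which the profile bound \eqref{eq:deltaGamma} evaluates to at most $\delta$, which is precisely the invariant you formalize. Your additional observations --- that $W$ is $q$-independent so monotonicity of $\delta_{\Gamma}(\epsilon)$ in $q$ follows from $\mathcal{L}=-\log(1-q)$ alone, and that feasibility of the returned rate (hence the DP guarantee) rests only on the invariant and the base case $\delta_Z(\epsilon)\le\delta_Z(\epsilon_y)\le\delta_y\le\delta$, not on monotonicity --- are accurate and usefully fill in details the paper leaves implicit.
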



We conduct a numerical analysis 
to compare the trade-off between \( q \) and \( \epsilon_y \) for a given $\epsilon$, with $q$
derived from Algorithm 1 and the baseline analysis in Remark \ref{rmk1}, respectively. This analysis is based on a Gaussian kernel BR-DP, with results
depicted in Fig. \ref{fig:q_comp}. Observe that Algorithm 1 brings much-tightened results by providing larger $q$.

\section{Utility-Oriented Optimization}

In this section, we optimize the allocation of the budget $(\epsilon, \delta)$ within the BR-DP framework. Our goal is to optimally allocate the budget for the kernel DP mechanism in a way that maximizes the utility achievable by BR-DP. To this end, we first explore the utility function of the BR-DP framework with the BR-DP noisy distribution, based on which, we propose the principles of optimal budgeting for the kernel DP mechanism.


Given the noisy distribution of BR-DP defined in \eqref{noise_dist}, the utility, measured by the \textit{acceptance rate} becomes:
\begin{equation}\label{eq:post}
\begin{aligned}
    \text{Pr}(||N||\le{\theta})=&\int_{S_{\theta}}f_{brdp}(y_n|y)dy_n
    =\frac{p_{\theta}}{1-\bar{p}_{\theta}q}
\end{aligned}
\end{equation}


Observe that, to maximize utility, the framework faces a dilemma: either to enhancing \( p_{\theta} \), or increasing \( q \). Specifically, a larger budget allocated to the kernel DP mechanism leads to a noise distribution with a small variance that tends to be more concentrated around the ground truth, and hence, improves \( p_{\theta} \), whereas increasing \( q \) necessitates diverting more budget to the recycling module. Consequently, the optimal BR-DP framework is characterized by a strategic and balanced budget distribution between the kernel DP mechanism and the recycling module.

 \begin{algorithm}[t]
\caption{Find Optimal $\epsilon_y$ using Ternary Search}
\hspace*{\algorithmicindent}
\textbf{Input:} $\epsilon$, $\delta$, $\Delta_f$, $\theta$, tol.\\
 \hspace*{\algorithmicindent} \textbf{Output:} Optimal  $\epsilon_y$.
\begin{algorithmic}[1]
    \State $\epsilon_{low} \gets 0$;
    \State $\epsilon_{up} \gets \epsilon$;
    \While{$\epsilon_{up} - \epsilon_{low} > \text{tol}$}
        \State $\epsilon_1 \gets \epsilon_{low} + \frac{\epsilon_{up} - \epsilon_{low}}{3}$;
        \State $\epsilon_2 \gets \epsilon_{up} - \frac{\epsilon_{up} - \epsilon_{low}}{3}$;
        \State $q_1 \gets$ Algorithm 1($\epsilon, \delta, \epsilon_1, \delta, \Delta_f,\theta$, tol);
        \State $q_2 \gets$ Algorithm 1($\epsilon, \delta, \epsilon_2, \delta, \Delta_f,\theta$, tol);
        \If{{O}{($\epsilon_1, q_1$)} $>${O}{($\epsilon_2, q_2$)}}
            \State $\epsilon_{low} \gets \epsilon_1$;
        \Else
            \State $\epsilon_{up} \gets \epsilon_2$;
        \EndIf
    \EndWhile
    \State $\epsilon_y \gets (\epsilon_{up} + \epsilon_{low})/ 2$;
    \State \Return $\epsilon_y$
\end{algorithmic}
\label{algo:2}
\end{algorithm}

\subsection{Optimal Budgeting Principles}\label{sec:budgeting}
We now proceed to derive the principle for optimal budget allocation within the BR-DP framework, given a total budget of $(\epsilon, \delta)$. For simplicity in our derivation, we allocate the same $\delta$ to the kernel DP mechanism as in the total budget. However, our results can be readily extended to accommodate an arbitrary $\delta_y$ for the kernel mechanism.

For a given $\epsilon_y, \delta$, the \textit{acceptance rate} in \eqref{eq:post} can be further expressed as:

\begin{equation*}
    \begin{aligned}
\frac{p_{\theta}}{1-\bar{p}_{\theta}q}
=&\frac{\Phi_{(\epsilon_y,\delta)}(\tau_{u})-\Phi_{(\epsilon_y,\delta)}(\tau_{l})}{[\Phi_{(\epsilon_y, \delta)}(\tau_{u})-\Phi_{(\epsilon_y,\delta)}(\tau_{l})](1-q)+q}\\
=&\left({1-q+\frac{q}{\Phi_{(\epsilon_y,\delta)}(\tau_{u})-\Phi_{(\epsilon_y,\delta)}(\tau_{l})}}\right)^{-1}.
\end{aligned}
\end{equation*}
Equivalently, to improve the \textit{acceptance rate}, the mechanism tends to minimize
\begin{equation}\label{eq:budget}
O(\epsilon_y,q) = 1-q+\frac{q}{\Phi_{(\epsilon_y,\delta)}(\tau_{u})-\Phi_{(\epsilon_y,\delta)}(\tau_{l})}.
\end{equation}

\begin{figure}[t]
\centering 
\subfigure[\textit{Acceptance rate} v.s. the total privacy budget $\epsilon$, when $\theta = 1$, $\Delta_f = 1$, $\delta_y = \delta$.]
{\includegraphics[width=0.4\textwidth]{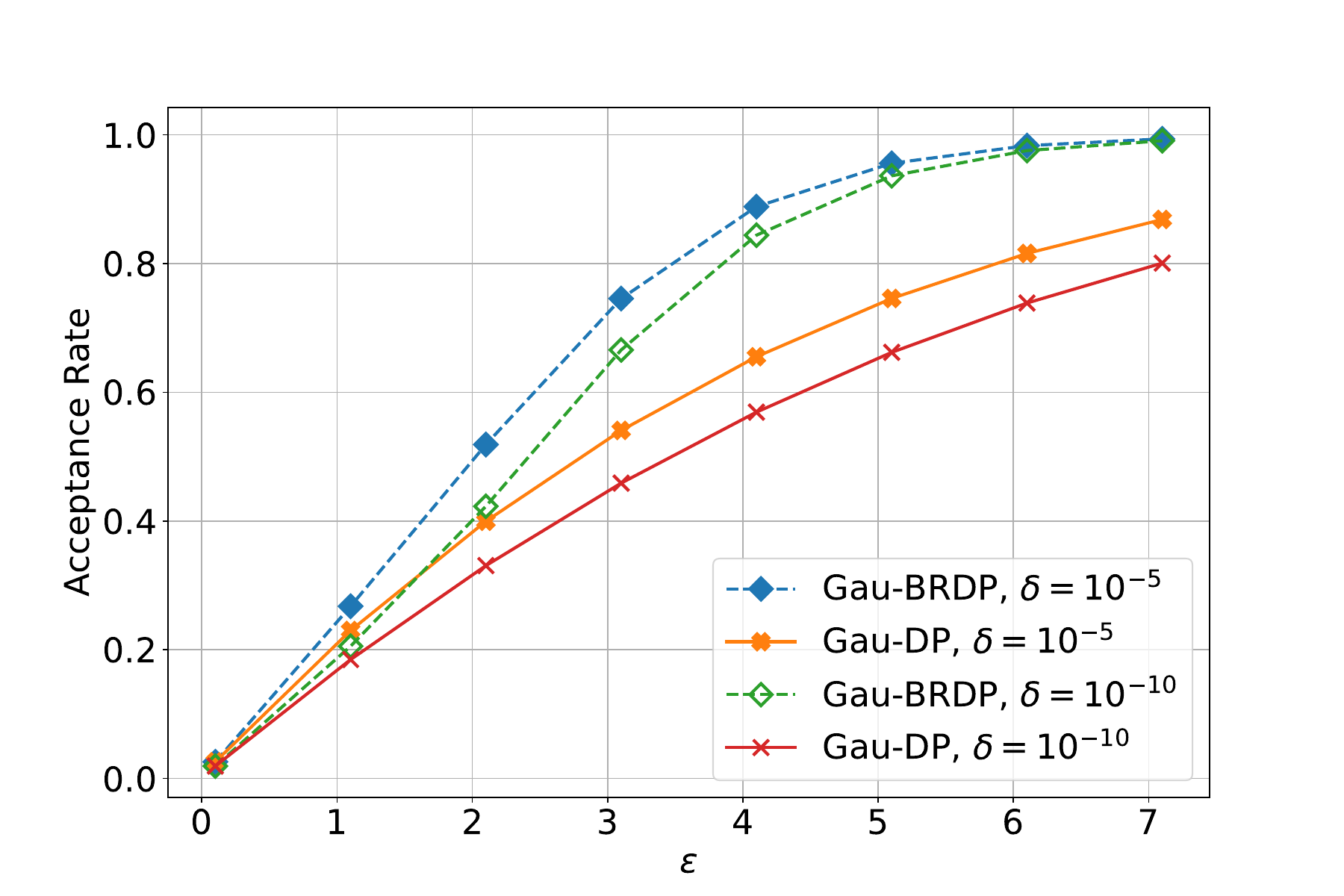}
\label{ar_vs_ep}}
\subfigure[Acceptance rate v.s.  the total privacy budget $\epsilon$, when $\theta = 1$, $\Delta_f = 5$, $\delta_y = \delta$.]
{\includegraphics[width=0.4\textwidth]{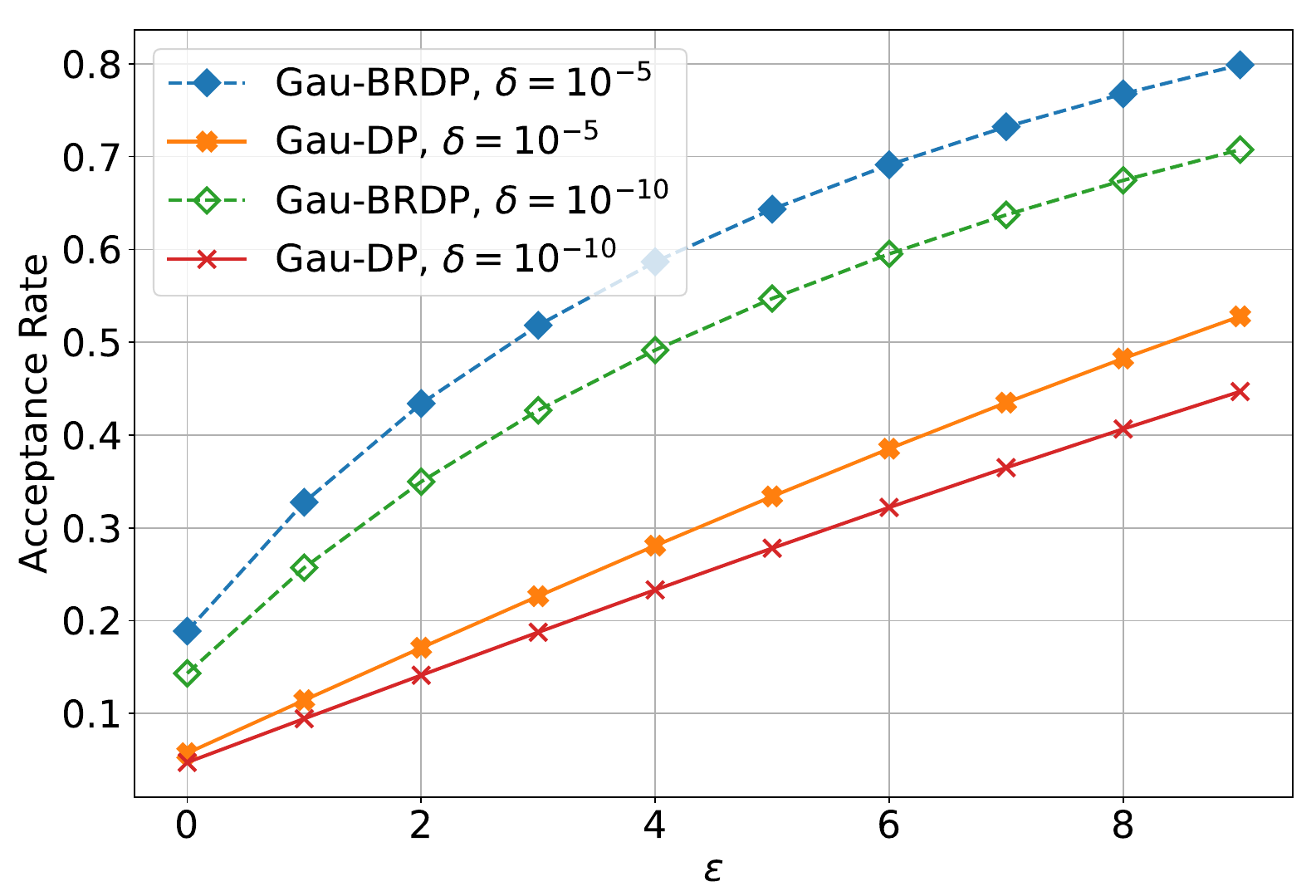}
\label{ar_vs_y1}}
\caption{Numerical comparison of the acceptance rate provided by Gaussian DP and the Gaussian kernel BR-DP under different parameters.}
\vspace{-10pt}
\end{figure}

Given that the value of \( q \) can be determined from \( \epsilon_y \) as outlined in Algorithm 1, Equation \eqref{eq:budget} can be treated as a constrained one-dimensional optimization problem as a function of $\epsilon_y$. Several numerical methods are applicable for solving this type of problem. As an illustrative example, we employ the Ternary Search algorithm as outlined in Algorithm \ref{algo:2}. Ternary search excels in efficiently pinpointing the optimal solution for objective functions that display a single peak across the support of the input variable. Crucially, the function should be monotonic both before and after this peak. In our context, the objective function is parameterized by $\epsilon_y$. When $\epsilon_y$ is large, the dominance of the DP kernel causes the utility to be driven by it, attenuating the recycling module. Conversely, with a small $\epsilon_y$, the recycling phase takes precedence, thereby governing the utility. This behavior of the objective function mirrors the characteristics suited for the Ternary search.

Comparing the utility provided by BR-DP and DP, we have the following conclusion.
\begin{prop}
The $(\epsilon,\delta)$-BR-DP framework achieves an equal or superior \textit{acceptance rate} compared to the 
$(\epsilon,\delta)$-DP mechanism when the kernel DP mechanism within BR-DP utilizes the same form of noisy distribution as that employed in the DP mechanism.
\end{prop}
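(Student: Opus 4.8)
The plan is to exhibit the standard $(\epsilon,\delta)$-DP mechanism as a degenerate corner case of the BR-DP framework and then invoke the fact that BR-DP selects the budget split that maximizes the acceptance rate. Since a maximum over a feasible set dominates the value attained at any single feasible point, the optimized BR-DP acceptance rate can only match or exceed that of the plain DP mechanism. First I would recall the two quantities being compared. For the standalone mechanism with total budget $(\epsilon,\delta)$ and noise CDF $\Phi_{(\epsilon,\delta)}$, the acceptance rate is $\Phi_{(\epsilon,\delta)}(\tau_u)-\Phi_{(\epsilon,\delta)}(\tau_l)$, i.e. the $p_\theta$ of \eqref{eq:pytheta} evaluated at budget $(\epsilon,\delta)$. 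For BR-DP the acceptance rate is given by \eqref{eq:post}, namely $p_\theta/(1-\bar p_\theta q)$, with $p_\theta,\bar p_\theta$ from \eqref{eq:pytheta}--\eqref{eq:barpytheta} taken at the kernel budget $(\epsilon_y,\delta_y)$ and $q$ the recycling rate.

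The central step is to check that the allocation $\epsilon_y=\epsilon$, $\delta_y=\delta$, $q=0$ is a feasible BR-DP configuration whose acceptance rate equals the DP one. Feasibility follows immediately from Theorem \ref{thm:leak_bound}: with $q=0$ we have $-\log(1-q)=0$, so the induced total budget is exactly $\epsilon_y=\epsilon$, while the kernel already carries $\delta_y=\delta$; hence this configuration satisfies $(\epsilon,\delta)$-DP. The acceptance-rate identity is equally direct: substituting $q=0$ into \eqref{eq:post} collapses the denominator to $1$, leaving $p_\theta=\Phi_{(\epsilon,\delta)}(\tau_u)-\Phi_{(\epsilon,\delta)}(\tau_l)$. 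Here the hypothesis that the kernel uses the \emph{same form} of noise as the DP mechanism is precisely what guarantees the two CDFs $\Phi_{(\epsilon,\delta)}$ coincide, so the BR-DP acceptance rate at this corner equals the DP acceptance rate verbatim.

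Finally I would close the argument by appeal to optimality. The optimal budgeting principle of Section \ref{sec:budgeting} chooses $(\epsilon_y,q)$ to minimize the objective $O(\epsilon_y,q)$ of \eqref{eq:budget}, equivalently to maximize the acceptance rate $1/O$, over the admissible range $\epsilon_y\in[0,\epsilon]$ with a correspondingly feasible $q$. Because the corner $(\epsilon_y,q)=(\epsilon,0)$ lies in this admissible set, the maximized acceptance rate is at least its value there, i.e. at least the DP acceptance rate; equality holds exactly when recycling buys no improvement and the optimum sits at the corner.

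The main thing to be careful about is the boundary of the search domain. I must confirm that $\epsilon_y=\epsilon$ together with $q=0$ is genuinely admissible, and in particular that it is not excluded by an open-interval convention in the ternary search of Algorithm \ref{algo:2} and that no hidden feasibility constraint forces $q>0$ at the kernel budget limit. Once the corner is certified feasible, the remainder is the elementary ``a maximum dominates any feasible point'' inequality and requires no further computation; the substance of the proof is therefore the feasibility check via Theorem \ref{thm:leak_bound} and the matching of the two CDFs under the same-noise hypothesis.
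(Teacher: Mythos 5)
Your proposal is correct and follows essentially the same route as the paper's own proof: both identify the configuration $q=0$, $\epsilon_y=\epsilon$ as a feasible point of the BR-DP budget search at which the objective $O$ reduces to the reciprocal of the DP acceptance rate, and then conclude that the optimized allocation can only do at least as well. Your version is somewhat more careful in spelling out feasibility of the corner via Theorem~\ref{thm:leak_bound} and the role of the same-noise hypothesis, but the underlying argument is identical.
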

\begin{proof}
 Observe that when $q=0$, the $O(\epsilon_y,q)$ becomes:
\begin{equation*}
{(\Phi_{(\epsilon,\delta)}(\tau_{u})-\Phi_{(\epsilon,\delta)}(\tau_{l}))^{-1}}.
\end{equation*}   
Given that $q=0$, the parameter $\epsilon_y = \epsilon$ falls within the searchable region as defined by Algorithm 2. This implies that the worst-case performance of BR-DP occurs when it reduces to its kernel DP. 
\end{proof}

We conduct a numerical comparison of the \textit{acceptance rates} between Gaussian DP and BR-DP with a Gaussian kernel. This involves illustrating the utility-privacy tradeoff by varying $\epsilon$ and plotting the respective \textit{acceptance rates}. We explore two scenarios based on the sensitivity: case (a) with $\Delta_f = 1$ and case (b) with $\Delta_f = 5$. For BR-DP, we determine the parameters using the optimal $\epsilon_y$  allocation algorithm (Algorithm 2) and the optimal $q$ determination algorithm (Algorithm 1). It is notable that BR-DP consistently achieves higher \textit{acceptance rates} compared to DP, with this advantage being more pronounced for queries with higher sensitivity.



\section{Composabilities}\label{sec:Composition}
In the context of differential privacy, composition refers to the property that allows the combination of multiple differentially private mechanisms, each providing a certain level of privacy, and provides a way to calculate the overall level of privacy offered when these mechanisms are applied together. There are two main types of composition in differential privacy: parallel composition and sequential composition. In this section, we discuss composabilities of BR-DP, we first address basic parallel and sequential composition results. Then, focusing on the sequential composition, we provide a tight analysis.
\subsection{Basic Composition}
As a privacy-preserving mechanism achieving DP, the BR-DP framework satisfies most of the basic properties of the DP. Such as post-processing, and linkage properties. etc. Specifically, the following remark presents the parallel composability of BR-DP.
\begin{rmk}
    For $T$ independent BR-DP mechanisms $\mathcal{M}_1,...,\mathcal{M}_T$, each providing $(\epsilon,\delta)$-DP, and each mechanism is applied to a disjoint subset of the overall dataset, then $(\mathcal{M}_1,...,\mathcal{M}_T)$ still provides $(\epsilon,\delta)$-differential privacy.
\end{rmk}
The sequential composability, on the other hand, examines the leakage for a series of independent mechanisms applied to the same dataset.

\begin{rmk}
For $T$ independent BR-DP mechanisms $\mathcal{M}_1,...,\mathcal{M}_T$, each providing $(\epsilon,\delta)$-DP, and each mechanism is applied sequentially to the same dataset, then $(\mathcal{M}_1,...,\mathcal{M}_T)$ provides $(T\epsilon,T\delta)$-DP.
\end{rmk}

Remark 3 indicates that the composed leakage for multiple BR-DP mechanisms grows linearly with the number of mechanisms applied. The following remark shows that the advanced composition for DP still applies to BR-DP:

\begin{rmk}
    The sequence of mechanisms $\mathcal{M}(1:T)$ in Remark 3 satisfies $(T\epsilon(e^{\epsilon}-1)+\sqrt{T}\epsilon\sqrt{2\ln(1/\delta)},\delta)$-DP.
\end{rmk}

\subsection{Tight Composition Analysis}\label{sec:tightcompo}

We next derive more tightened results on the sequential composability of BR-DP. From the PLD of BR-DP, we let
\begin{equation*}
    f_{\Gamma}=f_Z\ast f_R,
\end{equation*}
where $f_R(r)$ denotes the privacy loss distribution caused by the recycling phase, and 
\begin{equation*}
    f_{R}(r) = (1-W)\delta_{\text{Dirac}}(r)+W\delta_{\text{Dirac}}(r-\mathcal{L}),
\end{equation*}
and $\delta_{\text{Dirac}}$ represent the Dirac function such that $\delta_{\text{Dirac}}(r) = 1$, iff $r =0$, otherwise   $\delta_{\text{Dirac}}(r) = 0$. Then, the following proposition describes the PLD of a BR-DP framework after $T$-fold compositions:


\begin{prop}
    Consider a BR-DP framework characterized by a Privacy Loss Distribution (PLD) \( f_{\Gamma}(\gamma) \). After undergoing a \( T \)-fold non-adaptive composition, the resulting PLD  can be expressed as:
    \begin{equation*}
        f^T_{\Gamma}(\gamma) = (f_{\Gamma} \ast^T f_{\Gamma})(\gamma),
    \end{equation*}
    where \( \ast^T \) denotes the \( T \)-fold convolution operation of the PLD. Furthermore, the closed-form expression of this PLD is given by:
    \begin{equation}
        \sum_{k=0}^T \binom{T}{k} (1-W)^k W^{T-k} (f_Z \ast^T f_Z) (z-(T-k)\mathcal{L}),
    \end{equation}
    with \( \mathcal{L} \) and \( W \) being parameters defined previously in \eqref{eq:l} and \eqref{eq:W}, respectively.
\end{prop}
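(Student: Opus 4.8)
The plan is to separate the claim into its two assertions: that $T$-fold non-adaptive composition convolves the per-step privacy loss distributions, and that the resulting $T$-fold convolution of $f_\Gamma$ admits the stated binomial closed form. The first is an instance of the standard PLD composition theorem, and the second is an elementary computation exploiting the factorization $f_\Gamma = f_Z \ast f_R$ already recorded above.

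\textbf{Step 1 (composition as convolution).} For $T$ independent copies of the identical BR-DP mechanism applied to the same dataset, the privacy loss of the composed release on a fixed neighboring pair is the sum of the $T$ per-round privacy losses, each distributed as $f_\Gamma$. Since the rounds are mutually independent given the data, the density of this sum is the convolution of the marginals, so $f^T_\Gamma = f_\Gamma \ast^T f_\Gamma$. The point requiring care is that a single worst-case (dominating) pair must govern every round; because all $T$ mechanisms share the same $(\epsilon_y,\delta_y)$, recycling rate $q$, and sensitivity $\Delta_f$, the pair $(f_N(\cdot\mid y), f_N(\cdot\mid y+\Delta_f))$ dominates identically in each round, so the convolution is tight. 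I would invoke the PLD/dominating-pair composition results cited earlier to make this rigorous.

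\textbf{Step 2 (closed form).} Substituting $f_\Gamma = f_Z \ast f_R$ and using commutativity and associativity of convolution gives $f_\Gamma \ast^T f_\Gamma = (f_Z \ast^T f_Z) \ast (f_R \ast^T f_R)$. The recycling term $f_R$ is the law of a variable equal to $0$ with probability $1-W$ and to $\mathcal{L}$ with probability $W$; its $T$-fold self-convolution is therefore a scaled $\mathrm{Binomial}(T,W)$ supported on the grid $\{0,\mathcal{L},\dots,T\mathcal{L}\}$,
\[
(f_R \ast^T f_R)(r) = \sum_{j=0}^T \binom{T}{j} W^j (1-W)^{T-j}\, \delta_{\text{Dirac}}(r - j\mathcal{L}).
\]
Convolving with $f_Z \ast^T f_Z$ and noting that convolution against $\delta_{\text{Dirac}}(\cdot - j\mathcal{L})$ is a translation by $j\mathcal{L}$ yields
\[
f^T_\Gamma(z) = \sum_{j=0}^T \binom{T}{j} W^j (1-W)^{T-j}\, (f_Z \ast^T f_Z)(z - j\mathcal{L}).
\]
Reindexing with $k = T-j$ converts this line-for-line into the claimed expression $\sum_{k=0}^T \binom{T}{k} (1-W)^k W^{T-k} (f_Z \ast^T f_Z)(z-(T-k)\mathcal{L})$.

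The only genuine obstacle is Step 1: establishing that non-adaptive composition is \emph{exactly} the convolution of the single-step BR-DP PLDs, i.e. that the product of the per-round dominating pairs is dominating for the composed mechanism. Step 2 and the final reindexing are routine binomial algebra once that reduction is in hand.
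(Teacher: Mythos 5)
Your proposal is correct and follows essentially the same route as the paper's own proof: the appendix argument likewise uses independence to factor the joint output probabilities so that the composed privacy loss is a sum and its PLD a convolution, then exploits the factorization $f_{\Gamma} = f_Z \ast f_R$ and computes $f_R \ast^T f_R$ as a binomially weighted sum of shifted Dirac masses. Your explicit reindexing $k = T - j$ and your flagging of the dominating-pair issue in Step 1 are slightly more careful than the paper, which simply fixes the worst-case neighboring pair up front, but the substance is identical.
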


Then, regarding the privacy profile of BR-DP after composition, we have the following Theorem.
\begin{thm}\label{thm:compo}
    Consider a T-fold non-adaptive composition of a BR-DP framework. The composition is tightly $(\epsilon,\delta)$-DP for $\delta(\epsilon)$ given by
    \begin{equation*}
        \delta_{\Gamma}^T(\epsilon)=\sum_{k=0}^T\binom{T}{k} (1-W)^{k}W^{T-k}\delta^T_{Z}(\epsilon-(T-k)\mathcal{L}).
    \end{equation*}
   where $\delta_Z^T(z)$ denotes the privacy profile of the kernel DP mechanism after $T$-fold composition. 
\end{thm}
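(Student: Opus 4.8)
The plan is to apply the tight hockey-stick privacy-profile formula $\delta(\epsilon)=\int_{\epsilon}^{\infty}(1-\exp(\epsilon-\gamma))f(\gamma)\,d\gamma$ directly to the composed PLD $f_{\Gamma}^{T}$, whose closed form is already supplied by the preceding Proposition. Since this formula reads off the \emph{exact} privacy profile from a given PLD, the entire argument reduces to manipulating one integral against the explicit mixture representation of $f_{\Gamma}^{T}$, followed by a separate justification of the tightness claim.

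First I would recall that, writing $f_{\Gamma}=f_Z\ast f_R$ with $f_R(r)=(1-W)\delta_{\text{Dirac}}(r)+W\delta_{\text{Dirac}}(r-\mathcal{L})$, the $T$-fold convolution factorises as $(f_{\Gamma}\ast^{T}f_{\Gamma})=(f_Z\ast^{T}f_Z)\ast(f_R\ast^{T}f_R)$, and $(f_R\ast^{T}f_R)(r)=\sum_{k=0}^{T}\binom{T}{k}(1-W)^{k}W^{T-k}\delta_{\text{Dirac}}(r-(T-k)\mathcal{L})$, because convolving the two-atom recycling loss distribution simply adds the shifts with binomial multiplicity. Substituting the resulting closed form into the hockey-stick integral and exchanging the finite sum with the integral (licit, as the sum has only $T+1$ terms) gives
\[
\delta_{\Gamma}^{T}(\epsilon)=\sum_{k=0}^{T}\binom{T}{k}(1-W)^{k}W^{T-k}\int_{\epsilon}^{\infty}(1-\exp(\epsilon-\gamma))\,(f_Z\ast^{T}f_Z)\bigl(\gamma-(T-k)\mathcal{L}\bigr)\,d\gamma.
\]

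The core step is then a change of variables $\gamma\mapsto\gamma-(T-k)\mathcal{L}$ in each summand. Rewriting $\epsilon-\gamma=\bigl(\epsilon-(T-k)\mathcal{L}\bigr)-\bigl(\gamma-(T-k)\mathcal{L}\bigr)$ inside the exponential, the $k$-th integral becomes exactly $\int_{\epsilon-(T-k)\mathcal{L}}^{\infty}(1-\exp((\epsilon-(T-k)\mathcal{L})-\gamma'))(f_Z\ast^{T}f_Z)(\gamma')\,d\gamma'=\delta_{Z}^{T}(\epsilon-(T-k)\mathcal{L})$, i.e. the privacy profile of the $T$-fold composed kernel evaluated at a budget shifted left by $(T-k)\mathcal{L}$. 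Collecting terms yields the claimed identity, consistent with the intuition that each of the $T-k$ recycling events contributes an additive leakage $\mathcal{L}=-\log(1-q)>0$.

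The part demanding the most care is the tightness assertion rather than the algebra. I would argue that PLD accounting returns the exact profile, not merely an upper bound, provided $f_{\Gamma}$ is constructed from the dominating (worst-case) neighbouring pair; this is precisely why $W$ in \eqref{eq:W} is defined as a \emph{maximum} over the two admissible directions of the sensitivity shift $\Delta_f$. Under non-adaptive composition the per-step randomness (kernel noise and recycling coin) is independent of all earlier outputs, so the dominating pair for the composition is the product of the per-step dominating pairs and the individual PLDs genuinely convolve; hence the profile computed from $f_{\Gamma}^{T}$ is achievable and the composition is tight. The main obstacle is thus to establish that this worst-case pair is preserved under composition, which is exactly where the non-adaptivity hypothesis becomes essential.
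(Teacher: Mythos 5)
Your proposal is correct and follows essentially the same route as the paper: the paper likewise obtains the composed PLD as the binomial mixture $(f_Z\ast^T f_Z)\ast(f_R\ast^T f_R)$ by convolving the two-atom recycling loss distribution (its Proposition on $T$-fold composition, proved in the appendix), and then converts to the privacy profile via the hockey-stick integral with exactly the shift-of-variable identity you use, mirroring its derivation of $\delta_\Gamma(\epsilon)=(1-W)\delta_Z(\epsilon)+W\delta_Z(\epsilon-\mathcal{L})$ in the single-query case. Your additional paragraph justifying the tightness claim (dominating pair preserved under non-adaptive composition) is a point the paper asserts but does not argue, so it is a welcome supplement rather than a deviation.
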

It is important to emphasize that the sequential composition analysis can be integrated with other advanced composition accounting algorithms applicable to the kernel DP mechanism. This integration results in a final expression that manifests as a linear combination with specific coefficients, as delineated in Theorem 3. The detailed accounting process for this integration can be methodically extracted from Algorithm 3.

 \begin{algorithm}[t]
\caption{Composition accountant for BR-DP }
\hspace*{\algorithmicindent}
\textbf{Input:} $q$, $\epsilon_y$, $\delta_y$, $T$, $\Delta_f$, target $\epsilon$.\\
 \hspace*{\algorithmicindent} \textbf{Output:} $\delta_{\Gamma}^T(\epsilon)$.
\begin{algorithmic}[1]
\State $\{\tau_l, \tau_u, \tau'_l, \tau'_u\}$ $\gets$ $\Delta_f$ and $\theta$;
\State $\Phi_{(\epsilon_y, \delta_y)}(\cdot)$ $\gets$ $\epsilon_y$, $\delta_y$;
\State $p_{\theta}$ $\gets$ \eqref{eq:pytheta}, $\bar{p}_{\theta}$  $\gets$ \eqref{eq:barpytheta};
\State $W$ $\gets$ \eqref{eq:W}, $\mathcal{L}$ $\gets$ \eqref{eq:l};
\State $\delta_Z^T(\epsilon)$ $\gets$ privacy profile ($\epsilon_y$, $\delta_y$, $T$) such as \cite{Balle2018ImprovingTG}; 
\State initialize $\delta_{\Gamma}^T = 0$;
\For{$1\le k \le T$}
\State $\delta' = \delta^T_{Z}(\epsilon-(T-k)\mathcal{L} $;
\State $\delta_{\Gamma}^T \gets \delta_{\Gamma}^T + \binom{T}{k} (1-W)^kW^{T-k}\delta'$;
\EndFor
\State Return $\delta_{\Gamma}^T$
\end{algorithmic}
\end{algorithm}

\begin{rmk}
Algorithm 3 exhibits a linear computational complexity with respect to $T$, denoted as $\mathcal{O}(T)$.
\end{rmk}

We present a numerical comparison for BR-DP and DP composed leakage with Gaussian and Laplacian kernel, with results shown in Fig. \ref{compo_gau} and Fig. \ref{comp_lap}, respectively. For each plot, we let $\epsilon = 1$, $\theta = 1$, $\Delta_f = 1$. Then $\epsilon_y$ and $q$ are derived from Algorithm \ref{algo:2} and Algorithm 1 respectively. For the privacy profile of the Gaussian kernel and Gaussian DP mechanism, we deploy the analytic Fourier Accounting algorithm proposed in \cite{Zhu2021OptimalAO}; For the privacy profile of the Laplacian kernel and Laplacian DP mechanism, we apply RDP composition accounting algorithm in \cite{8049725}. We observe that BR-DP framework effectively reduces composed leakage compared to conventional DP mechanisms.

\begin{figure}[t]
\centering 
\subfigure[Composition leakage with Gaussian kernel, composed kernel DP leakage measured by Analytic Fourier Accounting in \cite{Balle2018ImprovingTG}.]
{\includegraphics[width=0.4\textwidth]{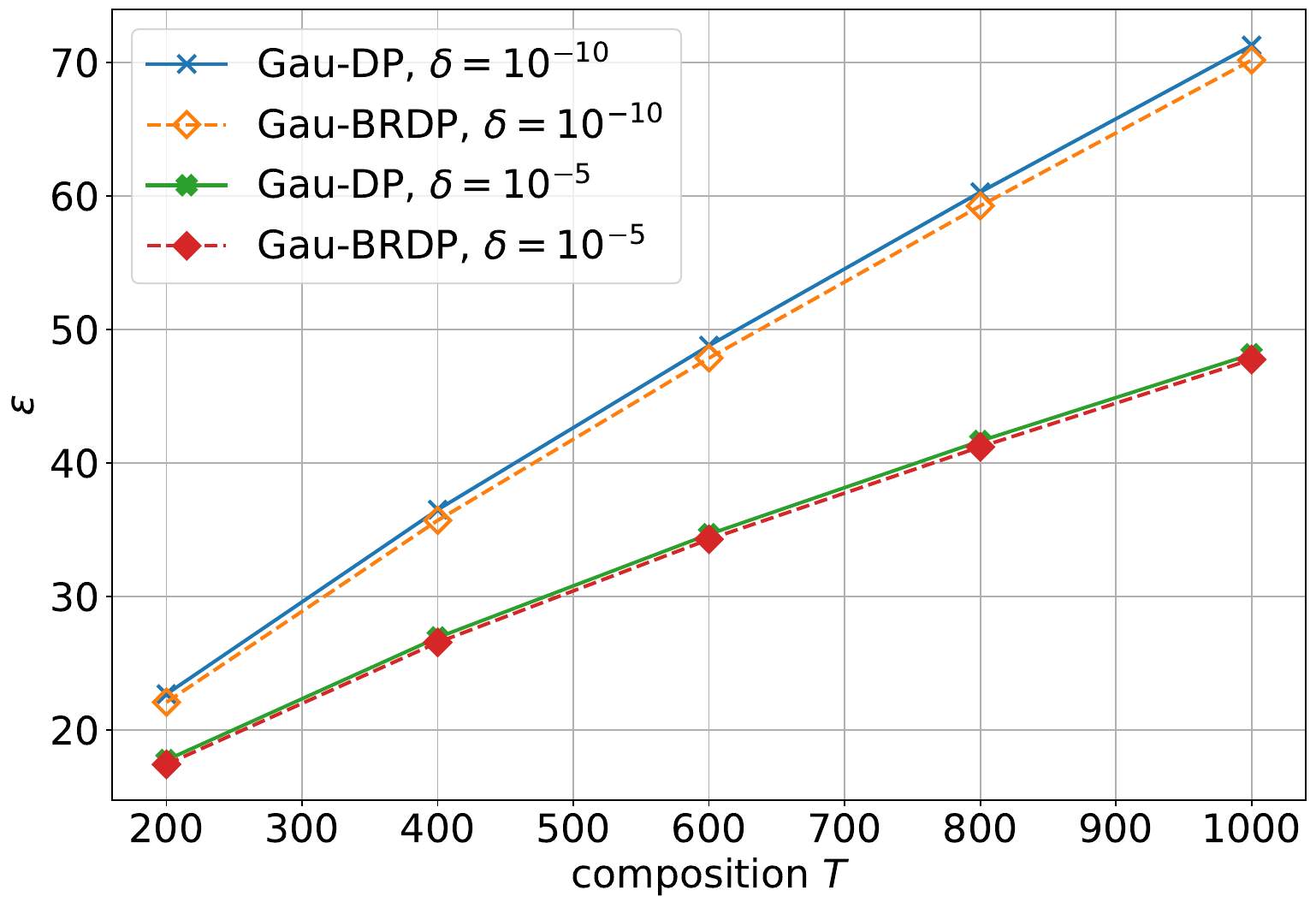}
\label{compo_gau}}
\subfigure[Composition leakage with Laplacian kernel, composed kernel DP leakage measured by RDP in \cite{8049725}.]
{\includegraphics[width=0.41\textwidth]{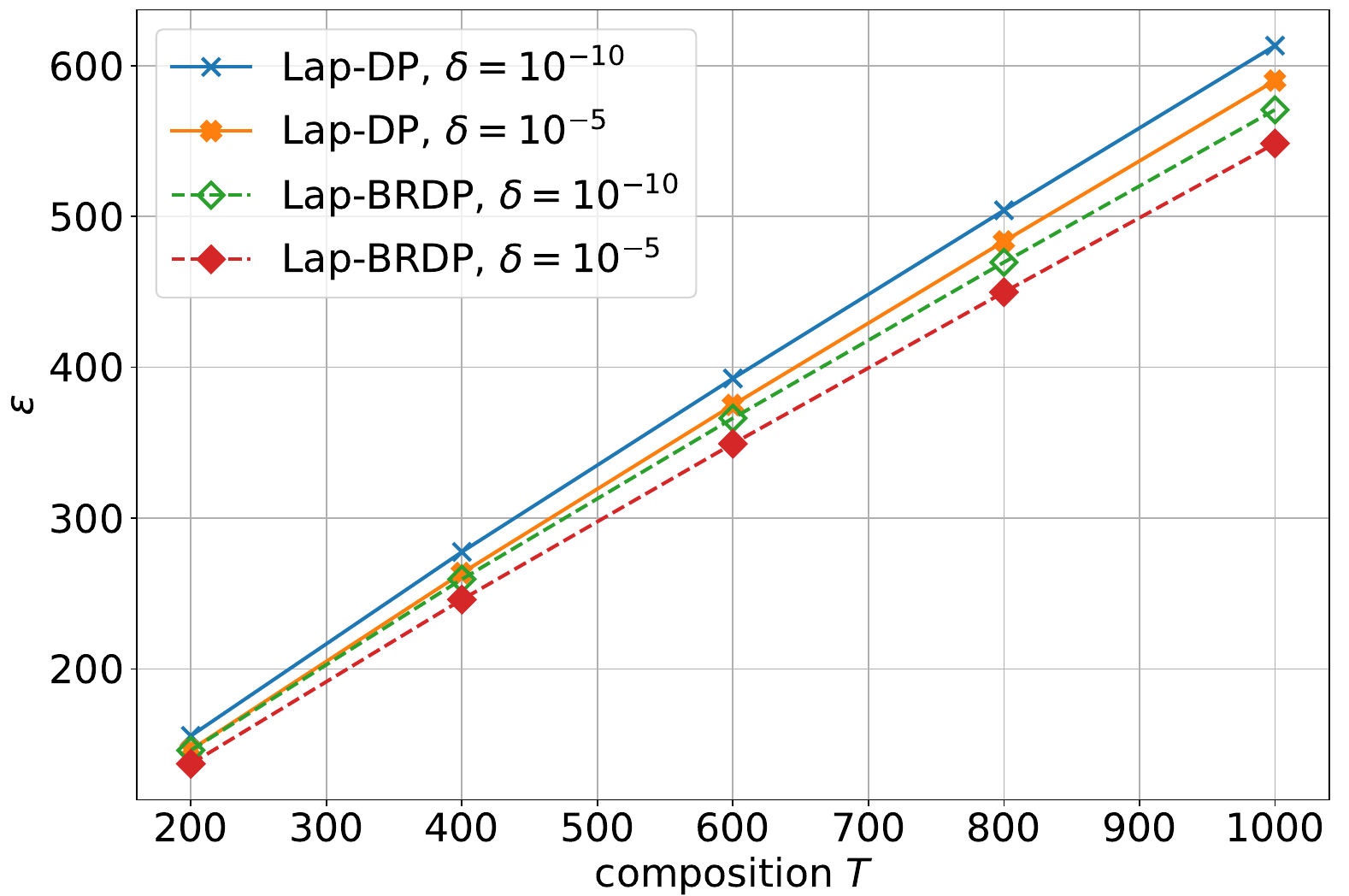}
\label{comp_lap}}
\caption{Composed leakage comparison for BR-DP and DP with Gaussian kernel and Laplacian kernel, respectively. For each plot, $\epsilon = 1$, $\theta = 1$, $\Delta_f = 1$. $\epsilon_y$ and $q$ are derived from Algorithm 3 and Algorithm 2, respectively.}
\end{figure}

\section{Subsampling for BR-DP}



As illustrated in Fig. \ref{ar_vs_ep}, the BR-DP consistently surpasses the DP in \textit{acceptance rate}. Nevertheless, BR-DP may require a large \(\epsilon\) budget to elevate the \textit{acceptance rate} beyond a specific threshold. For instance, as depicted in Fig.\ref{ar_vs_ep}, BR-DP needs \(\epsilon > 4.5\) to achieve an \textit{acceptance rate} over \(0.9\). However, under stringent privacy regulations or for practical use, large \(\epsilon\) values are often not permitted. To mitigate this, we investigate the privacy amplification by subsampling. This technique provides deniability for each individual by randomizing individual inclusion in the dataset for data aggregation, thereby enhancing privacy. In BR-DP, the strategy involves starting with a small \(\epsilon\) budget and amplifying it to a larger effective budget \(\epsilon'\) through aggregating a data subset. This method facilitates utility improvement while adhering to the amplified \(\epsilon'\).

\subsection{Privacy Amplification by Subsampling}

The following Theorem states the privacy amplification by subsampling for BR-DP.
\begin{thm}[Privacy Amplification by Subsampling for BR-DP]
    Let $X_{s}$ be a random subset from $X$ that $|X_s|<|X|$. If a BR-DP framework $\mathcal{M}$ is $(\epsilon,\delta)$-DP for $X$, then the subsampled $\mathcal{M}^s$ takes as input of $X_s$, is $(\epsilon',\delta')$-DP, where $\epsilon' = \log[(e^{\epsilon}-1)p + 1]$, $\delta' = p\delta$, and $p$ denotes the sampling rate: $|X_s|/|X|$.
\end{thm}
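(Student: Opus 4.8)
The plan is to treat the BR-DP framework $\mathcal{M}$ as a black-box $(\epsilon,\delta)$-DP mechanism and to invoke the generic privacy-amplification-by-subsampling argument, since the claimed amplification depends only on the $(\epsilon,\delta)$ guarantee of $\mathcal{M}$ and not on its internal recycling structure. First I would record that $\mathcal{M}$ is a well-defined $(\epsilon,\delta)$-DP randomized map, which is exactly what the budgeting analysis of the previous sections (the proposition following Algorithm 1) establishes. The structural fact I would lean on is that the recycling rate $q$ is data-independent, so the law of $\mathcal{M}(X_s)$ is simply the fixed randomized map applied to whatever input it receives; the subsampling is a pure preprocessing step and the composed mechanism is $\mathcal{M}^s = \mathcal{M}\circ\mathsf{subsample}_p$.

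Next, for a pair of neighbouring datasets differing in one record, say $X' = X\cup\{x^\ast\}$, I would decompose the subsampled output distribution. Writing $\mu$ for the law of $\mathcal{M}(S(X))$ and $\mu'$ for the law of $\mathcal{M}(S(X'))$, inclusion of $x^\ast$ with probability $p$ yields the mixture identity $\mu' = (1-p)\mu + p\,\nu$, where $\nu$ is the law of $\mathcal{M}(S(X)\cup\{x^\ast\})$. Because $S(X)$ and $S(X)\cup\{x^\ast\}$ are themselves neighbours, the $(\epsilon,\delta)$-DP of $\mathcal{M}$ gives, for every event $E$, both $\nu(E)\le e^{\epsilon}\mu(E)+\delta$ and $\nu(E)\ge e^{-\epsilon}(\mu(E)-\delta)$.

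Finally I would substitute these bounds into the mixture. The forward direction is immediate: $\mu'(E) = (1-p)\mu(E)+p\,\nu(E) \le \bigl(1+p(e^{\epsilon}-1)\bigr)\mu(E)+p\delta$, which is precisely $e^{\epsilon'}\mu(E)+\delta'$ for $\epsilon'=\log[(e^{\epsilon}-1)p+1]$ and $\delta'=p\delta$. For the reverse direction I would use $\nu(E)\ge e^{-\epsilon}(\mu(E)-\delta)$ to get $\mu'(E)\ge (1-p+pe^{-\epsilon})\mu(E)-pe^{-\epsilon}\delta$, solve for $\mu(E)$, and close the bound with the elementary identity $\bigl(1+p(e^{\epsilon}-1)\bigr)\bigl(1-p(1-e^{-\epsilon})\bigr) = 1+p(1-p)(e^{\epsilon}-2+e^{-\epsilon}) \ge 1$, together with $e^{-\epsilon}\le 1$ to control the additive term by $p\delta$. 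Both directions then collapse to the stated $(\epsilon',\delta')$ guarantee.

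The main obstacle is establishing the mixture identity $\mu' = (1-p)\mu + p\,\nu$ exactly. This is transparent for independent (Poisson) subsampling under the add/remove relation, which is the regime that produces the stated bound most cleanly, but the statement's definition $p = |X_s|/|X|$ literally refers to fixed-size sampling without replacement, for which the two subsampling distributions are not independent. The hard part is therefore to supply a coupling of $S(X)$ and $S(X')$ under which they draw an identical subset whenever $x^\ast$ is not selected---an event of probability exactly $1-p$---so that the law conditioned on selecting $x^\ast$ reduces to a neighbour of $S(X)$ and the same algebra applies. A secondary point worth verifying explicitly is that the iterative recycling of BR-DP does not interfere with the black-box treatment; since the released $\mathcal{M}(X_s)$ is a single randomized output and $q$ does not depend on the data, the $(\epsilon,\delta)$ property transfers through subsampling without modification.
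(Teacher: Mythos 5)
Your proposal is correct and is essentially the proof the paper intends: the paper omits the argument entirely, stating only that it ``parallels the methodology used in the proof for DP subsampling'' in the cited reference, and your black-box mixture decomposition $\mu' = (1-p)\mu + p\,\nu$ with the forward and reverse substitutions is exactly that standard argument, carried out correctly. The subtlety you flag about Poisson versus fixed-size sampling is real, and the paper itself concedes it in the text immediately after the theorem (``different sampling strategies \ldots{} achieve slightly different amplification factors''), so you have not missed anything the authors address.
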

The derivation of this proof parallels the methodology used in the proof for DP subsampling in \cite{10.5555/3327345.3327525}. To maintain brevity, we omit the detailed proof here. Also note that different sampling strategies such as Poisson sampling and fixed-size sampling, achieve slightly different amplification factors. However, since the detailed sampling technique is not the main focus of this paper, we adopt a general form of $p = |X_s|/|X|$ as the sampling rate. 

It is noteworthy that while the sampling phase amplifies data privacy, it concurrently bases subsequent processes on a subset of the raw dataset. This selective usage inherently impacts the overall data utility. Typically, a higher sampling probability (\( p \)) enhances the accuracy of aggregation, yet necessitates the introduction of substantial noise during the perturbation phase as the privacy budget is not sufficiently amplified. Therefore, the added noise in perturbation becomes the predominant factor in utility degradation. Conversely, a lower \( p \) allows for the injection of lesser noise during data perturbation. In this scenario, the loss of information during the sampling phase becomes the primary source of error. Intuitively, this establishes a trade-off between information loss in the sampling phase and the perturbation phase. There exists an optimal sampling probability \( p \) that achieves the most favorable balance, effectively minimizing the overall error.

To derive the optimal sampling rate, we note that error caused by a BR-DP framework with subsampling is dependent on the data distribution and the query type, in the following, we assume that each individual's data is sampled from a Gaussian distribution. We also consider the three most common types of query: counting query, average query, and summation query. For each query type, we first represent the error in the sampling phase as a random variable $\mathcal{E}$, and then provide the accuracy analysis with BR-DP framework parameters. For ease of conveying our ideas, we assume the BR-DP framework is based on a Gaussian kernel DP mechanism.

\subsection{Analyzing Utility Across Various Query Types}

In the following, we denote $x_i$ as an individual's data in the dataset, where $i$ denotes the individual's index in the dataset. As mentioned above, we assume each $x_i$ is sampled from a Gaussian distribution with mean $\mu$, and a standard deviation of $\sigma$. Denote $\mathcal{T}$ as the set of users' index in the original dataset. i.e., $X= \{x_i\}_{i\in\mathcal{T}}$, and denote $\mathcal{S}$ as the subset of users' index in the randomly sampled dataset. i.e., $X_S= \{x_i\}_{i\in\mathcal{S}}$. We next express the error caused by the subsampling for different queries as noisy random variables, we provide detailed query function expressions and subsampled results in the appendix.

The noise introduced in the subsampling phase can be derived as:
\begin{equation*}
\begin{aligned}
    &\mathcal{E}_{\text{sum}} = \sum_{i\in{\mathcal{T}}}x_i - \frac{1}{p}\sum_{i\in{\mathcal{S}}}x_i,\\
    &\mathcal{E}_{\text{avg}} = \frac{1}{|X|}\sum_{i\in{\mathcal{T}}}x_i - \frac{1}{|X_s|}\sum_{i\in{\mathcal{S}}}x_i,\\
    &\mathcal{E}_{\text{cnt}} = \sum_{i\in{\mathcal{T}}} \mathbbm{1}_{\{x_i\in \mathcal{C}\}} - \frac{|X|}{|X_s|}\sum_{i\in{\mathcal{S}}} \mathbbm{1}_{\{x_i\in \mathcal{C}\} },
\end{aligned}
\end{equation*}
for summation query, average query, and for counting query, respectively. Note that, for counting queries, $\mathcal{C}$ denotes the subset of each individual's input support, such as $x_i$ is counted as $1$, iff $x_i \in \mathcal{C}$.

The following proposition describes the distribution of $\mathcal{E}$ for different types of queries.

\begin{prop}
    The noisy random variable $\mathcal{E}$ for the three types of queries above are zero-mean Gaussian distribution, with standard deviations of $\sigma_{\text{sum}}= \sigma_x \sqrt{|X|(1-p)/p}$ for summation query, $\sigma_{\text{avg}}= \sigma_x \sqrt{(1-p)/(p|X|)}$ for averaging query, and $\sigma_{\text{cnt}}=\sqrt{|X|(1-p)p_c(1-p_c)/p}$ for counting query, where $p_c = \operatorname{Pr}(x_i\in{\mathcal{C}})$.
\end{prop}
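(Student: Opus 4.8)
The plan is to treat the sampling as \emph{fixed-size} (so that $m := |X_s| = p|X|$ is deterministic and equals $pn$, with $n := |X|$) and to exploit the fact that, once the sampled index set $\mathcal{S}$ is fixed, each error $\mathcal{E}$ is merely a linear combination of the independent per-individual variables. First I would rewrite every error by splitting $\mathcal{T} = \mathcal{S} \cup (\mathcal{T}\setminus\mathcal{S})$. For the summation this yields $\mathcal{E}_{\text{sum}} = \sum_{i\in\mathcal{T}\setminus\mathcal{S}} x_i - \frac{1-p}{p}\sum_{i\in\mathcal{S}} x_i$, i.e. coefficient $1$ on the $n-m=(1-p)n$ out-of-sample terms and coefficient $-(1-p)/p$ on the $m=pn$ in-sample terms; analogous rewrites (with coefficients $1/n$ and $1/n-1/m$) hold for the averaging and counting errors.

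For the summation and averaging queries I would condition on $\mathcal{S}$ and invoke the fact that a linear combination of independent Gaussians is exactly Gaussian, so $\mathcal{E}\mid\mathcal{S}$ is Gaussian. I would then compute the conditional mean and show that the $\mu$-contributions of the in-sample and out-of-sample blocks cancel exactly (this cancellation is precisely why the unbiased rescalings $1/p$ and $|X|/|X_s|$ were chosen), giving mean zero; and I would compute the conditional variance, where the algebra collapses as $\left[(1-p)n + \frac{(1-p)^2}{p}n\right]\sigma_x^2 = \frac{(1-p)n}{p}\sigma_x^2$ for the sum, and similarly to $\frac{1-p}{pn}\sigma_x^2$ for the average. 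The crucial observation is that neither the conditional mean nor the conditional variance depends on \emph{which} particular $\mathcal{S}$ of size $m$ was drawn --- only $m = pn$ enters --- so the conditional law is the identical $\mathcal{N}(0,\sigma^2)$ for every $\mathcal{S}$, and hence the marginal law over the random choice of $\mathcal{S}$ is that same Gaussian. This settles the summation and averaging cases exactly.

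For the counting query I would run the identical decomposition with $y_i := \mathbbm{1}_{\{x_i\in\mathcal{C}\}}$, which are i.i.d.\ $\mathrm{Bernoulli}(p_c)$, in place of $x_i$; the conditional mean is again zero and the conditional variance collapses to $\frac{(1-p)n}{p}\,p_c(1-p_c)$ by the same computation with $\sigma_x^2$ replaced by $\mathrm{Var}(y_i)=p_c(1-p_c)$. The main obstacle is that, unlike the Gaussian cases, $\mathcal{E}_{\text{cnt}}$ is a scaled difference of two independent binomials and is therefore \emph{not} exactly Gaussian: normality here is only asymptotic, so I would invoke the central limit theorem to justify the Gaussian approximation for large $|X|$ and to identify $\sigma_{\text{cnt}}$ with the exact conditional standard deviation just computed.

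Finally I would justify committing to the fixed-size model from the outset. Under Poisson sampling $|X_s|$ is random and, writing the inclusion indicators as $b_i \sim \mathrm{Bernoulli}(p)$, the summation error becomes $\sum_i x_i(1 - b_i/p)$, a sum of products of a Gaussian and a Bernoulli --- hence a mixture rather than a Gaussian --- whose per-term variance equals $(\mu^2+\sigma_x^2)\frac{1-p}{p}$ and so carries a spurious $\mu^2$ term. Since the stated standard deviations are $\mu$-free and exactly match the fixed-size computation, the fixed-size convention is the correct one, and I would note this explicitly to close the gap between the two sampling regimes.
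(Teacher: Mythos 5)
Your proof is correct and follows essentially the same route as the paper's: the identical split of $\mathcal{T}$ into $\mathcal{S}$ and its complement, the same cancellation of the $\mu$-terms in the mean, the same variance collapse to $\sigma_x^2|X|(1-p)/p$ (and its analogues), and the same appeal to the CLT for the counting query. Your additional care --- conditioning explicitly on $\mathcal{S}$, and flagging that the stated $\mu$-free variances presuppose fixed-size rather than Poisson sampling --- is a sound tightening of assumptions the paper leaves implicit, but it does not change the argument.
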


Observe that for a given sampling rate $p$, $\sigma_{\text{sum}}$ and $\sigma_{\text{cnt}}$ increases with $|X|$ while $\sigma_{\text{avg}}$ decreases with $|X|$. On the other hand, for a given $|X|$, the standard deviations of all three types of query are proportional to $\sqrt{(1-p)/p}$, which monotonically decreases with $p$.

\subsection{Optimal Subsampling Rate}

Combined with subsampling, the privacy-protection mechanism can be updated by appending the noise from subsampling, and the noise from the perturbation to the raw answer $Y$:
\begin{equation}
    Y_n = Y + \mathcal{E} + N,
\end{equation}
where $\mathcal{E}$ denotes the noise calibrated in the subsampling phase and $N$ represents the noise injected by the DP kernel with an amplified privacy budget. We next tackle the dilemma of the optimal sampling rate $p$ for a BR-DP framework with a Gaussian kernel DP.

 \begin{algorithm}[t]
\caption{Find Optimal $p$ for Subsampled BR-DP with Gaussian Kernel}
\hspace*{\algorithmicindent}
\textbf{Input:} $\epsilon$, $\delta$, $\Delta_f$, $N$, $\sigma_x$, tol.\\
 \hspace*{\algorithmicindent} \textbf{Output:}$p$.
\begin{algorithmic}[1]
    \State $p_{low} \gets 0$;
    \State $p_{up} \gets 1$;
    \While{$p_{up} - p_{low} > \text{tol}$}
        \State $p_1 \gets p_{low} + \frac{p_{up} - p_{low}}{3}$;
        \State $p_2 \gets p_{up} - \frac{p_{up} - p_{low}}{3}$;
        \State $\epsilon_1, \delta_1$ ($\epsilon_2, \delta_2$) $\gets$ theorem 4 with $p_1$ ($p_2$);
        \State calibrate  $\epsilon_1, \delta_1$ ($\epsilon_2, \delta_2$) to  $\epsilon'_1, \delta$ ($\epsilon'_2, \delta$);
        \State $\sigma_{\mathcal{E}1}$  ($\sigma_{\mathcal{E}2}$) $\gets$ prop. 7 with $p_1$ ($p_2$);
        \State $\epsilon_{y1}$ ($\epsilon_{y2}$) $\gets$ Algo. 2 with $\epsilon_1,\delta_1$ ($\epsilon_2,\delta_2$), $O$ $\gets$ \eqref{eq:budget2};
        \State $q_1$ ($q_2$) $\gets$ Algo. 1 with $\epsilon_1, \delta_1$ ($\epsilon_2, \delta_2$);
        \If{$O(\epsilon_{y1}, q_1, p_1)> O(\epsilon_{y2}, q_2, p_2)$)}
            \State $p_{low} \gets p_1$;
        \Else
            \State $p_{up} \gets p_2$;
        \EndIf
    \EndWhile
    \State $p \gets (p_{up} + p_{low})/ 2$;
    \State \Return $p$
\end{algorithmic}
\end{algorithm}

\begin{figure}[t]
\centering 
\subfigure[Optimal sampling rate as a factor of the cardinally of the input dataset.]
{\includegraphics[width=0.4\textwidth]{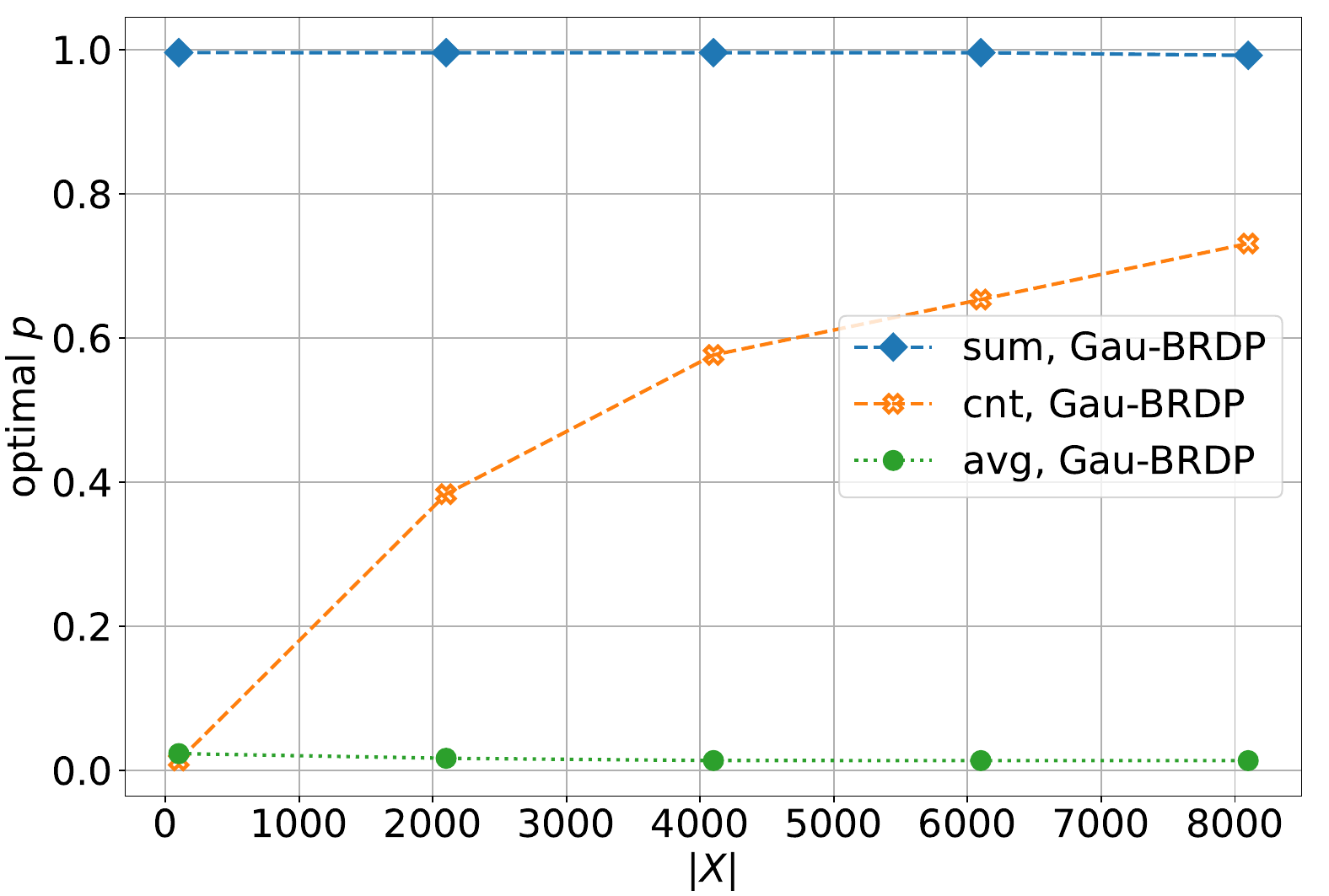}
\label{p_vs_N}}
\subfigure[Acceptance rate as a factor of the cardinally of the input dataset.]
{\includegraphics[width=0.4\textwidth]{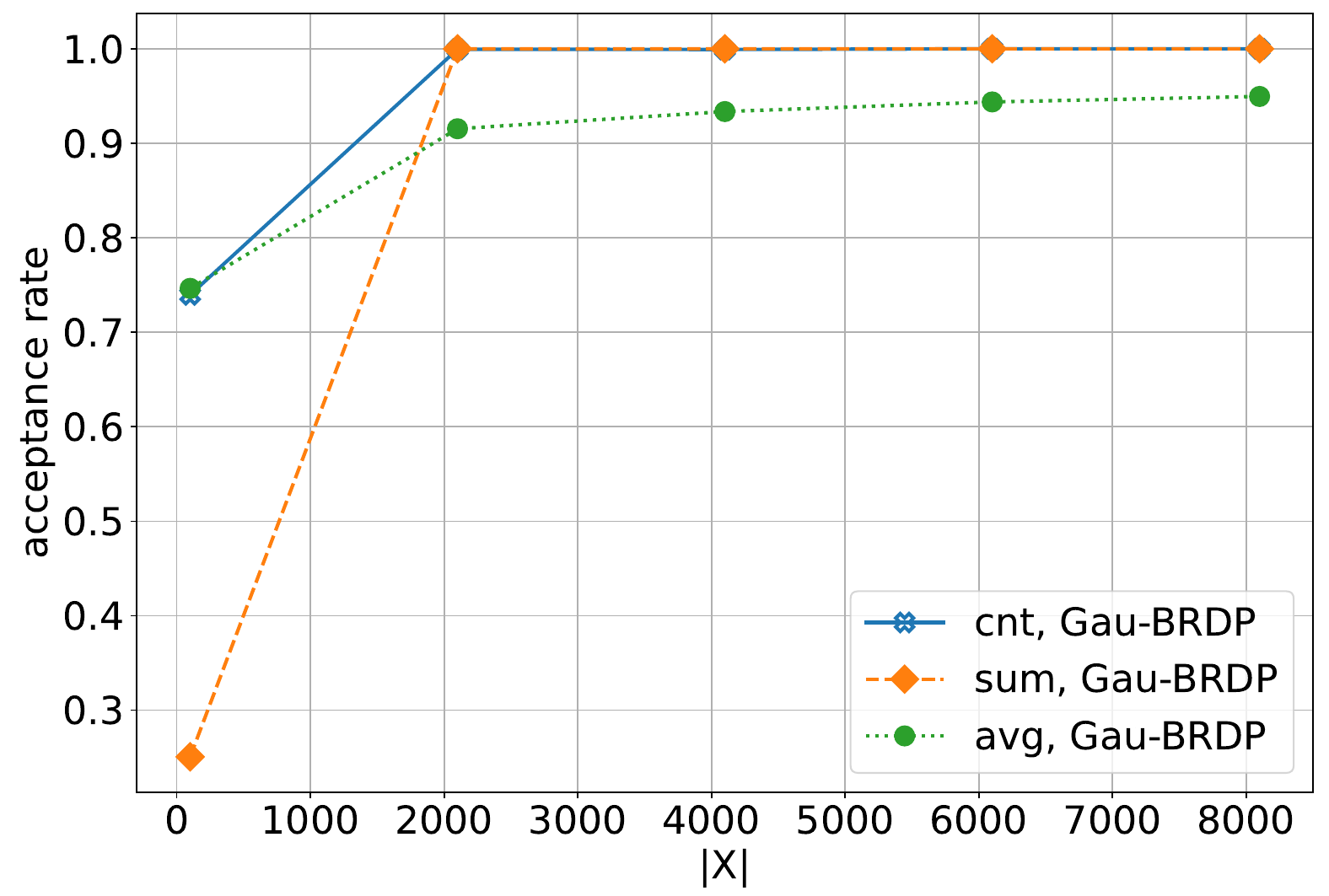}
\label{ar_vs_N}}
\caption{Subsampling evaluation for different mechanisms, $\sigma_x = 10$, $\epsilon = 0.1$, $\delta = 10^{-5}$, $p_c = 0.1$}
\vspace{-10pt}
\end{figure}

The next proposition presents the updated utility function with subsampling.

\begin{figure*}[htp]
\centering 
\subfigure[Summation query for Adult ($\Delta_f = 22, \theta = 5$ )]
{\includegraphics[width=0.45\textwidth]{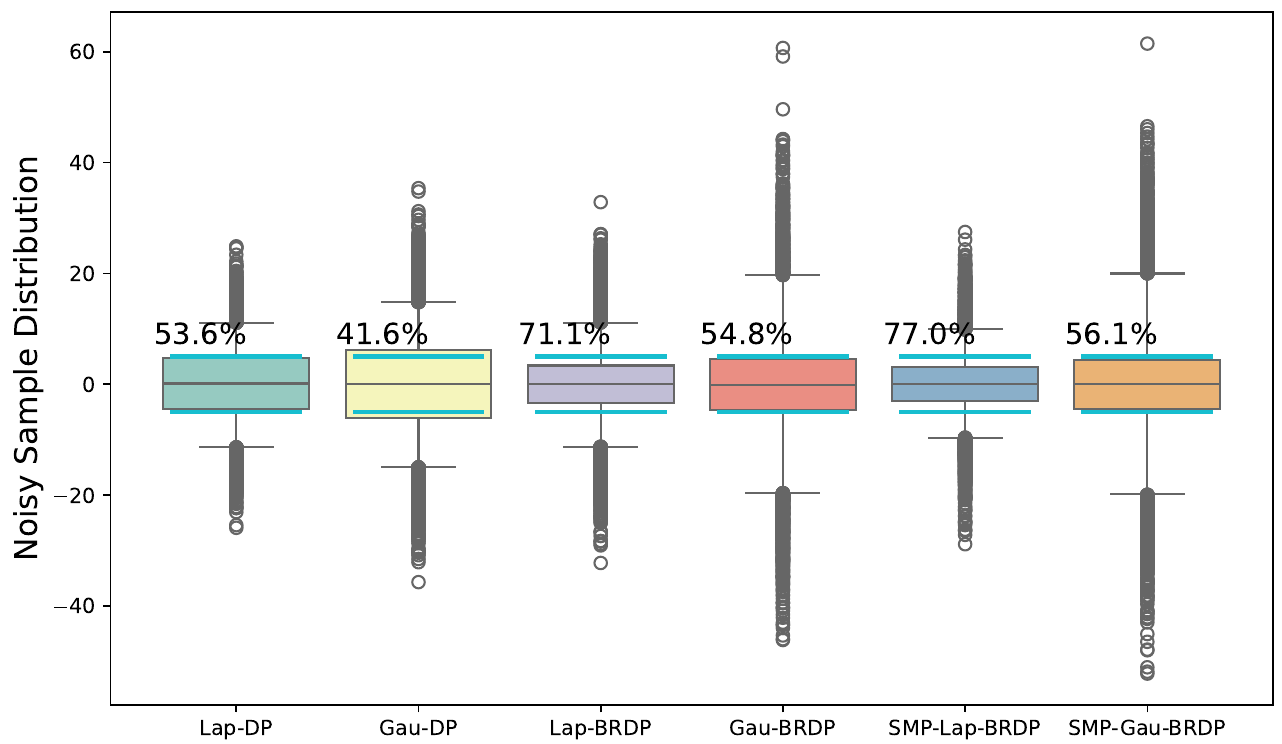}
\label{fig:sum1}}
\subfigure[Summation query for Adult ($\Delta_f = 12, \theta = 10$)]
{\includegraphics[width=0.39\textwidth]{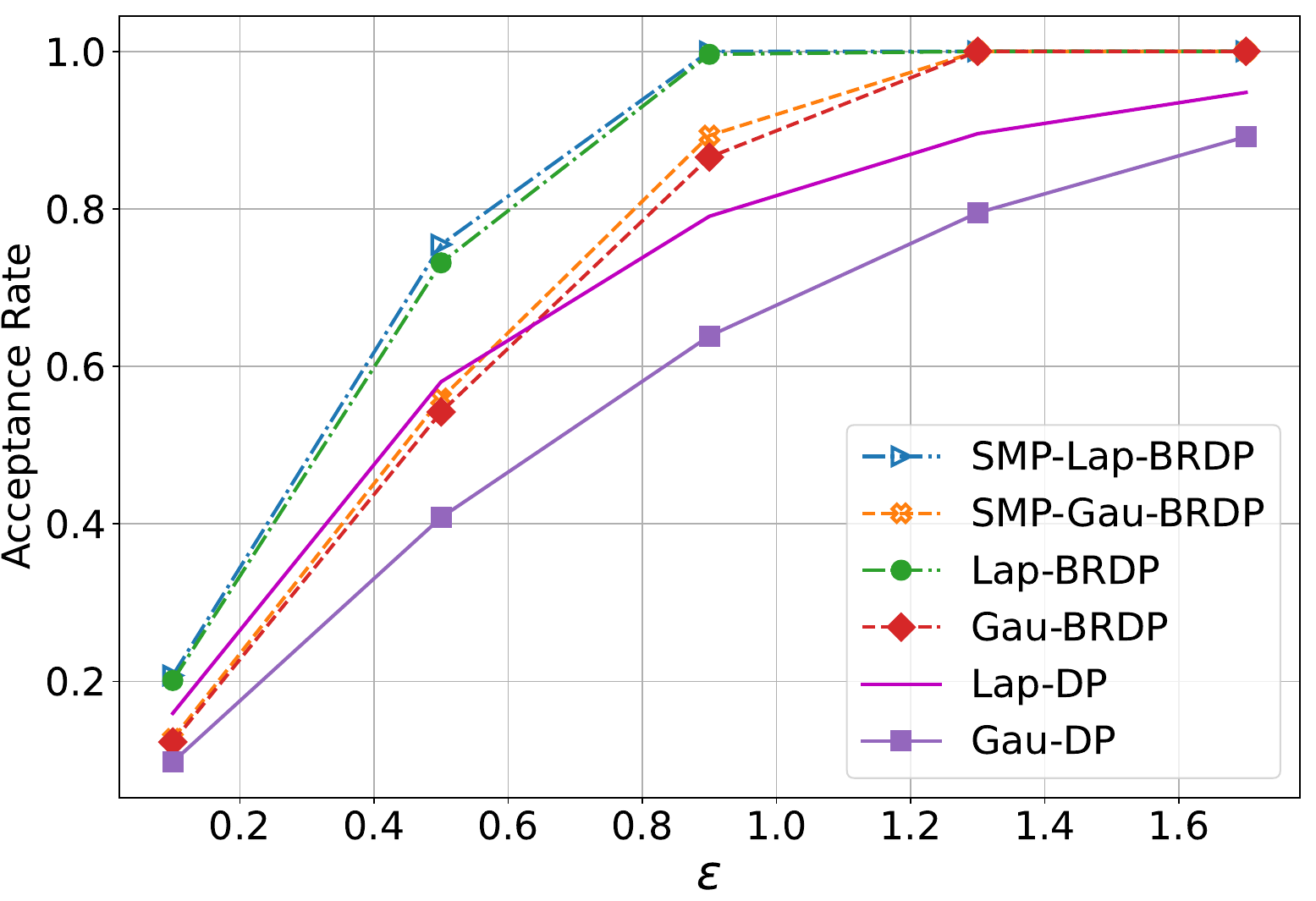}
\label{fig:sum2}}
\subfigure[Averaging query for Adult ($\Delta_f = 0.0017,\theta =1$)]
{\includegraphics[width=0.45\textwidth]{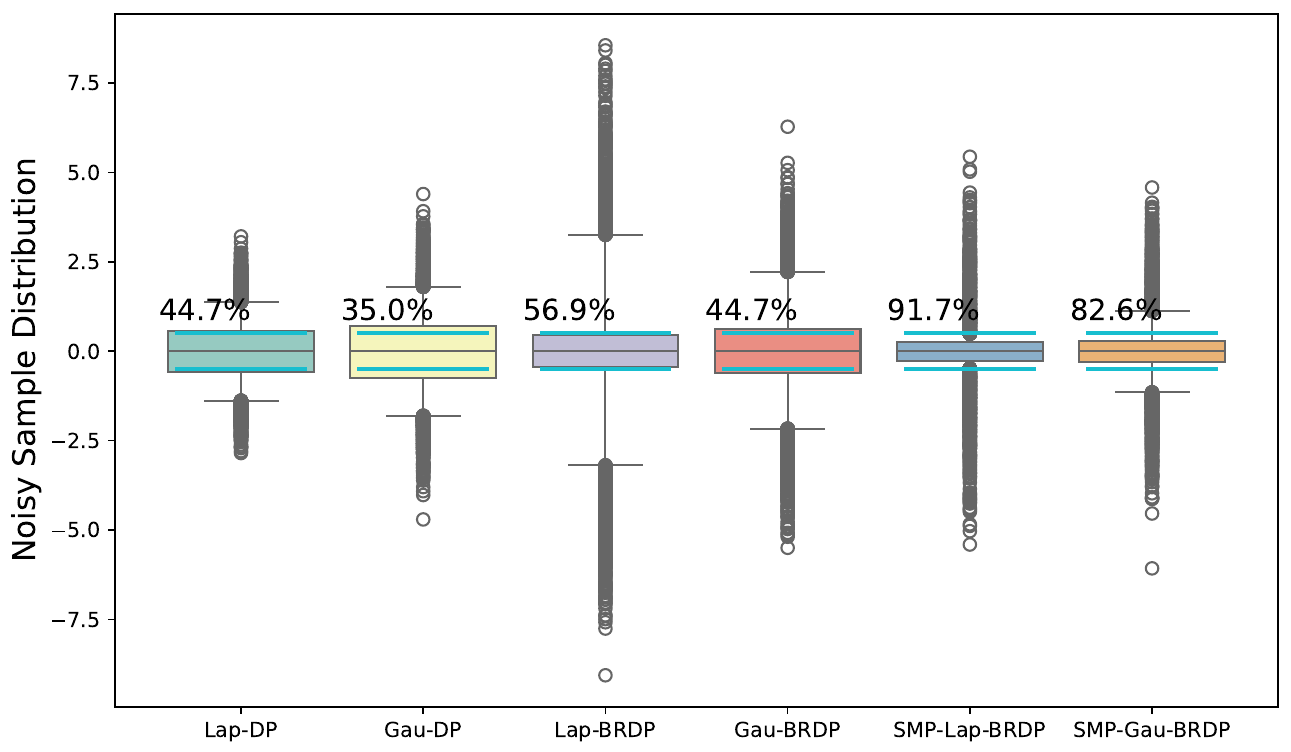}
\label{fig:avg1}}
\subfigure[Averaging query for Adult ($\Delta_f = 0.0012,\theta =1$)]
{\includegraphics[width=0.39\textwidth]{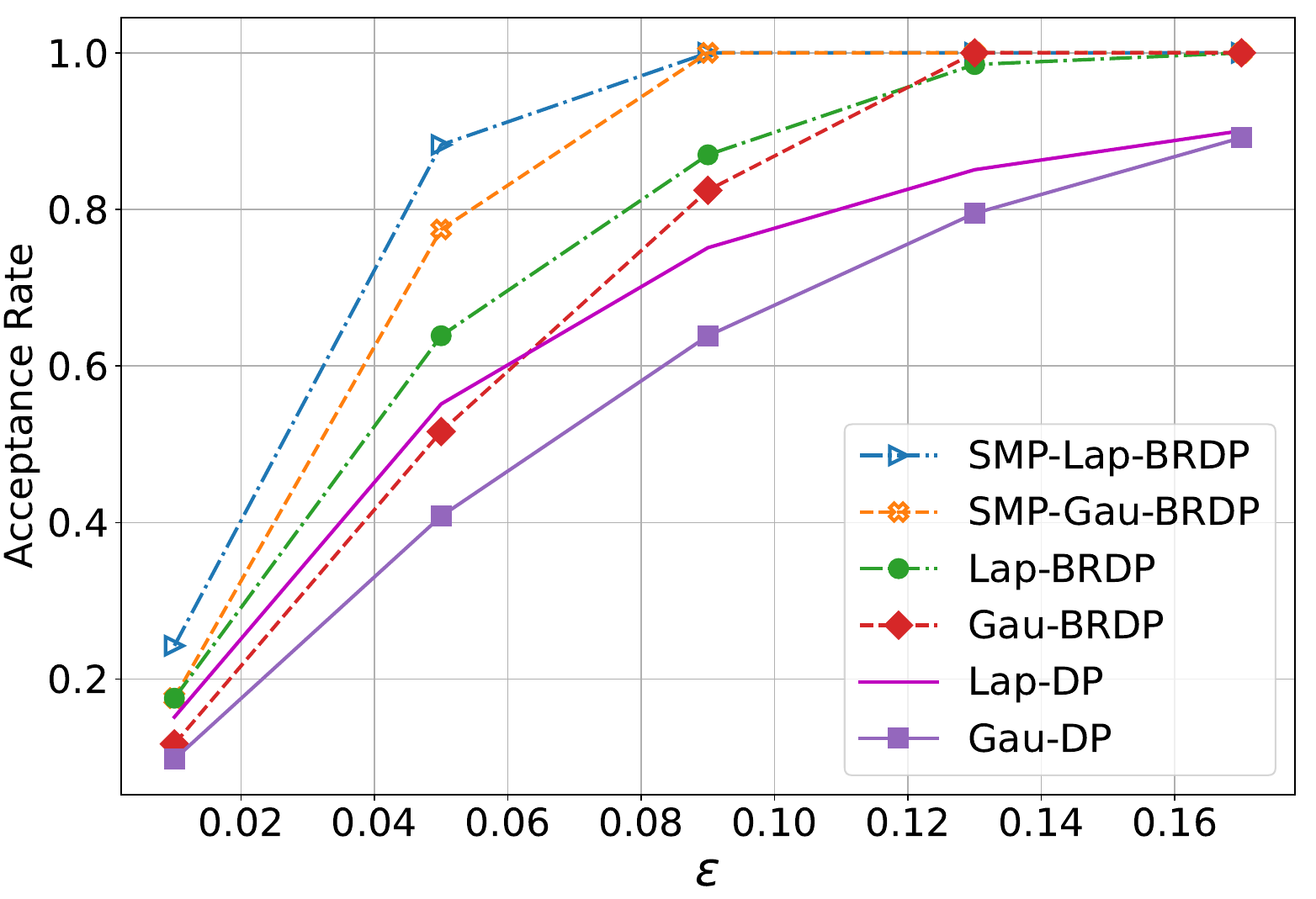}
\label{fig:avg2}}
\subfigure[Counting query for Adult ($\Delta_f = 1, \theta=3$)]
{\includegraphics[width=0.45\textwidth]{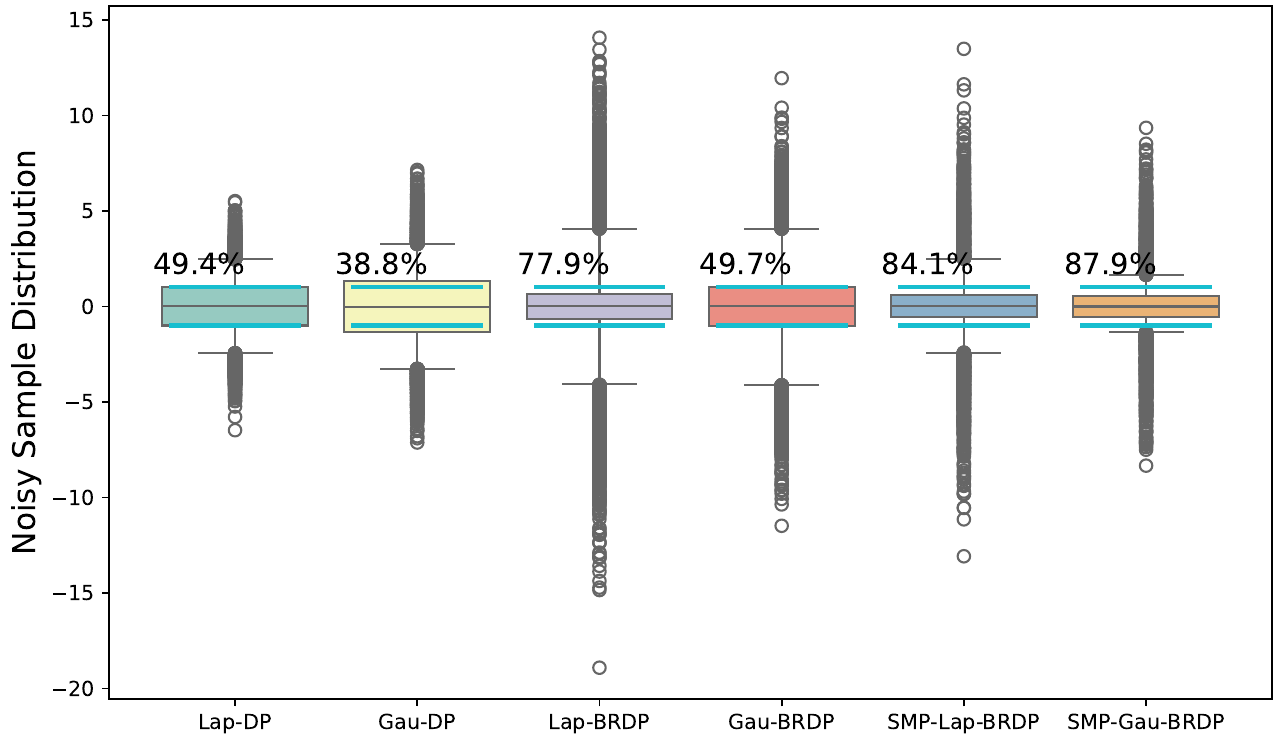}
\label{fig:cnt1}}
\subfigure[Counting query for Adult ($\Delta_f = 1, \theta = 3$)]
{\includegraphics[width=0.39\textwidth]{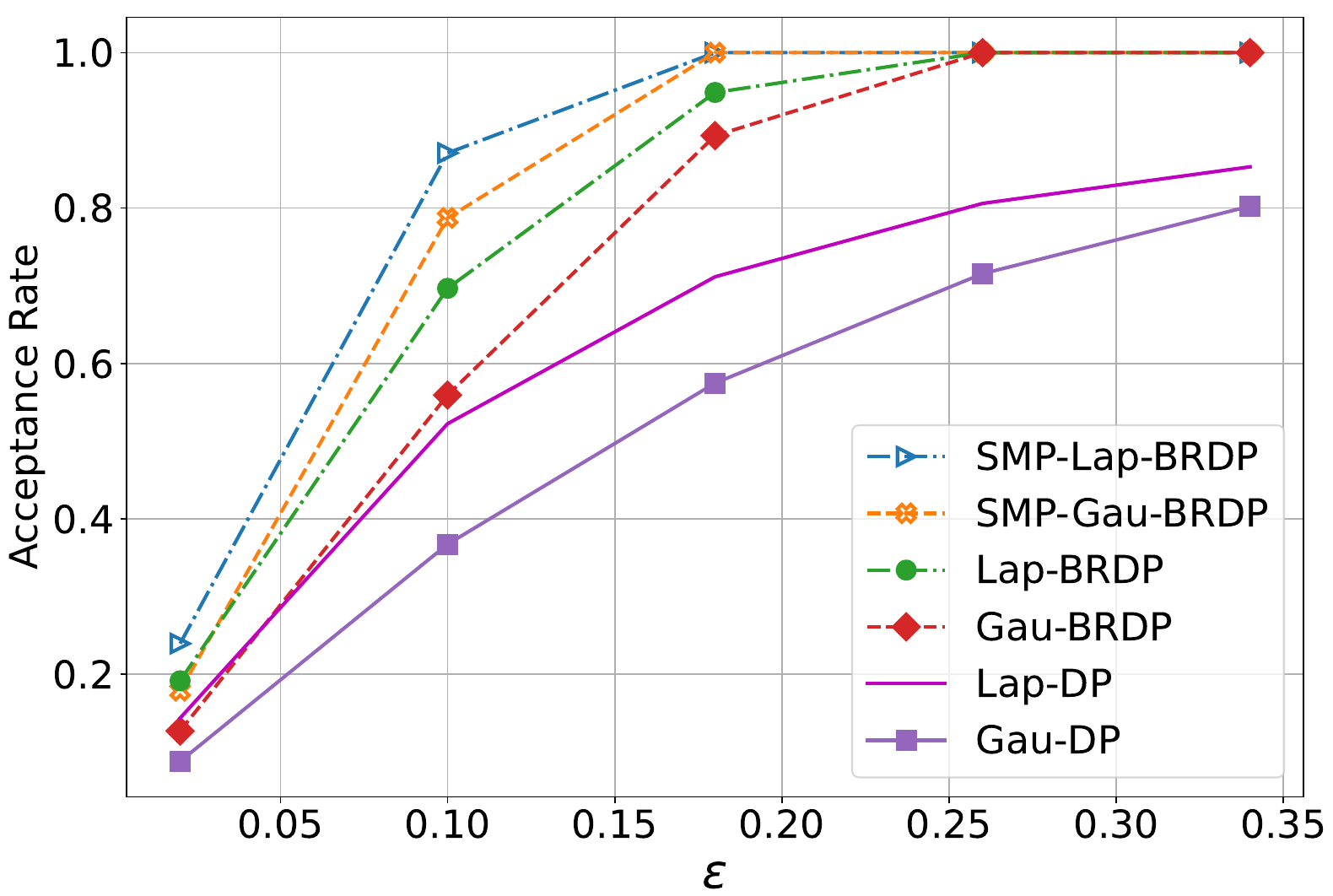}
\label{fig:cnt2}}
\caption{Data generation of Gaussian/Laplacian kernel BR-DP frameworks with DP with Adult datasets. For summation query $(\epsilon=0.5, \delta=10^{-5})$, for averaging query $(\epsilon=0.05, \delta=10^{-5})$, and for counting query $(\epsilon=0.1, \delta=10^{-5})$.}
\label{fig:exp0}
\end{figure*}

\begin{prop}
    For a subsampled BR-DP mechanism, with $\mathcal{E}\sim \mathcal{N}(0, \sigma_{\mathcal{E}})$ and a BR-DP mechanism with $(0,\sigma_y)$ Gaussian kernel. The acceptance rate defined in \eqref{eq:budget} is updated to
    \begin{equation}\label{eq:budget2}
O(\epsilon_y,q,p) = 1-q+\frac{q}{\Phi_{\sigma}(\tau_{u})-\Phi_{\sigma}(\tau_{l})},
\end{equation}
where $\Phi_{\sigma}$ stands for the CDF of a $\mathcal{N}(0,\sigma^2)$ Gaussian distribution, with $\sigma = \sqrt{(\sigma_y^2 + \sigma_{\mathcal{E}}^2)}$.
\end{prop}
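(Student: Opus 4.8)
The plan is to absorb the subsampling error into the kernel and then reuse, verbatim, the acceptance-rate derivation that produced \eqref{eq:post} and \eqref{eq:budget}. The only new ingredient is that, under subsampling, the quantity constrained by the utility requirement $||\cdot||\le\theta$ is no longer the kernel noise $N$ alone but the \emph{total} deviation separating the release $Y_n$ from the true answer $Y$. From $Y_n = Y + \mathcal{E} + N$ this deviation is $M := \mathcal{E} + N$, so the acceptance event is $\{\tau_l \le M \le \tau_u\}$ rather than $\{\tau_l \le N \le \tau_u\}$.

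First I would argue that $M$ is again a zero-mean Gaussian. The subsampling noise $\mathcal{E}$, characterized in the preceding proposition as $\mathcal{N}(0,\sigma_{\mathcal{E}}^2)$, is produced by the sampling randomness and is therefore independent of the kernel's internal noise $N\sim\mathcal{N}(0,\sigma_y^2)$. Additivity of variances for independent Gaussians then gives $M \sim \mathcal{N}(0,\sigma^2)$ with $\sigma^2 = \sigma_y^2 + \sigma_{\mathcal{E}}^2$, whose CDF is exactly the $\Phi_{\sigma}$ appearing in the statement. Next I would treat $M$ as the noise of a single \emph{effective} Gaussian kernel: the per-round accept probability becomes $p_{\theta} = \Phi_{\sigma}(\tau_u)-\Phi_{\sigma}(\tau_l)$ with $\bar{p}_{\theta} = 1-p_{\theta}$, and the BR-DP noisy distribution \eqref{noise_dist} together with the acceptance-rate identity \eqref{eq:post} hold word for word with these substituted quantities. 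Plugging this $p_{\theta}$ into \eqref{eq:budget} simply swaps $\Phi_{(\epsilon_y,\delta)}$ for $\Phi_{\sigma}$ and reproduces \eqref{eq:budget2}, completing the argument.

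The one place that needs care — and the main obstacle — is the interaction between recycling and the frozen subsample. The clean collapse to a single closed form requires that the recycler test and regenerate the combined noise $M=\mathcal{E}+N$, i.e. that a recycle re-draws both the subsample and the kernel noise so that $M$ genuinely plays the role of a fresh kernel draw each round. If instead $\mathcal{E}$ were held fixed across rounds, the conditional accept probability $\Phi_{\sigma_y}(\tau_u-\mathcal{E})-\Phi_{\sigma_y}(\tau_l-\mathcal{E})$ would depend on the realized value of $\mathcal{E}$, and the marginal acceptance rate would be the $\mathcal{E}$-expectation of the per-round ratio in \eqref{eq:post}; because that ratio is nonlinear in $p_{\theta}$, the expectation would not reduce to \eqref{eq:budget2}. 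I would therefore state this regeneration convention explicitly up front; once it is granted, the independence-plus-additivity step does all the real work and the remainder is a direct substitution into the previously established identities.
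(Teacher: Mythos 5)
Your proposal takes essentially the same route as the paper: the paper's proof writes the total noise as $f_{N_{tot}} = f_{\mathcal{E}} \ast f_{N} \ast f_{R}$, observes that $f_{\mathcal{E}} \ast f_{N}$ is Gaussian with standard deviation $\sigma = \sqrt{\sigma_y^2 + \sigma_{\mathcal{E}}^2}$ by independence and additivity of variances, and then reruns the derivation of \eqref{eq:budget} with $\Phi_{\sigma}$ in place of $\Phi_{(\epsilon_y,\delta)}$ --- exactly your ``effective kernel'' substitution. The one place you go beyond the paper is your closing caveat, and it is well taken: the paper's proof convolves $f_{\mathcal{E}}$ with the already-recycled $f_{brdp}$ (i.e., a subsample drawn once, with only $N$ regenerated), yet the closed form \eqref{eq:budget2} is the one obtained when the \emph{combined} deviation $M=\mathcal{E}+N$ is what the recycler tests and redraws each round; as you note, with a frozen $\mathcal{E}$ the acceptance rate is instead $\mathbb{E}_{\mathcal{E}}\bigl[p_{\theta}(\mathcal{E})/(1-\bar{p}_{\theta}(\mathcal{E})q)\bigr]$, which does not collapse to \eqref{eq:budget2} because the per-round ratio is nonlinear in $p_{\theta}$. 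Making that regeneration convention explicit, as you propose, closes a gap the paper's own one-line proof leaves open.
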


\begin{proof}
Denote $N_{tot}$ as the total noise injected by subsampling and BR-DP. Then
\begin{equation*}
    f_{N_{tot}} = f_{\mathcal{E}}\ast f_{brdp} = f_{\mathcal{E}}\ast f_{N} \ast f_{R}.
\end{equation*}
As both $\mathcal{E}$ and $N$ are Gaussian, the convolution of $f_{\mathcal{E}}\ast f_{N}$ is also Gaussian with a standard deviation of $\sigma = \sqrt{(\sigma_y^2 + \sigma_{\mathcal{E}}^2)}$. The other steps for deriving the expression in \eqref{eq:budget2} follow similar ideas as in the steps for \eqref{eq:budget}. 
\end{proof}
The following Algorithm 4 provides steps to obtain optimal parameters of $\epsilon_y$, $q$ and $p$.

We present numerical findings on the optimal sampling rate and the corresponding acceptance rate for the Gaussian kernel BR-DP framework with optimal subsampling. The results are depicted in Figures \ref{p_vs_N} and \ref{ar_vs_N}. For the summation query, the optimal sampling rate approaches $1$, as the function $f_{\mathcal{E}}$ predominantly contributes to inaccuracies in the final result for large $\sigma_x$, necessitating a higher $p$ to reduce subsampling noise. Conversely, the average query favors a lower sampling rate because the variance of $\sigma_{\text{avg}}$ is small for large datasets, compared to $\sigma_N$ in perturbation. The optimal sampling rate for the counting query varies with the dataset size; for smaller datasets, $\sigma_N$ primarily drives the error, whereas for larger datasets, the increasing $\sigma_{\mathcal{E}}$ becomes the dominant error factor. The key observation from Fig. \ref{ar_vs_N} is that the \textit{acceptance rate} also benefits from a large input dataset. Notably, the above analysis is derived based on parameters set as shown in the figure. Different parameters may lead to different conclusions. Also, note that similar analysis also applies to subsampled DP mechanisms and can be useful in deriving the optimal sampling rate from the utility perspective.

\section{Experiments}

In this section, we experiment on real datasets to show the advantage of the BR-DP mechanism. We first show that BR-DP is efficient in improving the acceptance rate. The evaluations focus on three query types: summation query, average query, and counting query. Then we compare the privacy leakage of BR-DP and DP after composition.

\begin{table*}[t]
    \centering
\caption{Privacy Leakage {and runtime} After $1000$ times Composition for $\epsilon = 0.1$.}
\label{tab:my_label}
\renewcommand{\arraystretch}{1.5}
    \begin{tabular}{c|c|c|c|c|c|c} 
          &Lap-DP&  Gau-DP&  Lap-BRDP&  Gau-BRDP&  SMP-Lap-BRDP&  SMP-Gau-BRDP\\ \hline 
          $\delta = 10 ^{-5}$ &18.98& 4.77 &  16.8&  4.72 &  49.37 & 13.13\\ \hline 
          $\delta = 10 ^{-10}$ &25.38&  7.17&  22.23& 6.93& 65.6&  19.6\\ \hline 
          {Runtime}& {12.3}& {13.5} &  {63.2}&  {77.5} &  {4034.5} & {5162.2}
    \end{tabular}
\end{table*}


\subsection{Experiment Settings}

\textbf{Datasets: } {we} use two datasets that are commonly used in the privacy research community with different focuses. 

\textit{Adult Dataset}\cite{misc_adult_2} from the UCI, which contains census information with $45,222$ records and $15$ attributes. The attributes include both categorical ones such as race, gender, and education level, as well as numerical ones such as capital gain, capital loss, and weight. 

\textit{Gowalla} \cite{cho2011friendship} a location-based social networking website where users share their locations by checking in. Gowalla includes a total of $6,442,890$ check-ins of these users over the period of Feb. $2009$ - Oct. $2010$. The attributes include user-id, check-in time, location-longitude, location-latitude, and location-id.

\textbf{Methodologies} We compare BR-DP mechanisms mentioned in this paper, with DP, and We consider both Gaussian and Laplacian noise in the kernel for BR-DP and for DP respectively. 
We evaluate different mechanisms with three types of queries, i.e., summation, average, and counting. We first preprocess each dataset, remove non-available data, and randomly select a subset of $10,000$ individuals for each dataset. Then we divide each of the two datasets into $10$ disjoint subsets according to the user ID. Then we submit identical queries to each subset for $1000$ times and showcase the statistics of the noisy outputs on a uniformed plane by calibrating the average of output to $0$. We are interested in the aggregation of growth income of each subset for Adult dataset; and the total check-in duration of each identical individual in each subset for Gowalla. Specifically, we compare the acceptance rate provided by different mechanisms. Note that the calibration does not violate the privacy guarantee due to the post-processing property. Also averaging over multiple disjoint subsets does not degrade privacy protection, as subsets are divided to be disjoint with each other, and parallel composition applies.

\subsection{Noisy Answers and Utility-Privacy Tradeoff}
We first showcase the data generation scenarios for different mechanisms under different queries. The results are shown in Fig. \ref{fig:sum1}, \ref{fig:avg1}, and \ref{fig:cnt1}, respectively. Corresponding parameters are listed along with the figures. In the box-plot. The area of the box covers $50\%$ of the generated data, and the region bounded by the upper and lower dash represents $95\%$ of data distribution. We also marked the utility boundaries on the plot together with the calculated ``acceptance rate''. When selecting the subsample rate, we first simulate with estimated $\sigma_x$ and the size of each subset. We adopt the same sampling rate derived from the subsampled Gaussian mechanism for the subsampled Laplacian mechanism. For the scale parameter $b$ in Laplacian mechanism, we first calibrate $(\epsilon,\delta)$-Laplacian to $\epsilon'$-Laplacian with enhanced $\epsilon'$. Then $b = \Delta_f/\epsilon'$. We then show the acceptance rate of different mechanisms discussed above as a parameter of the total budget $\epsilon$. Again, for each $\epsilon$, we run the identical queries for $1,000$ times to get the average acceptance rate for each mechanism. The corresponding results are shown in the column to the right of Fig. \ref{fig:exp0}.

Observe that with both Laplacian and Gaussian kernels, BR-DP always outperforms the corresponding DP mechanisms given the same privacy budget. Moreover, BR-DP equipped with optimal subsampling even enhances the utility. In general, Laplacian kernel BR-DP outperforms the Gaussian kernel BR-DP mechanism, this is due to the geometry of the noise distribution: the Laplacian noise is more concentrated than the Gaussian noise under the same privacy budget and sensitivity. 

\subsection{{Privacy Leakage and Runtime Comparison with Composition}}
In the following, we compare the privacy leakage {and runtime} of different mechanisms described above after composition. Since the experimental results are averaged over $1,000$ independent and identical queries, the privacy leakage for each individual query is enlarged, and the increased amount can be measured by the composition theorem. 

From Section \ref{sec:tightcompo}, we know that the $T$-fold composition of independent BR-DP mechanisms can be viewed as a linear transformation of the corresponding kernel DP mechanisms after $T$-fold composition. Here, we adopt the technique of the ``analytic Fourier Accounting" algorithm in \cite{Balle2018ImprovingTG} to account for the composition leakage of the Gaussian mechanism, which is proven to be the state-of-the-art tightest analysis. For the Laplacian mechanism, we consider a Renyi Differential Privacy approach (RDP) \cite{8049725}. For the subsampled BR-DP mechanism, we use the sampling rate to get an amplified privacy budget. Then we calculate the composed leakage with the amplified budget and then calibrate the final leakage with the sampling rate. The composed leakages for different mechanisms are shown in Table 1. Observe that the Gaussian BR-DP incurs the smallest leakage compared to other mechanisms. It is worth noting that while the Laplacian mechanism provides better utility compared to the Gaussian mechanism as a BR-DP kernel, it suffers from additional privacy leakage after composition. {On the other hand, we calculate the runtime by each mechanism. Since $\delta$ only makes slight impact on the runtime, we intentionally ignore it for simplicity. }

\section{Limitations and Future Works}
We identified two limitations of BR-DP. The first limitation of BR-DP is that its utility may only be equivalent to that of traditional DP, depending on the allocated privacy budget and query sensitivity. BR-DP generally shows greater effectiveness for queries with small answer values but high sensitivity. Nevertheless, BR-DP can be considered an adaptive framework, as it optimally allocates the privacy budget. In cases where the optimal outcome aligns with standard DP performance, BR-DP seamlessly reverts to its underlying DP kernel mechanism.
The second limitation of BR-DP stems from its tendency to increase the variance of noisy outputs, thereby exacerbating the inaccuracy of out-of-bound responses. This trade-off is the sole sacrifice required from BR-DP users to ensure stringent privacy protection while enhancing utility. This approach is particularly viable when out-of-bound answers are entirely unacceptable. Notably, in practical applications, it's possible to set \(\theta\) as unbounded, which effectively degrades the BR-DP framework to its underlying DP kernel mechanism.

In terms of future work, a direct future work of BR-DP is to relate the definition with RDP, from there, more tightened results about privacy amplification with composition can be derived. Another direction is to adapt the BR-DP framework to local settings, where each individual perturbs and releases data without a server, the BR-DP framework is useful in providing closed-form perturbation parameters for discrete-valued mechanisms such as generalized randomized response and may contribute to optimizing the frequency estimation oracles proposed in\cite{203872} or context-aware mechanisms in \cite{LIP2}. It might also be of interest to investigate more applications of BR-DP rather than the querying system. One possible direction is to combine BR-DP with deep learning: on one hand, BR-DP has the potential to generate a more stable gradient than DP, on the other hand, BR-DP incurs less privacy leakage post-composition.

\section{Conclusion}
In this paper, we introduce and examine Budget Recycling Differential Privacy (BR-DP), a novel framework designed to enhance the utility of traditional differential privacy (DP) mechanisms. The utility is quantified through the \textit{acceptance rate}, defined as the probability of noisy outputs falling within a predetermined error boundary. Central to the BR-DP framework is a kernel DP mechanism responsible for generating noisy outputs, accompanied by a recycler that reprocesses the budget to regenerate outputs if they are deemed \textit{unacceptable}. We conduct a thorough analysis of privacy leakage and present algorithms for the optimal determination of parameters. Our study extends to a rigorous composition analysis of BR-DP, complete with an accounting algorithm. Additionally, we explore the concept of privacy amplification through subsampling and propose an algorithm to determine the optimal sampling rate, further enhancing data utility. Evaluations conducted on real datasets demonstrate BR-DP's superiority over conventional DP mechanisms. Our results indicate a substantial increase in the \textit{acceptance rate} and a reduced privacy leakage post-composition.

\bibliographystyle{IEEEtran}
\bibliography{ref}

\appendices

\section{Proof of BR-DP noisy distribution}
\begin{proof}
The mechanism releases a $y_n$ satisfies $||n||\le\theta$ with  probability of:
\begin{equation*}
    \begin{aligned}
&f_N(y_n|y)+f_N(y_n|y)\text{Pr}(||N||>\theta)q\\
&~~~~~~~~~~~+f_N(y_n|y)\text{Pr}(||N||>\theta)^2q^2+...\\
=&f_N(y_n|y)+f_N(y_n|y)\bar{p}_{\theta}q+f_N(y_n|y)\bar{p}_{\theta}^2q^2+...\\
=&f_N(y_n|y)\cdot\sum_{k=0}^{\infty}\bar{p}_{\theta}^kq^k
=\frac{f_N(y_n|y)}{1-\bar{p}_{\theta}q},
\end{aligned}
\end{equation*}

The probability representation described above can be interpreted as follows: an acceptable $y_n$ is generated in the first round with a probability density of $f_N(y_n|y)$. The first round is unsuccessful and is recycled with a probability of $\bar{p}_{\theta}q$. In the subsequent second round, a $y_n$ is released again with a probability density of $f_N(y_n|y)$ of meeting the criteria. This process continues indefinitely, potentially up to an infinite number of rounds.

On the other hand, the mechanism releases an unacceptable $y_n$ with a probability of:
\begin{equation*}
    \begin{aligned}
&f_N(y_n|y)(1-q)+f_N(y_n|y)\text{Pr}(||N||>\theta)(1-q)\\
&~~+f_N(y_n|y)\text{Pr}(||N||>\theta)^2(1-q)^2+...\\
=&f_N(y_n|y)(1-q)+f_N(y_n|y)(1-q)\bar{p}_{\theta}\\
&~~+f_N(y_n|y)\bar{p}_{\theta}^2(1-q)^2+...\\
=&f_N(y_n|y)(1-q)\sum_{k=0}^{\infty}\bar{p}_{\theta}^kq^k
=\frac{f_N(y_n|y)(1-q)}{1-\bar{p}_{\theta}q},
\end{aligned}
\end{equation*}
\end{proof}

\section{Validation of the noise distriution}
\begin{proof}
Obviously,  $0\le(1-q)\le1$, $0\le1-\bar{p}_{\theta}q\le 1$. On the other hand:
\begin{small}
\begin{equation*}
    \begin{aligned}
    &\int_{-\infty}^{\infty}f_{brdp}(y_n|y)dy_n\\
    =&\int_{||n||>\theta}\frac{f_N(y_n|y)(1-q)}{1-\bar{p}_{\theta}q}dy_n+\int_{||n||\le\theta}\frac{f_N(y_n|y)}{1-\bar{p}_{\theta}q}dy_n\\
=&\frac{(1-q)}{1-\bar{p}_{\theta}q}\int_{||n||>\theta}f_N(y_n|y)dy_n+\frac{1}{1-\bar{p}_{\theta}q}\int_{||n||\le\theta}f_N(y_n|y)dy_n\\
=&\frac{(1-q)}{1-\bar{p}_{\theta}q}\bar{p}_{\theta}+\frac{1}{1-\bar{p}_{\theta}q}(1-\bar{p}_{\theta})\\
=&\frac{\bar{p}_{\theta}-q\bar{p}_{\theta}+1-\bar{p}_{\theta}}{1-\bar{p}_{\theta}q}
=1.
\end{aligned}
\end{equation*}
\end{small}
    
\end{proof}

\section{PLD of a BR-DP mechanism}

\begin{proof}
The privacy loss of a BR-DP mechanism can be expressed as:
\begin{equation*}
    \Gamma = \log\left\{\frac{\text{Pr}(\mathcal{M}(Q(X))=y_{n})}{\text{Pr}(\mathcal{M}(Q(X'))=y_{n})}\right\}, 
\end{equation*}
Without loss of generality, we denote $\Gamma$ as the privacy loss caused by the worst-case combination of $X$ and $X'$  as neighboring datasets. We next bound the leakage of the following three cases, with $N'$ denoting the noise generated for $Y'$ generated by a neighboring dataset: 

\textbf{Case 1: }$||N||<\theta$, $||N'||<\theta$:
\begin{equation*}
\begin{aligned}
\frac{\text{Pr}(\mathcal{M}(Q(X))=y_{n})}{\text{Pr}(\mathcal{M}(Q(X'))=y_{n})}
=&\frac{f_{brdp}(y_{n}|y)}{f_{brdp}(y_{n}|y')}\\
=&\frac{f_{N}(y_{n}|y)[1+\bar{p}_{\theta}q+\bar{p}_{\theta}^2q^2+...]}{f_N(y_{n}|y')[1+\bar{p}_{\theta}q+\bar{p}_{\theta}^2q^2+...]}\\
=&\frac{f_N(y_n|y)}{f_N(y_n|y')}.\\
\end{aligned}
\end{equation*}

Then
\begin{small}
\begin{equation*}
    \log\left\{\frac{\text{Pr}(\mathcal{M}(Q(X))=y_{n})}{\text{Pr}(\mathcal{M}(Q(X'))=y_{n})}\right\}=\log\left\{\frac{f_N(y_{n}|y)}{f_N(y_{n}|y')}\right\}.
\end{equation*}
\end{small}

\textbf{Case 2:} $||N||\le\theta$, $||N'||>\theta$:
\begin{equation*}
    \begin{aligned}
&\frac{\text{Pr}(\mathcal{M}(Q(X))=y_{n})}{\text{Pr}(\mathcal{M}(Q(X'))=y_{n})}\\=&\frac{f_{brdp}(y_{n}|y)}{f_{brdp}(y_{n}|y')}\\
=&\frac{f_N(y_{n}|y)[1+\bar{p}_{\theta}q+\bar{p}_{\theta}^2q^2+...]}{f_N(y_{n}|y')(1-q)[1+\bar{p}_{\theta}q+\bar{p}_{\theta}^2q^2+...]}\\
=&\frac{f_N(y_n|y)}{f_N(y_n|y')}\cdot\frac{1}{1-q}.
\end{aligned}
\end{equation*}

\textbf{Case 3:} $||N||>\theta$, $||N'||>\theta$:
\begin{equation*}
\begin{aligned}
&\frac{\text{Pr}(\mathcal{M}(Q(X))=y_{n})}{\text{Pr}(\mathcal{M}(Q(X'))=y_{n})}\\=&\frac{f_{brdp}(y_{n}|y)}{f_{brdp}(y_{n}|y')}\\
=&\frac{f_N(y_{n}|y)(1-q)[1+\bar{p}_{\theta}q+\bar{p}_{\theta}^2q^2+...]}{f_N(y_{n}|y')(1-q)[1+\bar{p}_{\theta}q+\bar{p}_{\theta}^2q^2+...]}\\
=&\frac{f_N(y_n|y)}{f_N(y_n|y')}.
\end{aligned}
\end{equation*}
As Case 3 incurs the same leakage as Case 1, these two cases can be further combined.

Observe that for case 2: 
\begin{equation*}
\begin{aligned}
    &\log\left\{\frac{\text{Pr}(\mathcal{M}(Q(X))=y_{n})}{\text{Pr}(\mathcal{M}(Q(X'))=y_{n})}\right\}
    =&Z + \log\left\{\frac{1}{1-q}\right\},
\end{aligned}
\end{equation*}
where $Z$ denotes the privacy loss random variable of the DP kernel, and the probability of this leakage is:
\begin{equation}
\begin{aligned}
    W = &\text{Pr}(||N||\le\theta, ||N'||>\theta)\\
    =& \max\{(\Phi_{(\epsilon_y,\delta_y)}(\tau'_l)-\Phi_{(\epsilon_y,\delta_y)}(\tau_l)),\\&~~~~~~~~~~~~(\Phi_{(\epsilon_y,\delta_y)}(\tau'_u)-\Phi_{(\epsilon_y,\delta_y)}(\tau_u))\}.
\end{aligned}
\end{equation}

For case 1 and case 3:
\begin{equation*}
\begin{aligned}
    &\log\left\{\frac{\text{Pr}(\mathcal{M}(Q(X))=y_{n})}{\text{Pr}(\mathcal{M}(Q(X'))=y_{n})}\right\}
    =&Z,
\end{aligned}
\end{equation*}
and the probability of this leakage is $1-W$.
This concludes the proof for Theorem 1.

\end{proof}

\section{Proof the privacy profile}
\begin{proof}

\begin{equation}\label{eq:delta'}
    \begin{aligned}
    \delta_{\Gamma} \ge &\mathbb{E}_{\Gamma}[\max\{0,1-\exp(\epsilon-\gamma)\}]\\
    =&\int_{\epsilon}^{\infty}(1-\exp(\epsilon-\gamma))f_\Gamma(\gamma)d\gamma\\
    =&(1-W)\int_{\epsilon}^{\infty}(1-\exp(\epsilon-\gamma))f_Z(\gamma)d\gamma\\
    &~~~~~+W\int_{\epsilon}^{\infty}(1-\exp(\epsilon-\gamma))f_Z(\gamma-\mathcal{L})d\gamma\\
    =&(1-W)\int_{\epsilon}^{\infty}(1-\exp(\epsilon-z))f_Z(z)dz\\
    &~~~~~+W\int_{\epsilon-\mathcal{L}}^{\infty}(1-\exp(\epsilon-\mathcal{L}-z))f_Z(z)dz\\
    =&(1-W)\delta_Z(\epsilon)+W\delta_Z(\epsilon-\mathcal{L}).
    \end{aligned}
\end{equation}
    
\end{proof}

\section{Proof of Theorem 2}
\begin{proof}
Considering the three cases described in Appendix C.  For case 1 and 3: 

\begin{equation*}
\begin{aligned}
    \log\left\{\frac{\text{Pr}(\mathcal{M}(Q(X))=y_{n})}{\text{Pr}(\mathcal{M}(Q(X'))=y_{n})}\right\}
    = &Z,
\end{aligned}
\end{equation*}

For case 2:

\begin{equation*}
\begin{aligned}
    \log\left\{\frac{\text{Pr}(\mathcal{M}(Q(X))=y_{n})}{\text{Pr}(\mathcal{M}(Q(X'))=y_{n})}\right\}
    = Z + \log\left\{\frac{1}{{1-q}}\right\},
\end{aligned}
\end{equation*}

Combine these two cases:
\begin{equation*}
\begin{aligned}
    &\log\left\{\frac{\text{Pr}(\mathcal{M}(Q(X))=y_{n})}{\text{Pr}(\mathcal{M}(Q(X'))=y_{n})}\right\}
    \le Z + \log\left\{\frac{1}{{1-q}}\right\},
\end{aligned}
\end{equation*}

When the DP kernel the BR-DP framework satisfies $(\epsilon_y,\delta)$-DP, the following holds:
\begin{equation*}
    \text{Pr}\left\{\log\left(\frac{f_N(y_{n}|y)}{f_N(y_{n}|y')}\right)\ge\epsilon_y\right\}\le\delta,
\end{equation*}

\begin{equation*}
    \begin{aligned}
&\text{Pr}\left\{Z-\log{(1-q)}\le\epsilon_y-\log(1-q)\right\}\le{\delta},\\
\end{aligned}
\end{equation*}
which implies:
\begin{equation*}
    \text{Pr}\left\{\Gamma\le\epsilon_y-\log(1-q)\right\}\le{\delta}.\\
\end{equation*}
To guarantee $(\epsilon,\delta)$-DP:
and $q\le 1- \exp(\epsilon_y-\epsilon)$.
\end{proof}

\section{Proof of Theorem 4}
\begin{proof}
By definition, the privacy loss distribution,
\begin{equation*}
\begin{aligned}
&\Tilde{f}_{\Gamma}\left(\log\left(\frac{\text{Pr}(\mathcal{M}_0\cdot\mathcal{M}_1(X)=(y_0,y_1))}{\text{Pr}(\mathcal{M}_0\cdot\mathcal{M}_1(X')=(y_0,y_1))}\right)\right)\\
=&\text{Pr}(\mathcal{M}_0\cdot\mathcal{M}_1(X)=(y_0,y_1)).
\end{aligned}
\end{equation*}
Due to the independence of $\mathcal{M}_0$ and $\mathcal{M}_1$
\begin{equation*}
\begin{aligned}
    &\text{Pr}(\mathcal{M}_0\cdot\mathcal{M}_1(X)=(y_0,y_1))\\=&\text{Pr}(\mathcal{M}_0(X)=(y_0))\text{Pr}(\mathcal{M}_1(X)=(y_1)),
\end{aligned}
\end{equation*}
similarly, 
\begin{equation*}
\begin{aligned}
    &\text{Pr}(\mathcal{M}_0\cdot\mathcal{M}_1(X')=(y_0,y_1))\\=&\text{Pr}(\mathcal{M}_0(X')=(y_0))\text{Pr}(\mathcal{M}_1(X')=(y_1)),
\end{aligned}
\end{equation*}
Therefore, 
\begin{equation*}
\begin{aligned}
    &\log\left(\frac{\text{Pr}(\mathcal{M}_0\cdot\mathcal{M}_1(X)=(y_0,y_1))}{\text{Pr}(\mathcal{M}_0\cdot\mathcal{M}_1(X')=(y_0,y_1))}\right)\\
    =&\log\left(\frac{\text{Pr}(\mathcal{M}_0(X)=y_0)}{\text{Pr}(\mathcal{M}_0(X')=y_0)}\right)+\log\left(\frac{\text{Pr}(\mathcal{M}_1(X)=y_1)}{\text{Pr}(\mathcal{M}_1(X')=y_1)}\right),
\end{aligned}
\end{equation*}
and 
\begin{equation*}
\begin{aligned}
&\Tilde{f}_{\Gamma}\left(\log\left(\frac{\text{Pr}(\mathcal{M}_0(X)=y_0)}{\text{Pr}(\mathcal{M}_0(X')=y_0)}\right)+\log\left(\frac{\text{Pr}(\mathcal{M}_1(X)=y_1)}{\text{Pr}(\mathcal{M}_1(X')=y_1)}\right)\right)\\
=&\text{Pr}(\mathcal{M}_0(X)=(y_0))\text{Pr}(\mathcal{M}_1(X)=(y_1)).
\end{aligned}
\end{equation*}
which implies that 
\begin{equation*}
\begin{aligned}
    \Tilde{f}_{\Gamma}(\gamma)=&f^0_{\Gamma}(\gamma)\ast f^1_{\Gamma}(\gamma)\\
    =&f_Z^0(\gamma)\ast f^0_{R}(\gamma)\ast f_Z^1(\gamma)\ast f^1_R{(\gamma)}.
\end{aligned}
\end{equation*}
For independent and identical mechanisms, after $T$ compositions:
\begin{equation*}
\begin{aligned}
    \Tilde{f}_{\Gamma}(\gamma)=&f_{\Gamma}(\gamma)\ast^T f_{\Gamma}(\gamma)\\
    =&[(f_{Z}\ast^T f_{Z}\ast(f_{R}\ast^T f_{R})](\gamma),
\end{aligned}
\end{equation*}
where 
\begin{equation*}
    f_{R}\ast^T f_{R}(\gamma)=\sum_{k=0}^T\binom{T}{k} (1-W)^{k}W^{T-k}\delta_{Dirac}(\epsilon-(T-k)\mathcal{L}).
\end{equation*}
This completes the proof.
\end{proof}

\begin{figure*}[htp]
\centering 
\subfigure[Summation query for Gowalla ($\Delta_f = 22, \theta = 5$)]
{\includegraphics[width=0.43\textwidth]{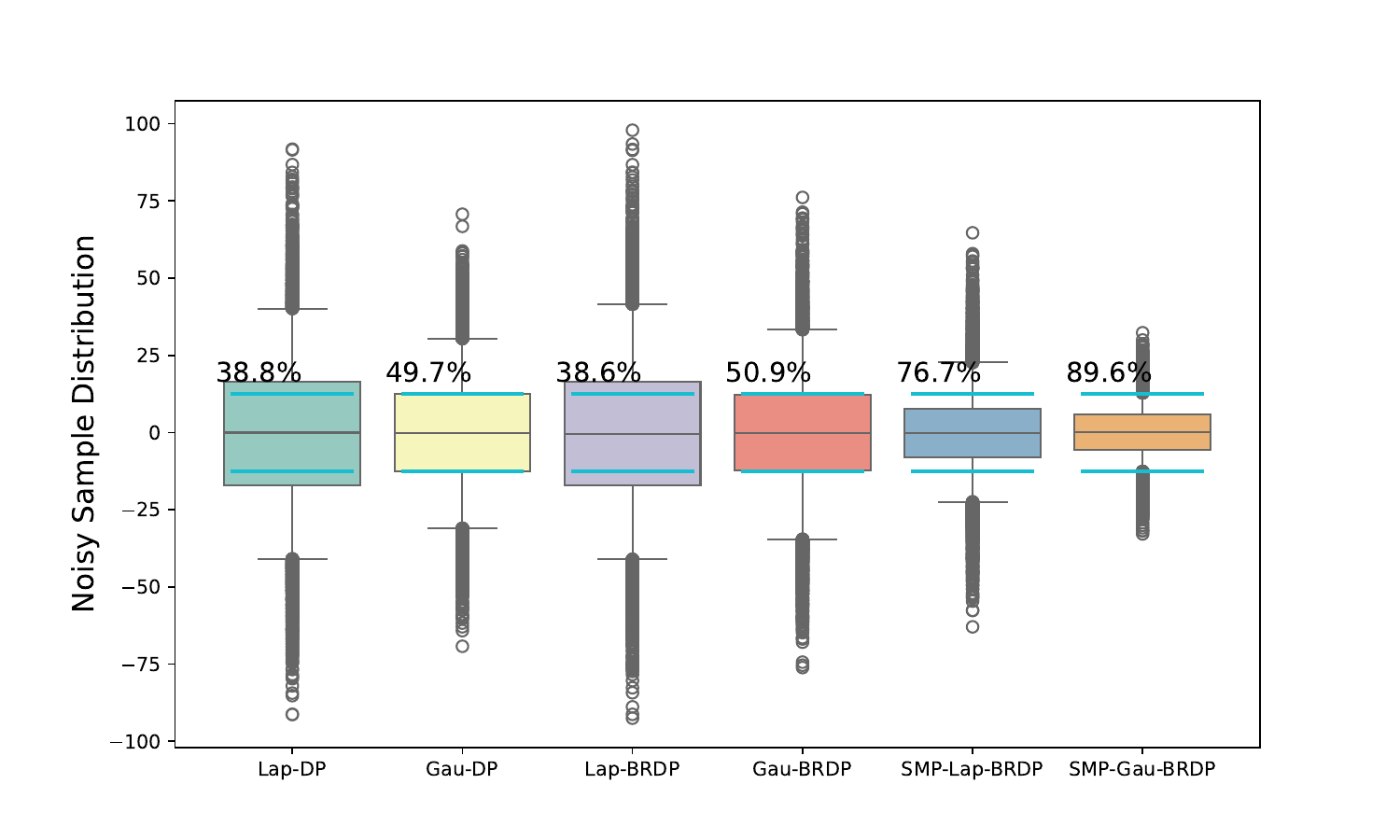}
\label{summ1}}
\subfigure[Summation query for Gowalla ($\Delta_f = 12, \theta = 10$)]
{\includegraphics[width=0.39\textwidth]{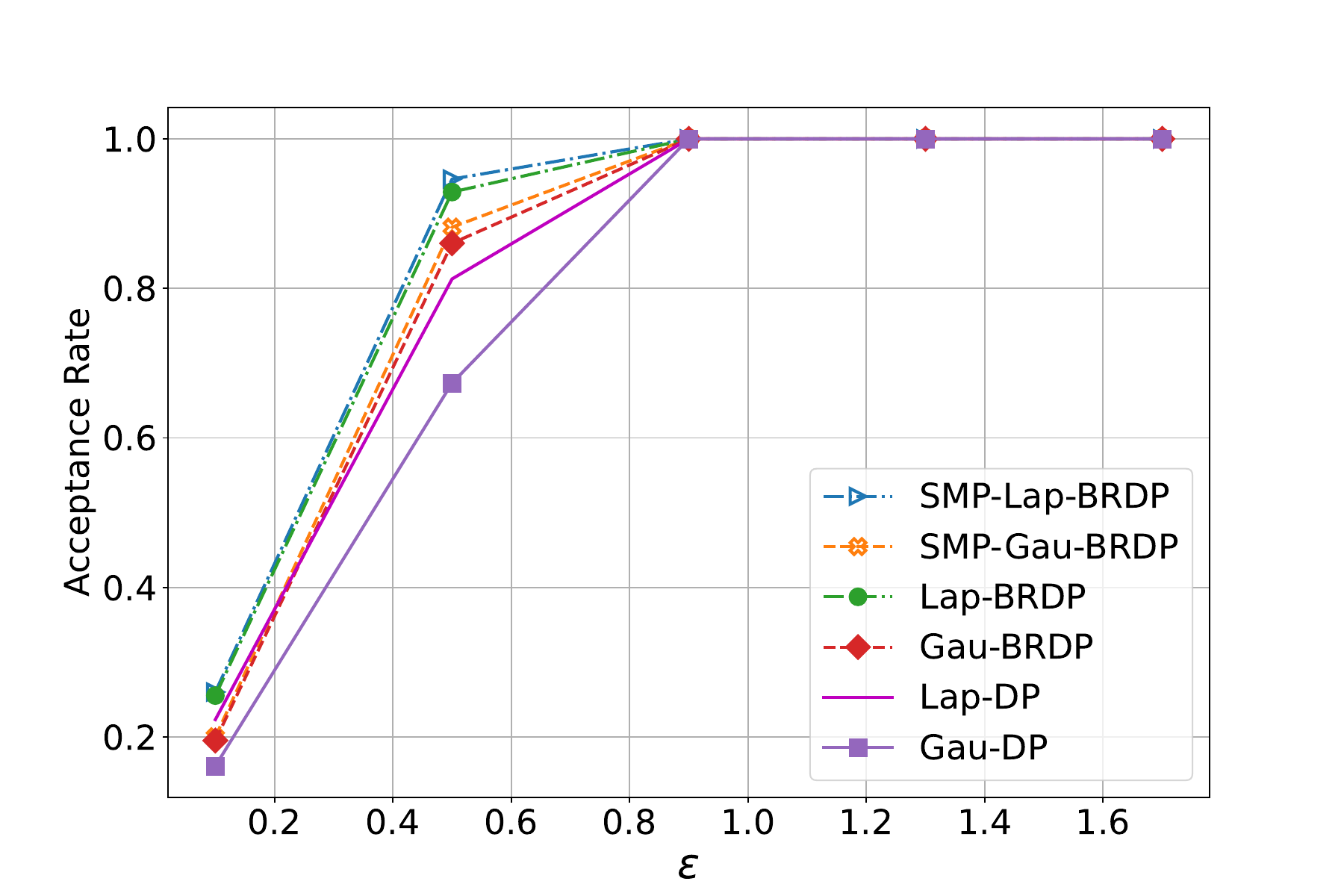}
\label{summ2}}
\subfigure[Averaging query for Gowalla ($\Delta_f = 0.0017,\theta =1$)]
{\includegraphics[width=0.43\textwidth]{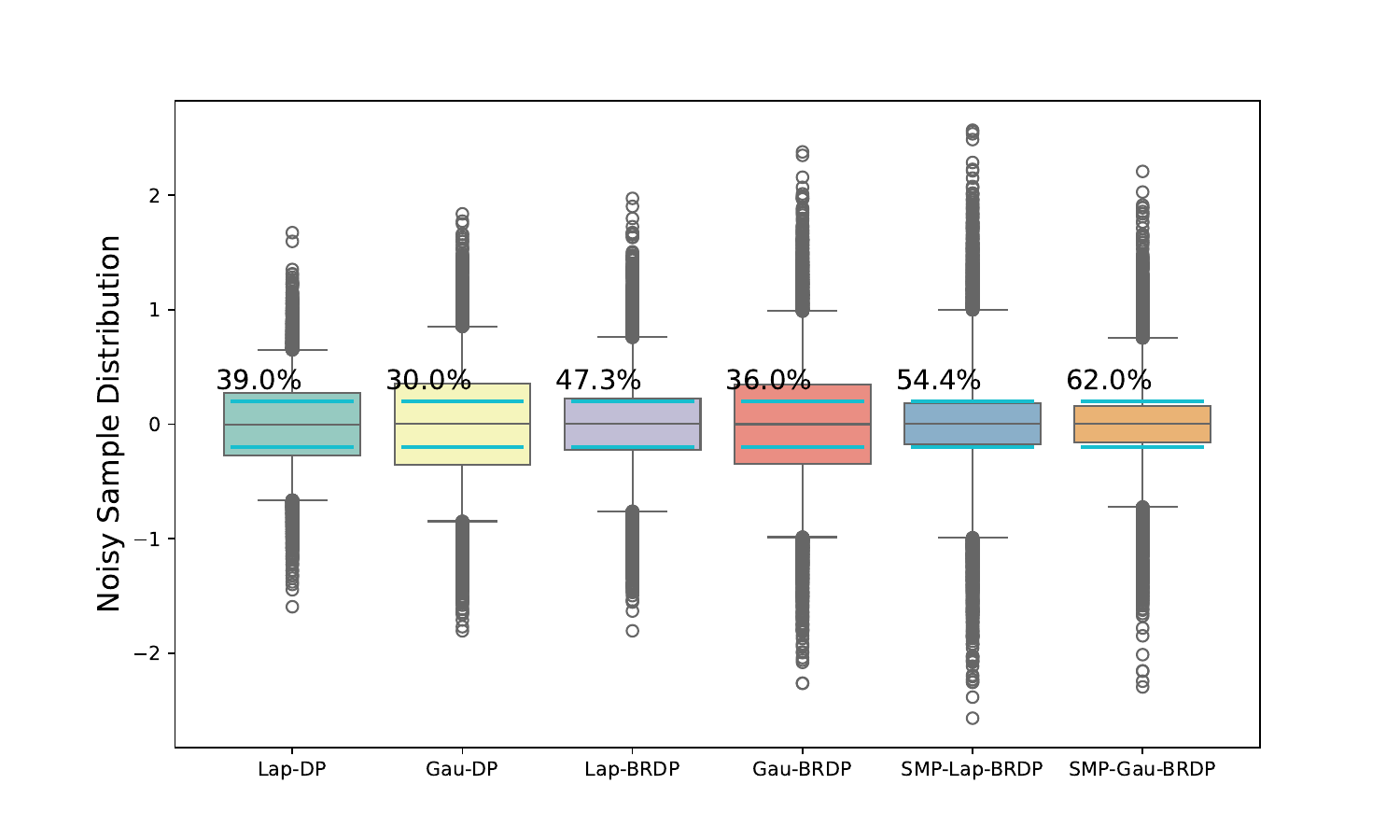}
\label{avg1}}
\subfigure[Averaging query for Gowalla ($\Delta_f = 0.0012,\theta =1$)]
{\includegraphics[width=0.39\textwidth]{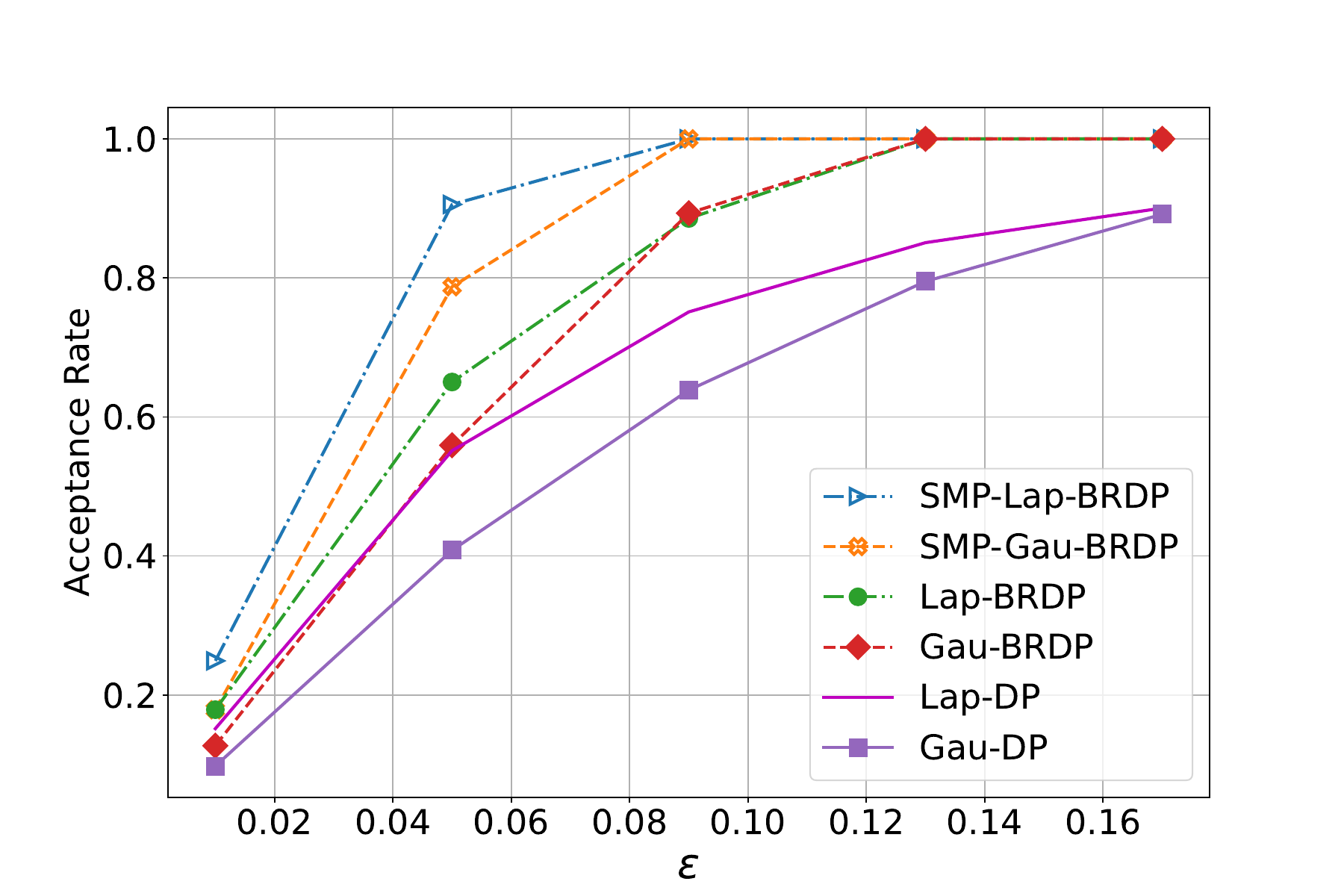}
\label{avg2}}
\subfigure[Counting query for Gowalla ($\Delta_f = 1, \theta=3$)]
{\includegraphics[width=0.43\textwidth]{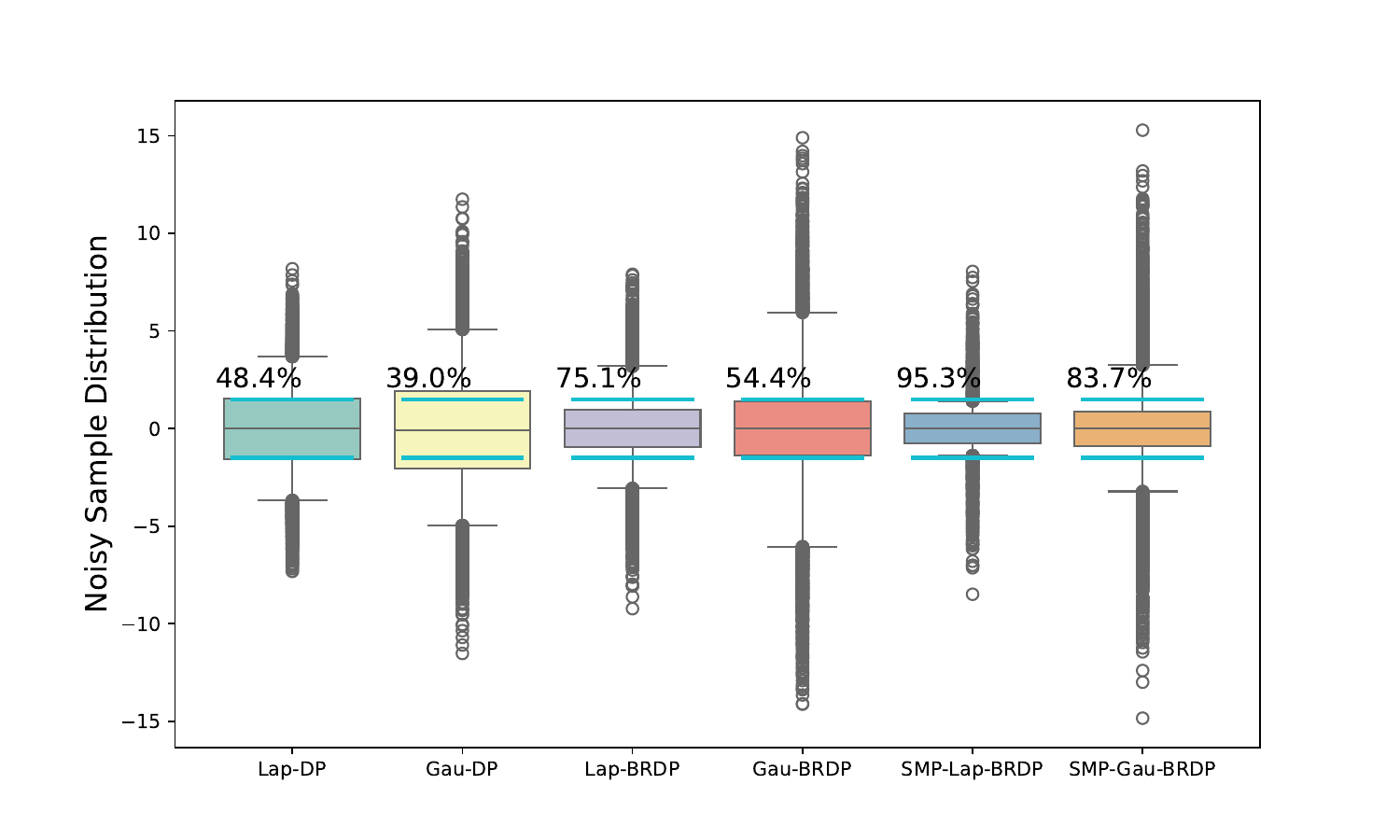}
\label{cnt1}}
\subfigure[Counting query for Gowalla ($\Delta_f = 1, \theta = 3$)]
{\includegraphics[width=0.39\textwidth]{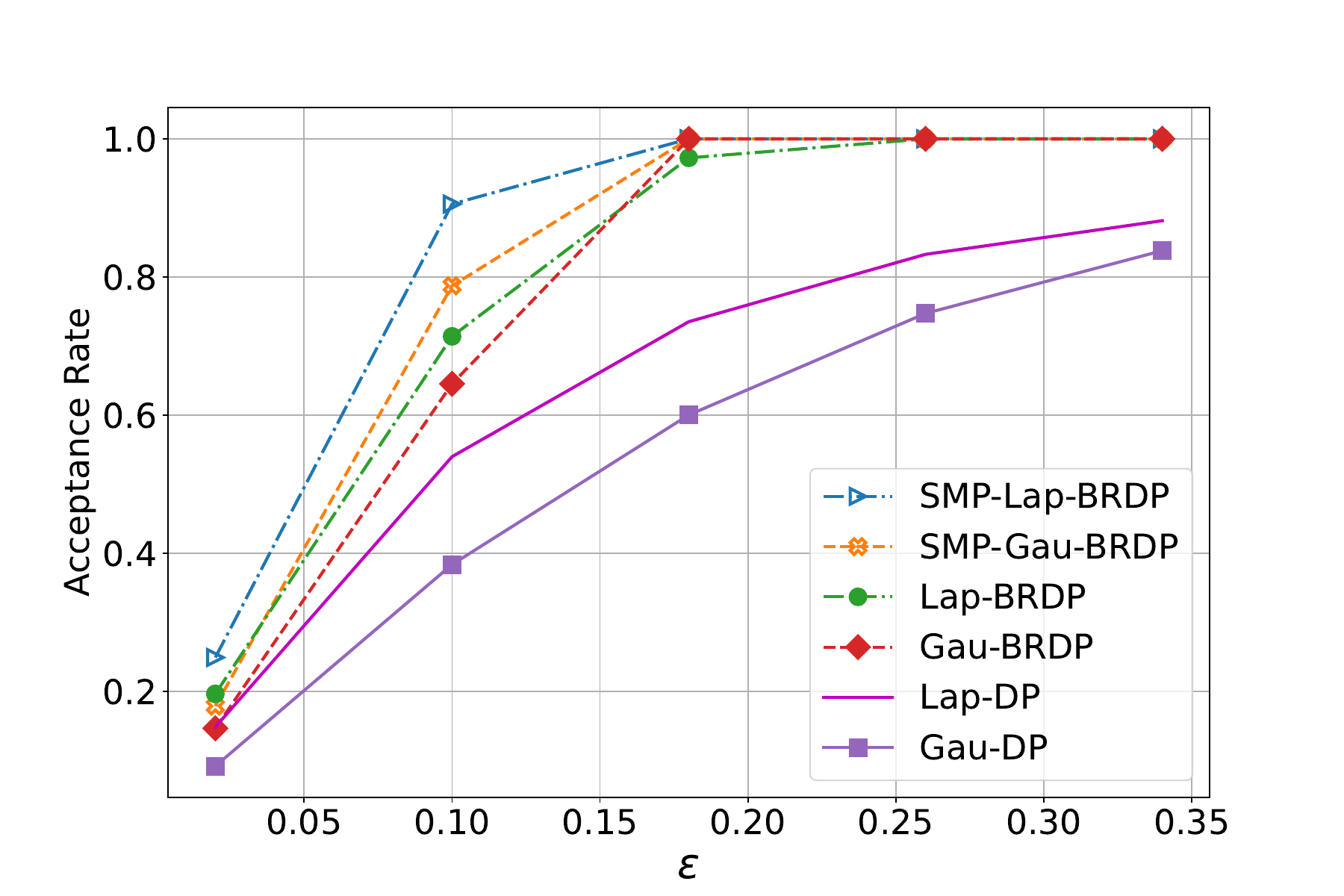}
\label{cnt2}}
\caption{Backup experiments with Gowalla: noisy data generations and utility-privacy tradeoff for different mechanisms under different queries for Gowalla datasets. For summation query $(\epsilon=0.5, \delta=10^-5)$, for averaging query $(\epsilon=0.05, \delta=10^-5)$, and for counting query $(\epsilon=0.1, \delta=10^-5)$.}
\label{fig:exp1}
\end{figure*}

\section{Proof for proposition 7} 

Expressions for different queries and subsampled aggregation are as follows:

\textbf{Summation query: } The summation query, denoted as $Q_{sum}$ can be expressed as:
\begin{equation*}
    Y = Q_{sum}(X) = \sum_{i\in{\mathcal{T}}}x_i,
\end{equation*}
and the aggregation result based on the subsampled dataset becomes:
\begin{equation*}
    \hat{Y}_{sum} = \frac{|X|}{|X_s|}Q_{sum}(X_s) = \frac{1}{p}\sum_{i\in{\mathcal{S}}}x_i.
\end{equation*}
and the noise introduced in the subsampling phase becomes:

\begin{equation*}
    \mathcal{E}_{sum} = \sum_{i\in{\mathcal{T}}}x_i - \frac{1}{p}\sum_{i\in{\mathcal{S}}}x_i.
\end{equation*}
We next derive the mean and variance for $\mathcal{E}_{sum}$, 
    \begin{equation*}
    \begin{aligned}
        E[\mathcal{E}_{sum}] = &E\left[\sum_{i\in\mathcal{T}}x_i\right] - \frac{1}{p}E\left[\sum_{i\in\mathcal{S}}x_i\right]\\
        =&|X|\mu - \frac{|X_s|}{p}\mu
        =0,
    \end{aligned}      
    \end{equation*}
and 
\begin{equation*}
    \mathcal{E}_{sum} = \sum_{i\in{\bar{\mathcal{S}}}} x_i +\left(1-\frac{1}{p}\right)\sum_{i\in\mathcal{S}}x_i,
\end{equation*}
\begin{equation*}
\begin{aligned}
    \text{Var}(\mathcal{E}_{sum}) = &\text{Var}\left[\sum_{i\in{\bar{\mathcal{S}}}} x_i\right] + \left(1-\frac{1}{p}\right)^2\text{Var}\left[\sum_{i\in\mathcal{S}}x_i\right]\\
    =&(1-p)|X|\text{Var}[x_i] + \frac{(1-p)^2}{p}\text{Var}[x_i]\\
    =&\sigma_x^2|X|\left(\frac{1-p}{p}\right).
\end{aligned}
\end{equation*}

\textbf{Average query:} For averaging query, the true aggregation and the estimated aggregation from a subsampled dataset becomes:

\begin{equation*}
    Y = Q_{avg}(X) = \frac{1}{|X|}\sum_{i\in{\mathcal{T}}}x_i,
\end{equation*}
the estimator of $Y$ with $X_s$ becomes
\begin{equation*}
    \hat{Y}_{avg} = \frac{1}{|X_s|}Q_{avg}(X_s) = \frac{1}{|X_s|}\sum_{i\in{\mathcal{S}}}x_i.
\end{equation*}
and the noise introduced in the subsampling phase can be viewed as:

\begin{equation*}
    \mathcal{E}_{avg} = \frac{1}{|X|}\sum_{i\in{\mathcal{T}}}x_i - \frac{1}{|X_s|}\sum_{i\in{\mathcal{S}}}x_i.
\end{equation*}
Mean of $\mathcal{E}_{avg}$:
\begin{equation*}
\begin{aligned}
E[\mathcal{E}_{avg}]=&\frac{1}{|X|}E\left[\sum_{i\in\mathcal{T}}x_i\right] - \frac{1}{|X_s|}E\left[\sum_{i\in\mathcal{S}}x_i\right]\\
& = \mu - \mu  =0.
\end{aligned}
\end{equation*}
For the variance of $\mathcal{E}_{avg}$:

\begin{equation*}
    \mathcal{E}_{avg} = \left(\frac{1}{|X|}-\frac{1}{|X_{\mathcal{S}}|}\right)\sum_{i\in\mathcal{S}}x_i + \frac{1}{|X|}\sum_{i\in\bar{\mathcal{S}}}x_i,
\end{equation*}
then,
\begin{small}

\begin{equation*}
\begin{aligned}
    \text{Var}(\mathcal{E}_{avg}) = &\left(\frac{1}{|X|}-\frac{1}{|X_{\mathcal{S}}|}\right)^2|X_{\mathcal{S}}|\text{Var}\left[x_i\right] + \frac{|X|-|X_{{\mathcal{S}}}|}{|X_s|^2}\text{Var}\left[x_i\right]\\
    = &\frac{\sigma_x^2}{|X|}\left(\frac{1-p}{p}\right).
\end{aligned}
\end{equation*}
    
\end{small}

\textbf{Counting query:} For counting query, denote a subset of the support of each $x_i$ as $\mathcal{C}$, and 
\begin{equation*}
    Y = Q_{cnt}(X) = \sum_{i\in{\mathcal{T}}} \mathbbm{1}_{\{x_i\in \mathcal{C}\} },
\end{equation*}
 and 
 \begin{equation*}
    \hat{Y}_{cnt} = \frac{|X|}{|X_s|}Q_{cnt}(X_s) = \frac{|X|}{|X_s|}\sum_{i\in{\mathcal{S}}} \mathbbm{1}_{\{x_i\in \mathcal{C}\} },
\end{equation*}
Therefore,
\begin{equation*}
    \mathcal{E}_{cnt} = \sum_{i\in{\mathcal{T}}} \mathbbm{1}_{\{x_i\in \mathcal{C}\}} - \frac{|X|}{|X_s|}\sum_{i\in{\mathcal{S}}} \mathbbm{1}_{\{x_i\in \mathcal{C}\} }.
\end{equation*}

Then for mean $\mathcal{E}_{cnt}$:

\begin{equation}
    \begin{aligned}
    E[\mathcal{E}_{cnt}] = |X|p_c - \frac{|X_s|p_c}{p} = 0,
    \end{aligned}
\end{equation}
similar to $\mathcal{E}_{cnt}$,  

\begin{equation*}
\begin{aligned}
    \text{Var}(\mathcal{E}_{cnt}) 
    =&p_c(1-p_c)|X|\left(\frac{1-p}{p}\right).
\end{aligned}
\end{equation*}
Note that binomial distribution under a large number of samples converges to a Gaussian distribution.

\section{Backup Experiment with Gowalla}
This section described backup experimental results with the dataset Gowalla, with similar settings as presented in the experiment section. Results are presented in Fig. \ref{fig:exp1}.

\end{document}